\documentclass[11pt]{article}
\usepackage[utf8]{inputenc}
\usepackage{style}
\usepackage{thm-restate}
\usepackage[margin=1in]{geometry}
\usepackage{tikz}
\usepackage{verbatim}
\usepackage{cancel}
\usepackage{titling}
\usepackage{multirow}
\usepackage{cancel}
\usepackage{quantikz}
\usepackage[colorinlistoftodos,prependcaption,textsize=tiny]{todonotes}
\usepackage{enumitem}
\usepackage{bbm} 
\usepackage{graphicx}

\newcommand{\kxor}{k\textsc{xor}}

\newcommand{\plant}{\mathsf{plant}}
\newcommand{\walk}{\mathsf{walk}}
\newcommand{\Unif}{\mathsf{Unif}}

\newcommand{\Null}{\textsc{Null}}
\newcommand{\Planted}{\textsc{Planted}}

\newcommand{\oddcolors}{\mathrm{oddcolors}}

\newcommand{\Ckr}{C_{\text{\tiny anti}}}

\newcommand{\symmdiff}{\bigtriangleup}

\newcommand{\rev}{{\mathsf{rev}}}

\renewcommand{\tilde}{\widetilde}

\title{A Classical Quadratic Speedup for Planted $k$\textsc{xor}}
\author{Meghal Gupta\thanks{UC Berkeley. \texttt{meghal@berkeley.edu}} \and William He\thanks{Carnegie Mellon University. \texttt{wrhe@cs.cmu.edu}} \and Ryan O'Donnell\thanks{Carnegie Mellon University. \texttt{odonnell@cs.cmu.edu}. Part of this work was done while the author was consulting for Google Quantum AI.} \and Noah G.~Singer\thanks{Carnegie Mellon University. \texttt{ngsinger@cs.cmu.edu}. Supported in part by an NSF Graduate Research Fellowship (Award DGE 2140739).}}
\begin{document}
\allowdisplaybreaks
\maketitle
\begin{abstract}
A recent work of Schmidhuber \emph{et al.}~(QIP, SODA, \& Phys.~Rev.~X~2025)
exhibited a quantum algorithm for the noisy planted $\kxor$ problem running quartically faster than all known classical algorithms.
In this work, we design a new classical algorithm that is quadratically faster than the best previous one, in the case of large constant~$k$.
Thus for such~$k$, the quantum speedup of Schmidhuber~\emph{et al.}\ becomes only quadratic (though it retains a space advantage).
Our algorithm, which also works in the semirandom case, combines tools from sublinear-time algorithms (essentially, the birthday paradox) and polynomial anticoncentration.
\end{abstract}

\section{Introduction}
Recall that a $k$-uniform hypergraph is a pair $\calH = (V,H)$, where $V$ is a set of vertices and
$H$ is a set of $k$-uniform hyperedges (meaning size-$k$ subsets of~$V$).
In this introduction, we fix the notation $n = |V|$ and $m = |H|$.

In this paper, we study the \emph{noisy planted $\kxor$ problem}
(also called \emph{sparse learning parities with noise (LPN)}), including in the ``semirandom'' case.

\begin{definition}\label{def:noisy planted kxor}
    Let $0 \le \rho \le 1$, $\calH = (V,H)$ be a $k$-uniform hypergraph, and $z \in \{\pm1\}^V$.
    The \emph{$\rho$-$\Planted$ $\kxor$ distribution} on $\calH$ and $z$,
    denoted $\plant^\rho_{\calH,z}$, is a distribution over hyperedge signs $\bb = (\bb_e)_{e \in H} \in \{\pm1\}^H$ (a.k.a.\ ``right-hand sides'') given as follows:
    For each $e \in H$ independently, sample $\bm{\eta}_e \in \{\pm1\}$ such that $\Ex[\bm{\eta}_e] = \rho$,
    and then output $\bb_e = \bm{\eta}_e \prod_{v \in e} z_v$.
\end{definition}
When $\rho = 0$, the bits $\bb_e$ are i.i.d.\ uniformly random in $\{\pm1\}$,
while when $\rho = 1$, each bit $\bb_e$ deterministically equals $\prod_{v \in e} z_v$.
We will be interested in the computational problem of distinguishing the cases $\rho = 0$ and $\rho$ a small positive constant,
which we refer to as the $\Null$ and $\Planted$ distributions, respectively.

The starting point for our work is a recent work of Schmidhuber \emph{et al.}~\cite{schmidhuber2025quartic}, which
exhibited a quantum algorithm for the noisy planted $\kxor$ problem running quartically faster than the previous best-known classical algorithms based on the ``Kikuchi method" and power method (see \Cref{sec:kikuchi intro} for more on this):
\begin{theorem}[\cite{schmidhuber2025quartic}]\label{thm:SOKB}
Let $n,k,m \in \N$ ($k \geq 4$ even), $0 < \rho \leq 1$.   Parameterize $m = \delta n^{k/2} \log n$. Let
    \[
        \ell = \max\Bigg\{~ \Bigg\lceil
    \underbrace{(2.01)^{\frac{2}{k-2}}}_{\text{prefactor}} \cdot 
    \underbrace{\left( \rho^2 \tbinom{k}{k/2}\right)^{-\frac{2}{k-2}}}_{\text{``$1/4$''}}  \cdot 
    \underbrace{(1/\delta)^{\frac{2}{k-2}}}_{\text{main}} \Bigg\rceil,~~ k~\Bigg\}.
    \]
    Assuming $\ell\leq O(\sqrt{n})$,
 there exists:
\begin{itemize}
    \item a classical algorithm (the power method) running in $\tilde{O}(n^\ell)$ time, and
    \item a quantum algorithm running in $n^{\ell/4 + k} \cdot \ell^{O(\ell)} \cdot \log^{\ell/2k + O(1)} n$ time,
\end{itemize}
both of which distinguish right-hand sides $\bb$ sampled from either the $\Null$ or $\rho$-$\Planted$ distributions w.p.\ $1 - n^{-\Omega(\ell)}$, given a uniformly random $n$-vertex, $m$-hyperedge, $k$-uniform hypergraph.
\end{theorem}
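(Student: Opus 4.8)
The plan is to route the whole argument through the \emph{level-$\ell$ Kikuchi matrix}. Given $\calH=(V,H)$ and right-hand sides $\bb$, let $A=A(\bb)$ be the symmetric matrix indexed by $\binom V\ell$ with $A_{S,T}=\bb_{S\symmdiff T}$ whenever $S\symmdiff T\in H$ (equivalently $|S\cap T|=\ell-\tfrac k2$ and $S\symmdiff T$ is a hyperedge), and $A_{S,T}=0$ otherwise. Its nonzero pattern is the adjacency matrix of the Kikuchi graph $\mathcal K_\ell(\calH)$, in which $S$ is joined to $S\symmdiff e$ for each $e\in H$ with $|S\cap e|=\tfrac k2$; hence every vertex has the same expected degree $\bar d:=m\binom{\ell}{k/2}\binom{n-\ell}{k/2}/\binom nk=(1+o(1))\,\delta\binom k{k/2}\ell^{k/2}\log n$, and (using $\ell\ge k$) $\mathrm{nnz}(A)=m\binom k{k/2}\binom{n-k}{\ell-k/2}=\bar d\binom n\ell=\tilde O(n^\ell)$. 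First I would dispatch the combinatorial preliminaries: for uniformly random $\calH$, a Chernoff bound plus a union bound over the $\binom n\ell$ sets (using $\ell\le O(\sqrt n)$) show that every vertex of $\mathcal K_\ell(\calH)$ has degree $\bar d(1+o(1))$ except with probability $n^{-\Omega(\ell)}$.

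Next I would prove that $\|A\|$ separates the distributions. In the $\Null$ case $\bb$ is a uniform sign vector, so $A=\sum_{e\in H}\bb_eA_e$ is a Rademacher combination of partial-permutation matrices with $\sum_eA_e^2=\operatorname{diag}(\deg_{\mathcal K_\ell}(S))_S$; a (trace-)moment-method computation of F\"uredi--Koml\'os type with $\Theta(\log\binom n\ell)$ moments, in which the non-regularity of $\mathcal K_\ell$ must be handled, gives $\|A\|\le(\sqrt2+o(1))\sqrt{\bar d\log\binom n\ell}$ except with probability $n^{-\Omega(\ell)}$. In the $\Planted$ case, write $\bb_e=\bm\eta_e\prod_{v\in e}z_v$ and conjugate by the $\pm1$ diagonal matrix $D=\operatorname{diag}\bigl((\prod_{v\in S}z_v)_S\bigr)$; using $\prod_{v\in S}z_v\cdot\prod_{v\in T}z_v=\prod_{v\in S\symmdiff T}z_v$ one gets $(DAD)_{S,T}=\bm\eta_{S\symmdiff T}$ on edges of $\mathcal K_\ell$, so $DAD=\rho B+E$ with $B$ the $0/1$ adjacency matrix of $\mathcal K_\ell$ and $E$ mean-zero of the same masked-Rademacher type. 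Evaluating the quadratic form at the all-ones vector, $\|A\|=\|DAD\|\ge\tfrac1{\binom n\ell}\mathbf 1^\top(\rho B+E)\mathbf 1=\rho\bar d+\tfrac1{\binom n\ell}\mathbf 1^\top E\mathbf 1$, where $\mathbf 1^\top E\mathbf 1=\tfrac{\binom n\ell\bar d}{m}\sum_{e\in H}(\bm\eta_e-\rho)$ has magnitude $O\bigl(\binom n\ell\bar d\sqrt{\log\binom n\ell/m}\bigr)=o\bigl(\binom n\ell\,\rho\bar d\bigr)$ except with probability $n^{-\Omega(\ell)}$ by a one-dimensional Hoeffding bound; hence $\|A\|\ge(1-o(1))\rho\bar d$. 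Finally, $(1-o(1))\rho\bar d>(\sqrt2+o(1))\sqrt{\bar d\log\binom n\ell}$ is equivalent to $\rho^2\bar d\gtrsim2\log\binom n\ell$, i.e.\ to $\ell$ exceeding $2^{2/(k-2)}\bigl(\rho^2\binom k{k/2}\bigr)^{-2/(k-2)}\delta^{-2/(k-2)}$; the stated $\ell$, with prefactor $(2.01)^{2/(k-2)}$, is calibrated so that the $\Planted$ lower bound beats the $\Null$ upper bound by a constant factor bounded away from $1$, the ``$.01$'' absorbing the $o(1)$'s, the slack between $\log\binom n\ell$ and $\ell\log n$, and the ceiling.

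Given the spectral separation, the classical algorithm is the power method: from a Gaussian start, run $T=O(\log\binom n\ell)$ iterations $v\mapsto Av$ and threshold the ratio $\|A^Tv\|^2/\|A^{T-1}v\|^2$ against a value between the two bounds. Because $\mathcal K_\ell$ has a constant-sized spectral gap for random $\calH$ and $\|E\|<\rho\bar d$, the matrix $DAD=\rho B+E$ has its top eigenvalue separated from the rest by a constant factor, so $T=O(\log\binom n\ell)$ iterations make the ratio $(1-o(1))(\rho\bar d)^2$ in the $\Planted$ case while it never exceeds $\|A\|^2$; each iteration costs $O(\mathrm{nnz}(A))=\tilde O(n^\ell)$, for $\tilde O(n^\ell)$ total and failure probability $n^{-\Omega(\ell)}$. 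The quantum algorithm replaces the power method by quantum phase estimation on a Szegedy-type quantum walk built from a sparse-access oracle for $A/\bar d$ — whose eigenphases $\arccos(\lambda_i/\bar d)$ put the planted eigenvalue near $\arccos\rho$ and the bulk near $\pi/2$ — together with amplitude amplification to detect that eigenphase starting from a guiding state (e.g.\ the maximally entangled state) having overlap only $\approx\binom n\ell^{-1/2}$ with the relevant walk eigenvector. Combining the square-root saving from the quantum walk with the square-root saving from amplitude amplification produces the quartic improvement, with the leading $n^{\ell/4}$ term, the additive $n^k$, and the $\ell^{O(\ell)}\cdot\log^{\ell/2k+O(1)}n$ overhead coming from walk-step/oracle implementation, phase-estimation precision, and the combinatorics of the Kikuchi walks involved.

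I expect the quantum part to be the real obstacle: extracting a genuinely \emph{quartic} rather than merely quadratic speedup requires a two-level quantum saving and a delicate accounting of how the walk's phase gap, the guiding state's overlap, and the required estimation precision interact, all while preserving the $n^{-\Omega(\ell)}$ failure probability and the $\ell^{O(\ell)}\log^{O(\ell/k)}n$ overhead. The classical side is by now fairly standard Kikuchi-method technology; the one genuinely delicate step there is sharpening the $\Null$-case spectral-norm bound to the constant $\sqrt2$ so as to license the prefactor $2.01$ — a cruder bound would merely worsen the constant inside the prefactor, not the shape of the statement.
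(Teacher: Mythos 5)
First, a point of reference: the paper does not prove \Cref{thm:SOKB} at all --- it is imported verbatim from \cite{schmidhuber2025quartic}, and the present paper only describes the mechanism at a high level in \Cref{sec:kikuchi intro} (a ``Kikuchi eigenvalue theorem'' exploited via the power method classically, and a Guided Sparse Hamiltonian algorithm quantumly). Measured against that description, your classical half is essentially the intended argument and is sound: the matrix-Rademacher/trace bound $\|A\|\le(\sqrt2+o(1))\sqrt{\bar d\log N}$ in the $\Null$ case, the quadratic-form witness $\mathbf 1^\top DAD\,\mathbf 1/\binom{n}{\ell}\ge(1-o(1))\rho\bar d$ in the $\Planted$ case, and the resulting threshold $\rho^2\bar d\gtrsim 2\log N$ reproduce exactly the requirement \eqref{eqn:req} and the shape of the $(2.01)^{2/(k-2)}$ prefactor. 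One correction there: you should not invoke a ``constant-sized spectral gap'' for $K_\ell(\calH)$ --- the paper explicitly notes these Kikuchi graphs are non-expanding --- but you also do not need it, since the gap-free power-method analysis ($\|A^Tv\|^{1/T}\ge \|A\|\cdot N^{-1/(2T)}$ from a random start, w.h.p.) already estimates $\|A\|$ to a $1+o(1)$ factor after $T=O(\log N)$ iterations, which is all the thresholding step requires.

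The genuine gap is in the quantum half, and it is exactly where you yourself flagged doubt. As written, your accounting yields only a \emph{quadratic} speedup: a guiding state with amplitude (overlap) $\approx N^{-1/2}$ on the relevant eigenvector costs $\Theta(N^{1/2})$ rounds of amplitude amplification, and a Szegedy walk does not supply a second factor of $\sqrt N$ --- its quadratic saving is in the phase/spectral gap, which here is only polylogarithmic in $N$. The actual engine of the quartic speedup in \cite{schmidhuber2025quartic} is a quantitative overlap lemma that your plan does not supply: in the $\Planted$ case, an efficiently preparable, instance-dependent guiding state (built with $\tilde O(n^{k/2})$ classical preprocessing, not the maximally entangled state) has \emph{squared} overlap $\tilde\Omega(N^{-1/2})$ --- rather than the generic $N^{-1}$ --- with the high-eigenvalue subspace of the Kikuchi matrix, i.e.\ amplitude $\tilde\Omega(N^{-1/4})$, so that amplitude estimation distinguishes this from the $\Null$ case using $\tilde O(N^{1/4})$ invocations of a sparse-Hamiltonian-simulation step. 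Without that overlap bound, the $n^{\ell/4+k}\cdot\ell^{O(\ell)}\cdot\log^{\ell/2k+O(1)}n$ running time cannot be recovered, so the quantum bullet of the theorem remains unproven under your outline.
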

In this paper, we improve quadratically upon the best \emph{classical} algorithm, in the case of large constant~$k$.
This means that the quartic speedup over classical obtained by \cite{schmidhuber2025quartic} is merely quadratic in this regime.
Our main theorem, proven in \Cref{sec:distinguisher}, is the following:

\begin{theorem}[Corollary of \Cref{thm:main formal}]\label{thm:main informal}
    There exists a universal constant $C<\infty$ such that the following holds.
    Let $n,k,m\in \N$ ($k \geq 4$ even),  $0 < \rho \leq 1$. Parameterize $m = \delta n^{k/2} \log n$, and also write $\eps = \frac{10 \log(1/\rho)}{\log k}$. Let
    \[
        \ell = \max\Bigg\{~ \Bigg\lceil
    \underbrace{\pbra{\frac{C \log (1/\rho)}{\log k}}^{\frac{2}{k-2}}}_{\text{prefactor}} \cdot 
    \underbrace{\left( \rho^2 \tbinom{k}{k/2}\right)^{-\frac{2}{k-2}}}_{\text{``$1/4$''}}  \cdot 
    \underbrace{(1/\delta)^{\frac{2}{k-2}}}_{\text{main}} \Bigg\rceil , ~~k~\Bigg\}.
    \]
    Assuming $\ell\leq O(\sqrt{n})$, there is a classical algorithm that runs in $\tilde{O}\parens*{n^{(1/2 + \epsilon) \ell + k}}$ time\footnote{
    Here and elsewhere, $\tilde{O}(\cdot)$ hides polynomials in $\log m$ (note $\ell, n \le m$) and $\log (1/\rho)$.
    }
    and distinguishes right-hand sides $\bb$ sampled from either the $\Null$ or $\rho$-$\Planted$ distributions w.p.\ $1-n^{-\Omega(\eps \ell)}$,
    given a \emph{worst-case} $n$-vertex, $m$-hyperedge, $k$-uniform hypergraph.
\end{theorem}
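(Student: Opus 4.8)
The plan is to run the Kikuchi-method machinery behind \Cref{thm:SOKB}, but to replace its $\tilde O(n^\ell)$ spectral/power-method step with a square-root--time statistic computed by a meet-in-the-middle (``birthday'') enumeration, and to pay for this by raising the Kikuchi level from the prefactor $(2.01)^{2/(k-2)}$ of \Cref{thm:SOKB} to $(C\log(1/\rho)/\log k)^{2/(k-2)}$ --- a blowup that is subconstant in $k$, so that for large constant $k$ the dominant term of the running time is essentially $n^{\ell/2}$.

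\emph{Step 1: reduce to a Kikuchi statistic.} I would form the level-$\ell$ Kikuchi matrix $A$ on $\binom{V}{\ell}$, with $A_{S,T}=\bb_e$ whenever $S\symmdiff T=e\in H$ (and $0$ otherwise). Using $\bb_e=\bm{\eta}_e\prod_{v\in e}z_v$ and the identity $\prod_{v\in S}z_v\cdot\prod_{v\in T}z_v=\prod_{v\in S\symmdiff T}z_v$, in the $\Planted$ case $A=D_y\widetilde A D_y$, where $D_y$ is the $\pm1$ diagonal with $y_S=\prod_{v\in S}z_v$ and $\widetilde A$ is the \emph{$z$-free} matrix $\widetilde A_{S,T}=\bm{\eta}_e$ on the Kikuchi support; since $D_y$ is an involution, $A$ and $\widetilde A$ are conjugate, so the planted signal is just a $\rho$-biased entry mean on the Kikuchi support and is insensitive to $z$. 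Equivalently, for an even collection of hyperedges ($e_1\symmdiff\cdots\symmdiff e_t=\emptyset$) the product $\prod_i\bb_{e_i}=\prod_i\bm{\eta}_{e_i}$ has expectation $\rho^{\#\{i:\,m_i\text{ odd}\}}$; so the target is a signed count of short even collections realized as walks/cycles in the Kikuchi graph, with mean $0$ under $\Null$ and mean $\approx\rho^{\Theta(\ell)}\cdot(\#\text{structures})$ under $\Planted$. Because the hypergraph is worst-case, I would first regularize the Kikuchi graph (prune/reweight to kill localized dense spots), as in the semirandom-CSP Kikuchi literature, so that the null-side walk-count and spectral-norm bounds needed below hold for the regularized instance; this is where ``worst-case'' costs some of the extra level, and it is the part I expect to require the most care, since one must control not merely $\|A\|$ but the whole family of walk counts used in Step~2.

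\emph{Step 2: the birthday-paradox evaluation.} The target count is a polynomial of degree $O(\ell)$ in the i.i.d.\ signs $\bb$, of the form $\langle u,A^{t}u\rangle=\|A^{t/2}u\|^2$ for $t=\Theta(\ell)$ and a suitable (indicator or structured) vector $u$ ($A$ is symmetric, $t$ even). With the level tuned so that the relevant fan-out of length-$t/2$ Kikuchi walks has size $n^{\ell/2+o(\ell)}$ --- the birthday threshold --- I would compute the statistic \emph{exactly} by enumerating those $\approx n^{\ell/2}$ partial walks out of $\mathrm{supp}(u)$, storing their endpoints in a hash table, and reading off collisions; this is the only superpolynomial-in-$\ell$ cost and it is $\tilde O(n^{\ell/2})$ up to the $n^{\eps\ell}$ slack and the additive $+k$ of input handling (enumerating incident hyperedges, hashing, and the regularization pass). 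Then one distinguishes by thresholding: under $\Null$ the statistic is small by a standard hypercontractive tail bound for a degree-$O(\ell)$ polynomial in i.i.d.\ signs; under $\Planted$ it must be \emph{large} with probability $1-n^{-\Omega(\eps\ell)}$, and here I would invoke polynomial anticoncentration --- the planted value is a degree-$O(\ell)$ polynomial in the noise bits $\bm{\eta}$ (and, post-regularization, in the instance) with expectation $\approx\rho^{\Theta(\ell)}\cdot(\#\text{structures})$, and anticoncentration forbids it from collapsing below the null scale except on an $n^{-\Omega(\eps\ell)}$-fraction.

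\emph{Step 3: parameter bookkeeping, and the hard part.} The meet-in-the-middle evaluation effectively loses a $\mathrm{poly}(k)$-type factor in the per-structure signal-to-noise ratio relative to the full power method, so the $\Null$/$\Planted$ gap opens only once the average Kikuchi degree $\bar d$ exceeds a $\mathrm{poly}(k)/\rho^2$ threshold (rather than the $O(1)/\rho^2$ threshold of \Cref{thm:SOKB}); since $\bar d\propto\ell^{k/2}$, this pushes $\ell$ up precisely by the factor $(\Theta(\log(1/\rho)/\log k))^{2/(k-2)}$ in the prefactor, which tends to $1$ as $k\to\infty$. The residual $n^{\eps\ell}$ in the exponent, with $\eps=\frac{10\log(1/\rho)}{\log k}$ as in the statement, is the extra slack above the exact birthday threshold that makes the gap $n^{\Omega(\eps\ell)}$-wide --- enough to survive both the null-side concentration and the planted-side anticoncentration with failure probability $n^{-\Omega(\eps\ell)}$ --- and feeding this through \Cref{thm:main formal} yields the stated bound. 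I expect the main obstacle, beyond the worst-case regularization of Step~1, to be the quantitative heart of Step~2: proving that the square-rooted, meet-in-the-middle statistic provably inherits the planted signal with the right dependence on $k$ and $\rho$, i.e.\ that the birthday shortcut does not destroy the very structure it is counting.
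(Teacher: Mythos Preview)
Your high-level picture---use the birthday paradox to get a square-root-time statistic on the Kikuchi graph, then separate $\Null$ from $\Planted$ via polynomial anticoncentration---matches the paper. But the implementation you sketch diverges from the paper's in two places, and in one of them there is a real gap.

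\textbf{Where the paper goes differently.} The paper does \emph{not} compute $\langle u, A^t u\rangle$ or any walk-count statistic. Instead it uses the birthday trick purely as a \emph{search} primitive: sample $O(\sqrt N)$ i.i.d.\ random walks of length $T=\Theta(\eps\log N/\log(1/\rho))$ from a common start; two of them collide at a common endpoint and concatenate to a closed walk, which (via an encoding argument about ``good'' walks that hit few low-degree vertices) is shown to be \emph{nontrivial} and to have $\oddcolors$ outside any prescribed small set. Iterating, one harvests $\approx N^\eps$ \emph{distinct} nonempty even covers of size $\le 2T$. The distinguisher is then a \emph{separate} polynomial $P_{\calC'}(x)=\sum_{C\in\calC'}\prod_{e\in C}x_e$ built from those specific covers, not the raw walk-count. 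Two payoffs: (i) because the monomials are distinct by construction, the Null variance is just $|\calC'|$ and plain Chebyshev already gives the Null side (no hypercontractivity needed); (ii) the worst-case hypergraph is handled by the ``goodness'' encoding argument inside the walk-finding, not by regularizing the Kikuchi graph---so the statistic never sees an irregular instance at all.

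\textbf{The genuine gap.} Your Step~2 says ``invoke polynomial anticoncentration'' for the $\Planted$ lower bound with failure $n^{-\Omega(\eps\ell)}$. This is exactly the hard part, and off-the-shelf tools do not deliver it: for a degree-$d$ polynomial in $\pm1$ bits with $d=\Theta(T)=\Theta(\log N)$, Carbery--Wright-type bounds give only $\Pr[|P|<\eps\sigma]\lesssim d\,\eps^{1/d}$, which is useless at this degree, and there are genuine $\pm1$ examples (even linear ones) that hit $0$ with constant probability. The paper's fix is not generic anticoncentration but a two-step trick: (a) multiply each $b_e$ by an independent $\bm\xi_e\sim\Unif[0,1]$, so the inputs have bounded density and clean linear anticoncentration (\`a la Bobkov/Miroshnikov) applies; and (b) pass to a random \emph{block-linear} restriction $P_{\bm\pi}$ (a random equipartition of $[m]$ into $2T$ blocks, keeping only shattered monomials), which preserves an $\exp(-O(T))$ fraction of the covers and lets one peel off one block at a time, turning the degree-$T$ bound into an exponentially small $0.9^{\Theta(T)}=n^{-\Omega(\eps\ell)}$ failure via a martingale/iterated-linear argument. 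Without (a)+(b), your Step~2 does not close. Similarly, your Null-side hypercontractive tail on a degree-$\Theta(\ell)$ polynomial would need an exponentially large deviation ratio to reach $n^{-\Omega(\eps\ell)}$; the paper sidesteps this entirely by engineering distinct monomials so that Chebyshev suffices.
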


The quantitative requirements on $\ell$ in \Cref{thm:SOKB,thm:main informal} are essentially formulations of the requirements needed for the $\ell$th level in the ``Kikuchi hierarchy" to yield a difference in the $\Null$ and $\Planted$ cases. For more explanation on these requirements see \Cref{sec:kikuchi intro}.

\paragraph{Comparisons.} Let us give some points of comparison between our \Cref{thm:main informal} and \Cref{thm:SOKB} from~\cite{schmidhuber2025quartic}.
\begin{itemize}
    \item Once $k \geq c_0 \log\log (1/\rho)$ (where $c_0$ is a universal constant), the ``prefactor'' in both theorems is between~$1$ and~$1.001$; i.e., the chosen $\ell$ value differs by only a factor of~$1.001$. As a consequence, the running time of the quantum algorithm is of the shape $\widetilde{O}(n^{.25 \ell + k})$, whereas our classical algorithm runs in time $\widetilde{O}(n^{.501 \ell + k})$.
    \item The competitiveness of our classical algorithm for small constant~$k$ depends essentially on the constant factor~$C$ achieved in our bounds. We did not try especially hard to optimize it, though at the same time we doubt our methods could reach, e.g., $C = 8.04$ for $k = 4$ (as would be needed to match \Cref{thm:SOKB}). As such, for small constant~$k$ (e.g., $k = 4$), the running time of the quantum algorithm still shows a quartic speedup over the best known classical algorithm.
    \item The work of~\cite{schmidhuber2025quartic} has the feature that any improved ``Kikuchi eigenvalue theorem'' (see \Cref{sec:kikuchi intro}) leads to a corresponding improved run-time.  As long as these improvements are confined to reducing the constant factor $2.01$ (to something at least $\exp(-o(k))$), the conclusion of our paper still holds: the quantum advantage is only quadratic, provided $k$ is a large enough constant.  On the other hand, eliminating the log factor --- i.e., showing a Kikuchi eigenvalue theorem merely with $\bar{d} \geq \frac{C}{\rho^2}$ (cf.~\Cref{eqn:req}) --- could restore the quartic advantage.  This is widely regarded as a major open problem, though, and it is plausible that if it were resolved, our method (based on finding short cycles in the Kikuchi graph) would also allow removal of the log factor.
    \item The work~\cite{schmidhuber2025quartic} has a space complexity advantage; it uses $\widetilde{O}(n^{k/2})$ classical space to prepare a quantum circuit, and that circuit uses $O(\ell \log n)$ qubits.  Our algorithm uses space proportional to its time.
    \item The quantum algorithm in  \cite{schmidhuber2025quartic} is only for the fully random case, meaning that the ``left-hand side'' hypergraph~$\calH$ is chosen randomly.  By contrast, our algorithm also works in the more challenging \emph{semirandom} model, where the hypergraph $\calH$ can be worst-case. This semirandom model, dating back to \cite{feige2007refuting}, was first handled by Kikuchi methods in~\cite{guruswami2022smoothed}, and has played an important role in recent developments in coding theory, in particular bounds on rates and distance of locally correctable and locally decodable codes~\cite{alrabiah2023near,kothari2024exponential}.
    It is possible that the quantum speedup from \cite{schmidhuber2025quartic} could be extended to work in the semirandom case, though we would not speculate on what the resulting prefactor constants might be.
    \item Finally, we emphasize that our faster classical algorithm is \emph{not} achieved by dequantizing the Guided Sparse Hamiltonian algorithm used in \cite{schmidhuber2025quartic}; rather, it comes from designing a classical algorithm related to cycles rather than eigenvalues. Thus obtaining quartic (or better) speedups for other problems via Guided Sparse Hamiltonian remains a viable possibility. 
\end{itemize}

\subsection{Kikuchi Graphs and Previous Work}\label{sec:kikuchi intro}
The problem of distinguishing the $\Null$ and $\Planted$ distributions has been studied in numerous works~\cite{wein2019kikuchi,hastings2020classical,schmidhuber2025quartic},
and hardness of similar problems forms the basis of security for cryptographic protocols (see e.g.~\cite{applebaum2010public,dao2023multi,dao2024lossy} among numerous other works).
The related problem of refuting the $\Null$ distribution was studied in e.g.~\cite{schoenebeck2008linear,barak2016noisy,allen2015refute,abascal2021strongly,guruswami2022smoothed,hsieh2023simple}.
Most typically the hypergraph $\calH$ and the ``planted assignment'' vector $z$ are also chosen uniformly at random.
In our work, as in \cite{abascal2021strongly,guruswami2022smoothed,hsieh2023simple}, we consider the more challenging \emph{semirandom} case in which only the right-hand sides are random.

Many algorithms for analyzing $\kxor$ problems $(\calH,b)$, including the one we will present,
involve constructing and analyzing the so-called \emph{level-$\ell$ Kikuchi graph} $K_\ell(\calH)$ for $\calH$. This graph/matrix, introduced by Wein \emph{et al.}~\cite{wein2019kikuchi} (and implicitly by Hastings~\cite{hastings2020classical}),
can be seen as a reduction from $\kxor$ to $\textsc{2xor}$,
allowing techniques suited to $\textsc{2xor}$ (such as spectral methods) to be applied.

\begin{definition}[Kikuchi graph]
Let $k$ be even,
$\calH = (V, H)$ a $k$-uniform hypergraph and $\ell \geq \frac{k}2$ an integer. (We usually imagine $\ell \gg k$.)
The \emph{level-$\ell$ Kikuchi graph} of $\calH$ is the graph $K_\ell(\calH)$ whose vertices are size-$\ell$ subsets of $V$
and which has an edge between $w_1, w_2$ iff $w_1 \symmdiff w_2 \in H$
(where $\symmdiff$ denotes the symmetric difference of sets).

More generally, given hyperedge-signs $(b_e)_{e \in H}$, one can also define the \emph{level-$\ell$ Kikuchi matrix} $M_\ell(\calH,b)$.
This is a signed version of the adjacency matrix of $K_\ell(\calH)$,
where the sign corresponding to an edge $\{w_1,w_2\}$ in $K_\ell(\calH)$ is $b_{w_1 \symmdiff w_2}$.
\end{definition}
\begin{remark}\label{rem:kikuchi params}
    The Kikuchi graph has $N_\ell \coloneqq \binom{n}{\ell} \leq (en/\ell)^\ell = O(n)^\ell$ vertices, and its average degree is
    $\bar{d}_\ell \coloneqq \frac{\binom{n-k}{\ell-k/2}}{\binom{n}\ell} \cdot \binom{k}{k/2} \cdot m$.
    The growth of this average degree in terms of~$\ell$ is quite important; provided $\ell = o(n/k)$, standard 
    estimates show 
    \[
        \bar{d}_\ell \sim \ell^{k/2} \cdot \Delta, \qquad \text{where } \Delta \coloneqq \tbinom{k}{k/2} \cdot \frac{m}{n^{k/2}}.
    \]
\end{remark}
With these definitions we can now go into some more detail about the quantitative requirements in \Cref{thm:SOKB,thm:main informal}.

The previous fastest (classical) algorithms for distinguishing  between $\Null$ and $\Planted$ distributions
were based on estimating the spectral norm of the Kikuchi matrix $M_\ell(\calH,b)$~\cite{wein2019kikuchi}.
The essential ingredient is what one might call a \emph{Kikuchi eigenvalue theorem}: namely, a theorem showing that once $\ell$ is sufficiently large as a function of $m, n, k, \rho$, there is (w.h.p.)\ a noticeable difference between the spectral norm of $M_\ell(\calH,b)$ in the $\Null$ and in the $\Planted$ cases.  Once such a theorem is proven for a certain value of~$\ell$, one can algorithmically estimate the spectral norm of $M_\ell(\calH,b)$ using the power method in $\wt{O}(n^\ell)$ (when the Kikuchi graph is sparse) time, thereby solving the distinguishing task. 
\Cref{thm:SOKB} from~\cite{schmidhuber2025quartic} is essentially a way to quartically speed up the spectral norm estimation using a quantum algorithm.

The best Kikuchi eigenvalue theorems previously proven (assuming $k$~even) essentially require
\begin{equation}
    \label{eqn:req}
    \bar{d}_\ell \geq \tfrac{C}{\rho^2} \cdot \log N_\ell
\end{equation}
for some universal constant~$C$.  A little painfully, this means that the exact exponent in the running time $\wt{O}(n^\ell)$ depends on the constant~$C$ established in the Kikuchi eigenvalue theorem.  On the other hand, asymptotically solving the above inequality for~$\ell$ yields:
\[
    \text{parameterizing $m = \delta n^{k/2} \log n$,} \qquad \eqref{eqn:req} \iff \ell \gtrsim 
    \underbrace{C^{\frac{2}{k-2}}}_{\text{prefactor}} \cdot 
    \underbrace{\left( \rho^2 \tbinom{k}{k/2}\right)^{-\frac{2}{k-2}}}_{\text{``$1/4$''}}  \cdot 
    \underbrace{(1/\delta)^{\frac{2}{k-2}}}_{\text{main}}.
\]
Here the ``main'' factor $(1/\delta)^{\frac{2}{k-2}}$ is the essential contributor to the value of~$\ell$.  The factor labeled ``$1/4$'' is a constant that depends on~$k$ (and~$\rho$), but it rapidly approaches the fixed value $1/4$ for large constant~$k$ (and constant~$\rho$).  Finally, in the  ``prefactor'', we see that obtaining the best possible constant factor~$C$ is arguably not so important after all, as this prefactor is exponentially suppressed towards~$1$ as $k$ becomes large.

While our algorithm does not rely on such a Kikuchi eigenvalue theorem, and thus cannot use the same requirement on $\ell$ as the one written above, we show that there exists a different $C$ such that \Cref{eqn:req} suffices for a classical algorithm running in time $\sim n^{\ell/2}$. As we just noted, this prefactor is suppressed with large $k$. Therefore, the requirement on $\ell$ is essentially the same as that required by \cite{schmidhuber2025quartic}, and our speedup is quadratic.

\subsection{Other Related Works}

\paragraph{Rainbow cycles.}
A cycle in an edge-colored graph is \emph{rainbow} if every color appears at most once.
Rainbow cycles are necessarily nontrivial (cf. \Cref{def:walk}).
In extremal combinatorics, the tradeoff between edge density and rainbow cycle length
has been studied in a number of works, e.g.~\cite{keevash2007rainbow,das2013rainbow,cada2016rainbow,chen2022rainbow,kim2024rainbow,janzer2024turan,alon2025essential}.
The hypergraph Moore bound was conjectured by Feige~\cite{feige2008small},
following work of Feige, Kim, and Ofek~\cite{feige2008refute} which connected even covers and refutations of random $\kxor$ instances.
The bound was then proved by Guruswami, Kothari, and Manohar~\cite{guruswami2022smoothed} and sharpened by Hsieh, Kothari, and Mohanty~\cite{hsieh2023simple};
both proofs still used spectral methods.
Most recently, Hsieh, Kothari, Mohanty, Correia, and Sudakov~\cite{hsieh2025even}
connected these two lines of work, giving a combinatorial proof of
the hypergraph Moore bound using rainbow cycle theorems.
They, however, must use linear-time algorithms in order to find their rainbow cycles.
Our \Cref{cor:moore bound} recovers their existential result (though with worse constants).

\paragraph{Related problems for $\kxor$.}
Numerous works have studied variants of the $\Null$-vs.-$\Planted$ distinguishing problem we study in this paper.
Firstly, algorithms for \emph{refuting} the $\Null$ distribution for $\kxor$
--- that is, certifying that randomly drawn instances lack assignments satisfying many constraints ---
were given in e.g. ~\cite{coja2007strong,allen2015refute,barak2016noisy,raghavendra2017strongly,abascal2021strongly,guruswami2022smoothed,hsieh2023simple} (see \cite{guruswami2022smoothed} for a full overview).
These algorithms are typically based on \emph{Sum-Of-Squares (SoS)} or other \emph{semidefinite programming (SDP)} methods. 
Lower bounds in these algorithmic models were given in e.g.~\cite{schoenebeck2008linear,kothari2017sum}.
These algorithms and lower bounds all rely on constraint density-runtime tradeoffs,
and the spectral algorithms based on Kikuchi matrices~\cite{wein2019kikuchi,hastings2020classical}
(essentially \Cref{thm:SOKB}) and the algorithm we present in this paper involve similar tradeoffs.
The problem of (approximately) outputting the planted assignment from the $\Planted$ distribution has also been studied,
e.g. in~\cite{feldman2015subsampled,guruswami2023efficient}.

Even covers appear in the algorithmic work~\cite{guruswami2022smoothed} as
efficiently \emph{verifiable} witnesses of unsatisfiability for random $\kxor$ instances.
Polynomial concentration also appears in the analysis of their algorithm,
although the inequalities they use assume more structure on the polynomials.

\paragraph{Sublinear algorithms.}
Sublinear-time testing algorithms for \textsc{Max-Cut}, which is a special case of \textsc{Max-2xor},
have been studied in numerous previous works. Some of these works, e.g.~\cite{goldreich1998sublinear,kaufman2004tight,peng2023sublinear,jha2024sublinear},
use a similar ``meet-in-the-middle'' approach for finding cycles that we take in our paper.
However, the \textsc{Max-Cut} setting is quite different from our own:
There, the goal is typically to find cycles of odd length,
whereas our goal is to find ``nontrivial'' cycles.
Further, the aforementioned works typically need strong assumptions on the input graph, such as bounded degree or even expansion.
In contrast, the Kikuchi graphs which we study are non-expanding.

\medskip

\section{Technical Overview}
Recall that $N_\ell = \binom{n}\ell$ is the number of vertices in the level-$\ell$ Kikuchi graph $K_\ell(\calH)$. Since our goal is to run in around $n^{\ell/2} \approx \sqrt{N_\ell}$ time, we must turn to the realm of \emph{sublinear algorithms}. 
Our algorithm cannot afford  even to write down the entire Kikuchi graph,
much less run the power method on it.
Consequently, our distinguisher does not (directly) estimate the spectral norm of the Kikuchi matrix $M_\ell(\calH,b)$.

Instead, the basis of our distinguisher is looking for a certain combinatorial structure, called an \emph{even cover}, in the $k$-uniform hypergraph $\calH$.
This high-level approach has been followed by several previous works (e.g.,~\cite{feige2008refute,feige2008small,guruswami2022smoothed})
for related $\kxor$ problems.
\begin{definition}[Even covers]
    Let $\calH = (V,H)$ be a $k$-uniform hypergraph.
    A subset $F \subseteq H$ of hyperedges is an
    \emph{even cover} of $\calH$ if $\bigtriangleup_{S \in F} S = \emptyset$
    or equivalently, for every $v \in V$, $\abs{\{S \in F : S \ni v\}} \equiv 0 \pmod 2$.
\end{definition}
In the setting of planted $\kxor$ (\Cref{def:noisy planted kxor}),
there is an important connection between even covers and parities of noise bits:
\begin{fact}\label{fact:even cover in planted distribution}
    If $F \subseteq H$ is an even cover of $\calH$,
    then $\prod_{e \in F} \bb_e = \pbra{ \prod_{e \in F} \bm{\eta}_e } \pbra{ \prod_{e \in F} \prod_{v \in e} z_v } = \prod_{e \in F} \bm{\eta}_e$.
\end{fact}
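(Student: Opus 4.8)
The plan is simply to unfold the definition of the $\rho$-$\Planted$ distribution and then invoke the defining property of an even cover. First I would recall from \Cref{def:noisy planted kxor} that each right-hand side satisfies $\bb_e = \bm{\eta}_e \prod_{v \in e} z_v$, where the $\bm{\eta}_e \in \{\pm1\}$ are drawn independently. Taking the product over all $e \in F$ and separating the noise factors from the assignment factors yields
\[
    \prod_{e \in F} \bb_e = \pbra{\prod_{e \in F} \bm{\eta}_e} \pbra{\prod_{e \in F} \prod_{v \in e} z_v},
\]
which is the first claimed equality and requires nothing beyond rearranging a finite product of $\pm1$ values.

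The key step is to show that the second factor $\prod_{e \in F} \prod_{v \in e} z_v$ equals $1$. I would reorganize the double product by vertex, writing $\prod_{e \in F} \prod_{v \in e} z_v = \prod_{v \in V} z_v^{\,d_F(v)}$ where $d_F(v) \coloneqq \abs{\{e \in F : v \in e\}}$ counts the hyperedges of $F$ containing $v$. By the definition of an even cover, $d_F(v)$ is even for every $v \in V$, and since $z_v \in \{\pm1\}$ we get $z_v^{\,d_F(v)} = 1$ for each $v$. Hence the product over $v$ collapses to $1$, giving the second equality $\prod_{e \in F} \bb_e = \prod_{e \in F} \bm{\eta}_e$.

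There is essentially no obstacle here: the only care needed is that the rearrangement of the double product into a product over vertices is exact bookkeeping (legitimate because all factors are $\pm1$, so order is irrelevant). I would emphasize that the identity is \emph{pointwise} — it holds for every realization of the noise bits, not just in expectation — which is exactly what makes even covers useful downstream: the planted assignment $z$ is annihilated, and $\prod_{e \in F} \bb_e$ reduces to a parity of $|F|$ independent $\rho$-biased bits, hence has bias $\rho^{|F|}$ in the $\Planted$ case while being unbiased in the $\Null$ case.
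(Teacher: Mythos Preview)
Your proposal is correct and matches the paper's reasoning exactly; indeed, the paper states this as a self-evident fact without a separate proof, the entire justification being contained in the displayed chain of equalities, and your write-up simply spells out the implicit bookkeeping (grouping by vertex and using $z_v^{2}=1$).
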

Indeed, $\prod_{e \in F} \bm{\eta}_e$ has expectation zero in the $\Null$ case
but expectation $\rho^{|F|}$ in the $\Planted$ case.
Hence $\prod_{e \in F} \bm{\eta}_e$ reveals information about whether we are in the $\Planted$ or $\Null$ case,
but the utility of this information degrades with $|F|$.
The tradeoff between hypergraph density and existence of short even covers is known as the \emph{hypergraph Moore bound}, recently established by \cite{guruswami2022smoothed,hsieh2023simple}.
At the densities we are interested in, the shortest even covers might have only logarithmic length,
meaning a single even cover cannot reveal enough information to distinguish between the cases.
Instead, we will end up needing to produce many such even covers.
This leaves us with two main technical challenges:
\begin{enumerate}
    \item \textbf{Finding covers:} How can we efficiently produce many short even covers in $\calH$?\label{item:find walks}
    \item \textbf{Using covers:} How can we combine these even covers to distinguish between the $\Null$ and $\Planted$ cases?\label{item:use walks}
\end{enumerate}
\Cref{item:find walks,item:use walks} are separate pieces of our algorithm and our solutions to each may be of independent interest.
\Cref{item:find walks} can be viewed as a sublinear-time constructive version of the hypergraph Moore bound proof~\cite{guruswami2022smoothed,hsieh2023simple,hsieh2025even},
is the only part of the proof that involves Kikuchi graphs,
and is the dominant contributor to our algorithm's runtime.

\subsubsection{Setup: Finding Small Even Covers by Finding Short Cycles}

Our first task is finding even covers in $\calH$ (\Cref{item:find walks}), which we do in \Cref{sec:finding walks} below.
We use a reduction to the problem of finding nontrivial closed walks in edge-colored graphs,
which has been used numerous times before, e.g. in \cite{hsieh2023simple,hsieh2025even}.
We first present the reduction and then turn to our algorithm for the reduced problem.

\begin{definition}
Let $G = (V,E)$ be a graph and $H$ a finite set.
An \emph{edge coloring} of $G$ with \emph{palette} $H$ is an assignment
$\chi : E \to H$ such that $\chi(e) \neq \chi(e')$ whenever $e \ne e' \in E$ are adjacent.
For a hypergraph $\calH = (V,H)$, the Kikuchi graph $K_\ell(\calH)$ is an edge-colored graph with palette $H$ and edge colors given by $\chi(\{w_1,w_2\}) = w_1 \symmdiff w_2$.
\end{definition}

\begin{definition}\label{def:walk}
    Let $G = (V,E)$ be a graph. A \emph{walk of length $T$} in $G$ is a sequence of vertices  $w_0,\ldots,w_T \in V$ such that for every $0 \le t < T$, $\{w_t,w_{t+1}\} \in E$.
    A walk $w_0,\ldots,w_T \in V$ is \emph{closed} if $w_T = w_0$.
    If $\chi : E \to H$ is an edge coloring and $W = (w_0,\ldots,w_T = w_0)$ be a closed walk, we let \[
    \oddcolors_\chi(W) \coloneqq \cbra{ C \in H ~:~\abs { \cbra { 0 \le t < T : \chi(\{w_t,w_{t+1}\}) = C } } \equiv 1 \pmod 2}
    \] denote the set of colors which occur an odd number of times along the walk.
    We say
    $W$ is \emph{trivial} if $\oddcolors_\chi(W) = \emptyset$ and \emph{nontrivial} otherwise.
\end{definition}
The key link between walks in the Kikuchi graph and even covers is:
\begin{fact}\label{fact:walks and even covers}
    Let $\calH = (V, H)$ be a $k$-uniform hypergraph and $\ell \in \N$. 
    Let $W = (w_0,w_1,\ldots,w_T=w_0)$ be a closed walk of length $T$ in the Kikuchi graph $K_\ell(\calH)$.
    Then $\oddcolors_\chi(w_0,\ldots,w_T)$ is an even cover of $\calH$
    (which is moroever nonempty if $w_0,\ldots,w_T$ is nontrivial).
\end{fact}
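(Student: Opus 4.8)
The plan is to pass to linear algebra over the two-element field. Identify each size-$\ell$ subset $w \subseteq V$ with its indicator vector in $\{0,1\}^V$; under this identification the symmetric difference $\symmdiff$ is exactly coordinatewise addition modulo~$2$, and a family $F \subseteq H$ is an even cover of $\calH$ precisely when $\sum_{S \in F} S = 0$. Before computing, I would first record that the coloring genuinely lands in the palette $H$: an edge $\{w_1,w_2\}$ of $K_\ell(\calH)$ exists only when $w_1 \symmdiff w_2 \in H$, so every color $\chi(\{w_t,w_{t+1}\})$ that appears along $W$ is a hyperedge of $\calH$, and hence $\oddcolors_\chi(W) \subseteq H$ as required.

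The heart of the argument is a one-line telescoping computation. Writing $C_t \coloneqq \chi(\{w_t,w_{t+1}\}) = w_t \symmdiff w_{t+1}$ for $0 \le t < T$ and summing indicator vectors over the closed walk,
\[
  \sum_{t=0}^{T-1} C_t \;=\; \sum_{t=0}^{T-1}\bigl(w_t + w_{t+1}\bigr) \;=\; w_0 + w_T \;=\; 0,
\]
since each intermediate vertex $w_t$ with $1 \le t \le T-1$ occurs in exactly two summands and cancels mod~$2$, while the endpoint contributions $w_0$ and $w_T$ cancel because the walk is \emph{closed} ($w_T = w_0$). On the other hand, regrouping the same sum by color, any color occurring an even number of times along $W$ contributes $0$, so $\sum_{t=0}^{T-1} C_t = \sum_{C \in \oddcolors_\chi(W)} C$. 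Combining the two evaluations gives $\bigtriangleup_{C \in \oddcolors_\chi(W)} C = \emptyset$, which is exactly the assertion that $\oddcolors_\chi(W)$ is an even cover of $\calH$. The ``moreover'' clause is then immediate: by \Cref{def:walk}, nontriviality of $W$ \emph{means} that $\oddcolors_\chi(W) \neq \emptyset$, so in that case the even cover produced is nonempty.

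I do not expect a genuine obstacle here; the statement is essentially a parity-bookkeeping exercise. The two points that require a little care are (i) that one must work modulo~$2$ throughout, so that the even-multiplicity colors appearing along the walk silently disappear and $\oddcolors_\chi(W)$ is precisely the right object to track, and (ii) that it is the \emph{closedness} of $W$ --- not merely that it is a walk --- that makes the telescoping sum vanish rather than equal $w_0 \symmdiff w_T$.
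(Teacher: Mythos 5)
Your proof is correct, and it is the standard telescoping-over-$\mathbb{F}_2$ argument that the paper implicitly relies on (the paper states this as a Fact with no written proof). Nothing to add.
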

Hence finding many small even covers in a hypergraph reduces to
finding many short closed walks in edge-colored graphs
(where $\oddcolors$ is nonempty and different for each walk).

\subsubsection{Finding Short Nontrivial Walks}

In \Cref{sec:finding walks}, we focus on finding such short closed walks.
We first present an algorithm, \Cref{alg:find one walk}, for finding a single walk,
and then \Cref{alg:find many walks}, which repeatedly invokes \Cref{alg:find one walk} to find many walks.
We emphasize that these algorithms work generically for any edge-colored graph with sufficient average degree (not just Kikuchi graphs).

Given an edge-colored graph $(G = (V,E), \chi : E \to H)$ with $N = |V|$, \Cref{alg:find one walk} uses a ``meet-in-the-middle'' approach to find closed walks in roughly $\tilde{O}(\sqrt N)$ time.\footnote{We emphasize that $N = \binom{n}\ell$ in the context of the original Kikuchi graph.}
Informally, we have:

\begin{theorem}[Informal version of \Cref{thm:find walk}]
    Let $G = (V,E,\chi)$ be an $N$-vertex edge-colored graph of average degree~$d \ge \Omega(\ln N)$.
    Then \Cref{alg:find one walk} below (initialized at a random start vertex)
    runs in time $\tilde{O}(\sqrt N)$
    and, with probability $\Omega(1)$, returns a nontrivial closed walk of length $O(\ln N)$.
\end{theorem}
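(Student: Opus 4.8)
The plan is to treat the edge-colored graph $G=(V,E,\chi)$ as accessible only through local neighborhood queries from the start vertex $w_0$, and to locate a short closed walk by a ``meet-in-the-middle'' collision argument of the kind used in the sublinear-time algorithms referenced in \Cref{sec:finding walks}; a separate randomized step then certifies nontriviality. Write $d\ge\Omega(\ln N)$ for the average degree and fix $L=\Theta(\ln N/\ln d)=O(\ln N)$, large enough that $(d/2)^{L}\gg N$.

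\textbf{Step 1 (produce a short closed walk).} Grow a breadth-first tree $\mathcal{T}$ from $w_0$ level by level, recording each discovered vertex's tree-parent (hence its tree-path to $w_0$). Halt the instant either (a)~some frontier vertex $u$ has a ``back edge'' $\{u,v\}$ into an already-discovered vertex $v$, or (b)~the exploration has touched $\wt{O}(\sqrt N)$ edges. In case~(a), splicing together the tree-path $w_0\to u$, the edge $\{u,v\}$, and the reversed tree-path $v\to w_0$ gives a closed walk of length at most $2L+1=O(\ln N)$. Case~(a) must fire first: a back-edge-free BFS tree of depth $L$ in which every discovered vertex had degree $\gtrsim d$ would already contain $\gtrsim(d/2)^{L}\gg N$ vertices, which is impossible. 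Since $|\mathcal{T}|\le(d/2)^{L}=\wt{O}(\sqrt N)$ by the choice of $L$, and each recorded partial walk has length $O(\ln N)$, both the time and the space are $\wt{O}(\sqrt N)$.

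\textbf{Step 2 (certify nontriviality).} A closed walk can be trivial --- it might backtrack, or trace a ``color-canceling'' cycle --- so we must ensure $\oddcolors_\chi(W)\ne\emptyset$. Sample, lazily, an independent uniform bit $\bm\sigma_C\in\{0,1\}$ for each color $C$ encountered, and maintain along the BFS the parity $\phi(v)=\sum_{e\in\text{tree-path}(w_0\to v)}\bm\sigma_{\chi(e)}\bmod 2$. For any closed walk $W$, $\sum_{e\in W}\bm\sigma_{\chi(e)}=\sum_{C\in\oddcolors_\chi(W)}\bm\sigma_C$, which is $0$ if $W$ is trivial and is $1$ with probability $\tfrac12$ if $W$ is nontrivial. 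So, rather than halting at the first back edge, we keep exploring and \emph{accept} a back edge $\{u,v\}$ only when $\phi(u)+\bm\sigma_{\chi(\{u,v\})}+\phi(v)=1$; the associated closed walk is then \emph{guaranteed} nontrivial. An accepted edge exists as long as the explored subgraph $R$ contains \emph{some} short nontrivial closed walk $Z$: the back-edge cycles span the cycle space of $R$, so $Z$ is a mod-$2$ sum of them, whence $\sum_{e\in Z}\bm\sigma_{\chi(e)}=1$ (probability $\tfrac12$) forces at least one summand to be accepted (amplify by repeating with fresh $\bm\sigma$ or by taking $\bm\sigma_C\in\{0,1\}^{O(\ln N)}$). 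The existence of such a $Z$ is, in effect, a constructive Moore-type / rainbow-cycle bound: if \emph{every} short closed walk in $R$ were trivial, the potential $\Psi(v):=\oddcolors_\chi(\text{any path }w_0\to v)$ would be well-defined and would map $R$ into the Hamming cube $\{0,1\}^{H}$, sending each edge to a cube-edge (properness of $\chi$ makes this map locally injective); edge-isoperimetry on the cube then caps the average degree of $R$ at $\log_2|R|\le\log_2 N$, contradicting density $\gg\ln N$. This is the step where one invokes, and re-proves a sublinear-time version of, the combinatorics of \cite{hsieh2025even} and the rainbow-cycle literature.

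\textbf{Main obstacle.} The delicate point --- and where $d\ge\Omega(\ln N)$ (not merely $d=\omega(1)$) and the non-expansion of Kikuchi graphs both enter --- is guaranteeing, in $\wt{O}(\sqrt N)$ time and with probability $\Omega(1)$ over the start, that the explored region $R$ is simultaneously (i)~large enough to force a back edge and (ii)~internally dense enough (average degree of $R$ exceeding $\log_2 N$) to trigger the cube obstruction. A uniformly random $w_0$ need not have this property: it could lie deep inside a large, tree-like part of $G$ that carries few of the edges. The remedy is to bias the start --- choose $w_0$ with probability proportional to its degree (i.e.\ a random edge endpoint), so that with probability $1-O(1/d)=\Omega(1)$ it falls in the subgraph carrying almost all the edges --- or to first peel $G$ down to a $\Theta(d)$-core, and then check that a BFS ball around such a start inherits enough density. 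Carrying out this reduction while keeping the exploration genuinely sublinear is the crux; with it in hand, Steps~1--2 combine to give \Cref{thm:find walk}, which \Cref{alg:find many walks} then calls repeatedly (on varied starts, recording each $\oddcolors$) to harvest many distinct short even covers.
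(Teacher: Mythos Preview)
Your approach is genuinely different from the paper's --- you propose BFS with back-edge detection plus a random-parity filter for nontriviality, whereas the paper draws $L=\Theta(\sqrt N)$ independent random walks of fixed length~$T$ from a common start and looks for endpoint collisions via the birthday paradox. Both of the issues below are real gaps, and each is exactly what the paper's design sidesteps.

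\textbf{Step 1 does not close.} Your claim that ``Case (a) must fire first'' rests on the premise that every BFS-discovered vertex has degree $\gtrsim d$; you correctly flag this as the crux but do not establish it. Biasing $w_0$ by degree controls only the start, not subsequent BFS layers, and peeling to a $\Theta(d)$-core is a global linear-time pass that defeats the $\tilde O(\sqrt N)$ budget. The paper avoids the obstacle entirely: with $L=\Theta(\sqrt N)$ random walks of length~$T$ from $v_0$, the birthday bound (\Cref{lem:A}, via \Cref{fact:maclaurin}) forces an endpoint collision with probability $\Omega(1)$ \emph{irrespective of the degree sequence}, yielding a closed walk of length~$2T$. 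Degree information is deferred to the nontriviality step, where a two-layer Markov argument from a $\pi$-random start (\Cref{lem:B}) shows that with probability $\Omega(1)$ a constant fraction of the walks are ``$\beta$-good'' (at most $1.1\beta T$ steps leave a vertex of degree $<\beta d$).

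\textbf{Step 2's isoperimetry is not justified.} The Hamming-cube edge-isoperimetric inequality bounds the edge count of an induced subgraph on a vertex \emph{subset}; applying it to~$R$ requires $\Psi$ to be globally injective, not merely locally injective. But triviality of all closed walks does not force injectivity: on the path $v_0,v_1,v_2,v_3,v_4$ with proper edge-colors $a,b,a,b$ there are no nontrivial closed walks, yet $\Psi(v_0)=\Psi(v_4)=\emptyset$. With only a locally injective homomorphism, a single cube edge can have arbitrarily many $R$-preimages, and no $\log_2|R|$ degree cap follows. The paper's nontriviality proof is instead a step-by-step encoding bound (\Cref{lem:C}): if both halves of the returned walk are $\beta$-good, then at each good step the next color is uniform over $\ge \beta d$ options, so the probability that the walk's singleton colors all lie in a fixed set $A$ with $|A|\le 2T$ is at most $(O(T)/d)^{\Omega(T)}$; a union bound over the $O(N)$ pairs of good walks then kills this event (taking $A=\emptyset$ handles triviality).
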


By \Cref{fact:walks and even covers}, we then have:

\begin{corollary}[``Constructive'' hypergraph Moore bound]\label{cor:moore bound}
    There exists a universal constant $1 < c < \infty$ such that the following holds.
    Let $\calH = (V,H)$ be a $k$-uniform hypergraph
    with $n$ vertices and $m$ hyperedges,
    with \[
    m \ge \parens*{ C \cdot \binom{k}{k/2}^{-1} } \cdot \parens*{ \frac{n}{\ell} }^{k/2} \cdot \ell \log n.
    \]
    Then $\calH$ contains a nonempty even cover of size $O(\ell \log n)$ (found efficiently by \Cref{alg:find one walk}).
\end{corollary}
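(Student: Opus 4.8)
The plan is to run the walk-finding machinery of \Cref{thm:find walk} directly on the level-$\ell$ Kikuchi graph $K_\ell(\calH)$, equipped with its canonical palette-$H$ edge coloring $\chi(\{w_1,w_2\}) = w_1\symmdiff w_2$, and then to read off an even cover from the walk it returns by invoking \Cref{fact:walks and even covers}. There are really only two things to check: (i) that the density hypothesis on~$m$ is exactly what forces the average degree of $K_\ell(\calH)$ past the $\Omega(\ln N_\ell)$ threshold required by \Cref{thm:find walk}; and (ii) that the $O(\ln N_\ell)$ bound on the length of the walk produced translates into the asserted $O(\ell\log n)$ bound on the size of the even cover.

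For (i): by \Cref{rem:kikuchi params}, $K_\ell(\calH)$ has $N_\ell = \binom n\ell$ vertices and average degree $\bar d_\ell = \binom k{k/2}\, m\,\binom{n-k}{\ell-k/2}/\binom n\ell$, and in the relevant regime $\ell = o(n/k)$ (which holds, since \Cref{alg:find one walk} is only run when $\ell \le O(\sqrt n)$) the standard falling-factorial estimate gives $\binom{n-k}{\ell-k/2}/\binom n\ell = \Theta\bigl((\ell/n)^{k/2}\bigr)$, hence $\bar d_\ell = \Theta\bigl(\ell^{k/2}\Delta\bigr)$ with $\Delta = \binom k{k/2}m/n^{k/2}$. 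Plugging in the hypothesis $m \ge \bigl(C\binom k{k/2}^{-1}\bigr)(n/\ell)^{k/2}\ell\log n$ turns this into $\bar d_\ell \ge c_1 C\,\ell\log n$ for a universal constant $c_1>0$. Since also $\ln N_\ell = \ln\binom n\ell \le \ell\ln(en/\ell) \le c_2\,\ell\log n$ for a universal $c_2$, taking $C$ a sufficiently large universal constant makes $\bar d_\ell$ exceed the average-degree threshold demanded by \Cref{thm:find walk}.

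With that in hand, \Cref{alg:find one walk} started from a uniformly random vertex of $K_\ell(\calH)$ runs in time $\tilde O(\sqrt{N_\ell})$ and, with probability $\Omega(1)$, outputs a nontrivial closed walk $W$ of length $O(\ln N_\ell) = O(\ell\log n)$. For (ii): by \Cref{fact:walks and even covers}, $F \coloneqq \oddcolors_\chi(W) \subseteq H$ is an even cover of $\calH$, and it is nonempty precisely because $W$ is nontrivial. Moreover $|F|$ is at most the number of distinct colors appearing along $W$, hence at most the number of edges $W$ traverses, which is the length of~$W$, namely $O(\ell\log n)$. This proves that the claimed even cover exists; running \Cref{alg:find one walk} $O(\log n)$ times independently amplifies the constant success probability to high probability, so it is moreover found efficiently, completing the proof.

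The substance here all lives in \Cref{thm:find walk}, which we treat as a black box; within the corollary itself the only delicate point is the two-sided estimate $\binom{n-k}{\ell-k/2}/\binom n\ell = \Theta\bigl((\ell/n)^{k/2}\bigr)$ and the careful tracking of its implied constants, which is what ultimately fixes the universal constant~$C$ (the usual caveat being that the approximation $\ell(\ell-1)\cdots(\ell-k/2+1)\approx\ell^{k/2}$ behind this estimate is sharpest when $\ell$ is somewhat larger than~$k$).
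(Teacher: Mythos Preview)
Your proposal is correct and follows exactly the approach the paper takes: the paper derives the corollary in one line (``By \Cref{fact:walks and even covers}, we then have:'') from the informal \Cref{thm:find walk} applied to $K_\ell(\calH)$, and you have simply spelled out the routine verification that the density hypothesis on~$m$ pushes $\bar d_\ell$ past the $\Omega(\ln N_\ell)$ threshold and that $\ln N_\ell = O(\ell\log n)$. One tiny nit: the start vertex in \Cref{thm:find walk} is drawn from the stationary distribution~$\pi$, not uniformly, but this does not affect your argument.
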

Our approach to finding closed walks in \Cref{alg:find one walk}, is standard: if we run $\tilde{O}(\sqrt N)$ i.i.d.~random walks starting from a common vertex, by the birthday bound, with high probability two of them will have the same final vertex and therefore form a closed walk.
(This is where we save a $\sqrt{N}$ factor over more na\"ive algorithms.)

The main challenge here is to make sure that the closed walk that is found is \textit{nontrivial}. For this we note that intuitively, if two walks encounter many vertices of high degree, then it would be extremely unusual for them to use exactly the same colors an odd number of times. 
By a Markov-type argument, for a random start vertex $\bv$ with probability $\Omega(1)$, random walks from $\bv$ are \textit{good} in this sense (having many high degree vertices) with probability $\Omega(1)$.
Then, we use an encoding argument that it is unlikely that
two walks from a fixed start vertex $v$ are both good and
use the exact same colors an odd number of times.
Indeed, we show that stronger statement that the set of \emph{singleton} colors
--- the colors contained exactly once ---
is unlikely to be contained within any fixed small set.

This stronger property of avoiding fixed sets of singleton colors is
actually useful when we design \Cref{alg:find many walks},
our algorithm for outputting multiple walks.
\Cref{alg:find many walks} invokes \Cref{alg:find one walk} repeatedly,
and we use the avoidance property to guarantee that the corresponding even covers we get are all distinct.

\subsubsection{Distinguishing Using Even Covers}

After using the algorithms in the previous section to produce small even covers,
\Cref{alg:distinguisher} below uses these covers to distinguish the $\Null$ and $\Planted$ distributions.
Informally, we show the following:
\begin{lemma}[Informal version of \Cref{lem:distinguisher correct}]
    Let $\calH$ be a hypergraph, $\ell \in \N$, $N = \binom{n}\ell$, and $\rho > 0$.
    Then \Cref{alg:distinguisher}, given as input a set $\calC$ of nonempty even covers of size $\le O(\log N \poly(1/\rho))$ in $\calH$,
    with $|\calC| \ge 10N^\eps$,
    and a sample $\bb \sim \plant^\rho_{\calH,z}$
    outputs a guess \textsc{Null} or \textsc{Planted}, correct with probability at least 0.9.
\end{lemma}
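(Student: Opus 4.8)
The plan is to reduce the distinguishing problem to detecting a bias in a single real-valued statistic. For each even cover $F \in \calC$, let $X_F := \prod_{e \in F} \bb_e \in \{\pm 1\}$; all of these can be computed from $\calC$ and $\bb$ in time $\tilde O(|\calC|\cdot\max_F|F|)$. By \Cref{fact:even cover in planted distribution}, in the $\rho$-$\Planted$ case $X_F = \prod_{e\in F}\bm\eta_e$, so $\Ex[X_F] = \rho^{|F|}$, whereas in the $\Null$ case $\bb$ is uniform and $F \neq \emptyset$, so $\Ex[X_F] = 0$. Accordingly, \Cref{alg:distinguisher} should compute $Z := \sum_{F \in \calC} X_F$ and output $\Planted$ iff $Z > \theta$ for the threshold $\theta := 10\sqrt{|\calC|}$.

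The $\Null$ side is a routine second-moment estimate: since $X_F X_{F'} = \prod_{e \in F \symmdiff F'}\bb_e$ and $\Ex_{\Null}[\prod_{e\in G}\bb_e] = \mathbbm 1[G = \emptyset]$, the distinctness of the even covers in $\calC$ gives $\Ex_{\Null}[Z] = 0$ and $\Ex_{\Null}[Z^2] = |\calC|$, so Chebyshev yields $\Pr_{\Null}[Z > \theta] \le \tfrac1{100}$. For the $\Planted$ side, first note $\Ex_{\Planted}[Z] = \sum_{F \in \calC}\rho^{|F|} \ge |\calC|\cdot\rho^{\max_F|F|}$; the quantitative hypotheses — $|\calC| \ge 10 N^\eps$, the bound $\max_F|F| = O(\log N\,\poly(1/\rho))$, and the definition $\eps = \tfrac{10\log(1/\rho)}{\log k}$ — are calibrated precisely so that $|\calC|\cdot\rho^{2\max_F|F|}$ is a large power of $N$, whence $\Ex_{\Planted}[Z] \ge |\calC|\,\rho^{\max_F|F|} \gg \sqrt{|\calC|} \ge \theta/10$. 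It then remains only to establish the lower-tail bound $\Pr_{\Planted}[Z < \theta] \le \tfrac1{10}$, and this is where I expect the real work to lie.

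The obstacle is that the $X_F$ are far from independent: one has $\Ex_{\Planted}[X_F X_{F'}] - \Ex_{\Planted}[X_F]\,\Ex_{\Planted}[X_{F'}] = \rho^{|F \symmdiff F'|} - \rho^{|F| + |F'|}$, and whenever $F \symmdiff F'$ is itself a short even cover this covariance dwarfs the diagonal scale, so a plain second-moment bound on $\mathrm{Var}_{\Planted}(Z)$ is far too lossy (one can even construct families of distinct short even covers for which $Z \equiv 0$ with probability $\approx \tfrac12$). The fix is to use more than just size and distinctness of $\calC$: the covers output by \Cref{alg:find many walks} come with the ``avoiding small sets'' property, which lets one list them $F^{(1)}, F^{(2)}, \dots$ so that each $F^{(t)}$ contains a hyperedge $e_t$ absent from every earlier $F^{(s)}$ ($s < t$). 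Revealing the independent noise bits $\bm\eta_e$ in the order $\bm\eta_{e_1}, \bm\eta_{e_2}, \dots$ (all others first) then exhibits $Z$ as a sum of bounded martingale differences plus a lower-order remainder, which bounds $\mathrm{Var}_{\Planted}(Z) = O(|\calC|)$ and lets Chebyshev give $\Pr_{\Planted}[Z < \tfrac12\Ex_{\Planted}[Z]] \le \tfrac1{10}$; alternatively, one views $Z = \sum_F \prod_{e\in F}\bm\eta_e$ as a degree-$(\le\max_F|F|)$ multilinear polynomial in the independent $\rho$-biased bits $\bm\eta_e$ and applies a polynomial (anti)concentration inequality of the flavour flagged in \Cref{sec:kikuchi intro}. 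Pinning down precisely which property of $\calC$ suffices here — distinctness alone, the triangular listing above, or a bound on the number of short even covers inside $\mathrm{span}_{\mathbb F_2}(\calC)$ — and marrying it to the right concentration statement is the crux; granting it, the choice $\theta = 10\sqrt{|\calC|} \le \tfrac12\Ex_{\Planted}[Z]$ makes both error probabilities at most $\tfrac1{10}$, for overall success probability at least $0.9$.
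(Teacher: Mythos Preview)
Your $\Null$-case argument is correct and matches the paper. On the $\Planted$ side there is a real gap, and in fact the statistic you propose does not work. Take $F^{(t)} = \{e_0, e_t\}$ for $t = 1, \dots, M$; this has your triangular property (the fresh edge of $F^{(t)}$ is $e_t$), yet $Z = \bm\eta_{e_0}\sum_{t=1}^M \bm\eta_{e_t}$. Whichever order one reveals the bits, the step that exposes $\bm\eta_{e_0}$ has increment of magnitude $\Theta(M\rho)$, so $\mathrm{Var}_{\Planted}(Z) = \Theta(M^2\rho^2)$ and the claimed bound $\mathrm{Var}_{\Planted}(Z) = O(|\calC|)$ is false. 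Worse, $\Pr_{\Planted}[Z > 0] \to \tfrac{1+\rho}{2}$ as $M \to \infty$, so the one-sided test ``$Z > \theta$'' errs with probability $\approx \tfrac12$ in the $\Planted$ case for every $\theta > 0$.

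The paper's fix is your second route---anticoncentration of the polynomial about~$0$---but with two ingredients you do not mention and which are essential. First, anticoncentration about~$0$ fails for $\{\pm1\}$ inputs already at degree one (the paper notes this explicitly), so \Cref{alg:distinguisher} does not threshold $\sum_F \prod_e \bb_e$; it draws $\bm\xi_e \sim \Unif[0,1]$ independently and thresholds $P(\bm\xi_1 \bb_1, \dots, \bm\xi_m \bb_m)$, so that the effective inputs have bounded density and the linear case (\Cref{lem:linear anticoncentration}) goes through. Second, to lift linear anticoncentration to degree~$2T$, the algorithm first samples a random equipartition $\bm\pi$ of $H$ into $2T$ parts and keeps only those covers shattered by~$\bm\pi$ (an $e^{-2T}$ fraction survive, \Cref{lem:block}); the resulting polynomial is block-multilinear, and one conditions on the blocks one at a time (\Cref{prop:block linear anticoncentration}). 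After this preprocessing, bare distinctness of the surviving covers is all that is used for both the $\Null$ second-moment bound and the $\Planted$ anticoncentration---no triangular listing and no $\mathbb F_2$-span control.
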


Recall that for each even cover $F \in \calC$, the random variable $\prod_{e \in F} \bb_e$ has
expectation either $0$ in the $\Null$ case or $\rho^{|F|}$ in the $\Planted$ case.
Hence we can try to distinguish the two cases using the quantity $\bm{P} = \sum_{F \in \calC} \prod_{e \in F} \bb_e$.
Unfortunately, the summands are not independent since the $F$'s might share edges,\footnote{If we instead required the $F$'s to all be disjoint, we would run out of edges before getting a distinguisher.}
and so strong concentration bounds prove difficult.

In the $\Null$ case, a second moment calculation shows that $\bm{P}$ still does concentrate about zero,
because the signs on distinct even covers are pairwise independent
and by assumption we have collected a significant number of distinct even covers. 
Thus, all we need to show is that in the $\Planted$ case the polynomial $\bm{P}$ \textit{does not} concentrate about zero. 

To do this, we essentially only use two facts: that $\bm{P}$ is (a) low degree and (b) has mean far from zero.
However, when the inputs are Bernoulli random variables, there exist polynomials (indeed, linear ones) with mean bounded away from zero
which still have constant probability of evaluating to zero.
To get around this, we instead evaluate our polynomials on continuously noised versions of the variables $\bb_e$. In particular, we multiply $\bb_e$ by a uniform random value in $[0,1]$. A standard win-win argument now shows anticoncentration of any linear polynomial about zero:
\begin{itemize}
    \item If the variance of the polynomial output is small, then it concentrates about its mean, which is far from zero.
    \item If the variance of the polynomial output is large, then because the input random variables are not concentrated in any particular small interval,
    results such as those in \cite{miroshnikov1980inequalities,bobkov2014bounds} show that the polynomial's output also cannot be concentrated in any small interval, in particular a small interval about zero.
\end{itemize}
While our polynomial $\bm{P}$ is not necessarily linear, we can subsample the set of even covers (monomials) so that it is \textit{block-multilinear}, while not significantly decreasing the number of remaining even covers. We can then inductively apply the anticoncentration of linear polynomials to argue that the block-multilinear polynomial is anticoncentrated as well. Since there are still many remaining distinct monomials in the polynomial, the concentration argument in the $\Null$ case still holds, and this completes the proof of the correctness of the distinguisher. 

\section{Preliminaries}

We use $\Unif(S)$ to denote the uniform distribution over a set $S$
and $S \symmdiff T$ to denote the symmetric difference of two sets $S$ and $T$.
The logarithm $\log(\cdot)$ is base $2$ (while $\ln(\cdot)$ is base $e$ as usual).

\subsection{Graph Theory}

All graphs in this paper are undirected and simple.

Let $G = (V,E)$ be a graph.
The \emph{average degree} of $G$ is $\Ex_{v \sim \Unif(V)}[\deg(v)] = \frac{2|E|}{|V|}$.
The \emph{degree distribution} of $G$, denoted $\pi$,
is the distribution on $V$ given by $\pi(v) \propto \deg(v)$
(so that $\pi(v) = \frac{\deg(v)}{2|E|}$).
Note that this may not be a unique stationary distribution in the case where $G$ is disconnected,
but it is always stationary for the random walk on $G$.

\begin{fact}\label{fact:min degree}
    If $G$ has average degree $d$, then for any $0<\beta \leq 1$, it holds that $\Pr_{\bv\sim\pi}\sbra{\deg(\bv) < \beta d} < \beta$.
\end{fact}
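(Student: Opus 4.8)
The plan is to unwind the definition of the degree distribution $\pi$ and reduce the claim to a short counting argument; it is essentially Markov's inequality in disguise. Write $N = |V|$ and $M = |E|$, so that $\pi(v) = \deg(v)/(2M)$ and the average degree is $d = 2M/N$, i.e.\ $Nd = 2M$. Let $L \coloneqq \{v \in V : \deg(v) < \beta d\}$ be the set of ``low-degree'' vertices; the goal is then to show $\pi(L) = \sum_{v \in L} \deg(v)/(2M) < \beta$.

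First I would dispose of the trivial case $L = \emptyset$ separately: there $\pi(L) = 0 < \beta$ since $\beta > 0$. When $L \neq \emptyset$, every $v \in L$ satisfies the \emph{strict} inequality $\deg(v) < \beta d$, so
\[
    \pi(L) \;=\; \sum_{v \in L} \frac{\deg(v)}{2M} \;<\; |L| \cdot \frac{\beta d}{2M} \;\le\; N \cdot \frac{\beta d}{2M} \;=\; \beta \cdot \frac{Nd}{2M} \;=\; \beta,
\]
where the first inequality is strict because $L$ is nonempty (so at least one summand is strictly below $\beta d/(2M)$), the second uses $|L| \le N$, and the last uses $Nd = 2M$. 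Equivalently, one could phrase this as a change of measure, $\Pr_{\bv\sim\pi}[\deg(\bv) < \beta d] = \tfrac1d\,\Ex_{\bv\sim\Unif(V)}\bigl[\deg(\bv)\cdot\mathbbm{1}\{\deg(\bv) < \beta d\}\bigr] \le \tfrac1d \cdot \beta d = \beta$, recovering strictness from the indicator's support (again handling the empty case on its own).

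I do not anticipate any genuine obstacle here: the only point that needs a moment's care is producing the strict inequality $<\beta$ rather than merely $\le\beta$, which is precisely why the case $L=\emptyset$ is split off and why the strictness is tracked through the summation.
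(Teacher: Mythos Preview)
Your proof is correct and is essentially identical to the paper's: both define the low-degree set, bound each term by $\beta d/(2|E|)$, and use $|L|\le |V|$ together with $d|V|=2|E|$. The only difference is cosmetic---you explicitly split off the case $L=\emptyset$ to preserve strictness, whereas the paper leaves this implicit.
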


\begin{proof}
    Let $S\coloneqq \{v:\deg(v)<\beta d\}$.
    Then
    \[
    \Pr_{\bv\sim\pi}[\deg(\bv)< \beta d]
    =\sum_{v\in S}\frac{\deg(v)}{2|E|}
    <\sum_{v\in S}\frac{\beta d}{2|E|}
    =\frac{\beta d|S|}{2|E|}
    \le\frac{\beta d|V|}{2|E|}
    =\beta.\qedhere
    \]
\end{proof}

\begin{definition}
    The \emph{random walk distribution of length $t$ from $w_0$}, denoted $\walk^T_{w_0}$,
    is a distribution over walks of length $t$ with initial vertex $w_0$ defined as follows:
    To sample $\bW = (\bw_0=w_0, \bw_1,\ldots,\bw_T) \sim \walk^T_{w_0}$,
    iteratively sample $\bw_{t+1}$ to be a random neighbor of $\bw_t$ for $1 \le t < T$.
\end{definition}

\begin{fact}\label{fact:stationary sample}
    Let $G = (V,E)$ be a graph and $T \in \N$.
    If $\bw_0 \sim \pi$ and $\bW = (\bw_0, \bw_1, \ldots, \bw_T) \sim \walk^T_{\bw_0}$,
    then the marginal distribution of $\bw_t$ is $\pi$ (for every $1 \le t \le T$).
\end{fact}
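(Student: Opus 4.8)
The plan is to prove this by induction on $t$, exploiting the fact that $\pi$ is a stationary distribution for the random walk transition operator on an undirected graph. The base case $t = 0$ is immediate: by hypothesis $\bw_0 \sim \pi$. For the inductive step, I would assume that $\bw_t$ has marginal distribution $\pi$ and show the same for $\bw_{t+1}$.

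The key computation for the inductive step is to expand, for a fixed $v \in V$,
\[
\Pr[\bw_{t+1} = v] = \sum_{u : \{u,v\} \in E} \Pr[\bw_t = u] \cdot \Pr[\bw_{t+1} = v \mid \bw_t = u] = \sum_{u : \{u,v\} \in E} \frac{\deg(u)}{2|E|} \cdot \frac{1}{\deg(u)},
\]
using the inductive hypothesis $\Pr[\bw_t = u] = \pi(u) = \deg(u)/(2|E|)$ and the definition of $\walk^T_{w_0}$, which moves to a uniformly random neighbor, so $\Pr[\bw_{t+1} = v \mid \bw_t = u] = 1/\deg(u)$ for each neighbor $v$ of $u$. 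The $\deg(u)$ factors cancel, leaving $\sum_{u : \{u,v\}\in E} \frac{1}{2|E|} = \frac{\deg(v)}{2|E|} = \pi(v)$, which closes the induction.

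The only point requiring a small amount of care --- and the closest thing to an obstacle --- is the handling of vertices $u$ with $\deg(u) = 0$ in a disconnected graph: the conditional transition probability $1/\deg(u)$ is undefined there, but such vertices have $\pi(u) = 0$, so they contribute nothing to the sum above and can simply be excluded (equivalently, one restricts attention to the non-isolated vertices, on which the walk and $\pi$ both live). Beyond this bookkeeping, the argument is the standard verification that the degree distribution is stationary, already flagged in the preliminaries; no new ideas are needed.
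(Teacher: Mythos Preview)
Your proof is correct; it is exactly the standard induction showing that the degree distribution is stationary for the simple random walk on an undirected graph. The paper does not actually supply a proof of this fact at all---it is stated as a preliminary ``Fact'' and left unproved as well known---so there is nothing to compare against, and your write-up (including the remark about isolated vertices having $\pi$-mass zero) is entirely adequate.
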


For a walk $W = (w_0,\ldots,w_T)$ in a graph, we let $W^{\mathsf{rev}} = (w_T,\ldots,w_0)$ denote the reversed walk.

\subsection{Probability}

We use the following well-known fact, which is essentially the famous Maclaurin inequality (see e.g.,~\cite[\S2.22]{hardy1934inequalities}):

\begin{fact}\label{fact:maclaurin}
    For every finite set $S$, distribution $\calD$ on $S$, and $k \in \N$,
    \[
    \Pr_{\bx_1,\ldots,\bx_k \sim \calD\text{ indep.}} \bracks*{ \exists\text{ collision among }\bx_1,\ldots,\bx_k} \ge \Pr_{\bx_1,\ldots,\bx_k \sim \Unif(S)\text{ indep.}}\bracks*{ \exists\text{ collision among }\bx_1,\ldots,\bx_k}.
    \]
\end{fact}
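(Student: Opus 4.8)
The plan is to prove the equivalent statement that the uniform distribution \emph{maximizes} the probability of no collision among $\bx_1,\dots,\bx_k$, and to extract this from Maclaurin's inequality. Write $N = |S|$ and let $p_1,\dots,p_N \ge 0$ (with $\sum_i p_i = 1$) be the probabilities $\calD$ assigns to the elements of $S$. If $k > N$, then by pigeonhole every $k$-tuple of elements of $S$ contains a repeat, so both probabilities equal $1$ and the inequality is trivial; hence assume $k \le N$.

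First I would compute the no-collision probability in closed form. For any probability vector $q = (q_1,\dots,q_N)$ on $S$,
\[
\Pr_{\bx_1,\dots,\bx_k \sim q}[\bx_1,\dots,\bx_k \text{ pairwise distinct}] = \sum_{\substack{(a_1,\dots,a_k)\\ \text{distinct}}} q_{a_1}\cdots q_{a_k} = k!\sum_{1 \le a_1 < \cdots < a_k \le N} q_{a_1}\cdots q_{a_k} = k!\,e_k(q),
\]
where $e_k$ is the degree-$k$ elementary symmetric polynomial. In particular the right-hand side of the claimed inequality corresponds to $k!\,e_k(1/N,\dots,1/N) = k!\binom{N}{k}N^{-k}$, so after taking complements it suffices to prove $e_k(p) \le \binom{N}{k}N^{-k}$.

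Now I would invoke Maclaurin's inequality: for nonnegative reals $p_1,\dots,p_N$, setting $P_j \coloneqq e_j(p)/\binom{N}{j}$, one has $P_1 \ge P_k^{1/k}$ (indeed the full chain $P_1 \ge P_2^{1/2} \ge \cdots \ge P_N^{1/N}$ holds, but only the first-versus-$k$-th comparison is needed). Since $P_1 = e_1(p)/N = \bigl(\sum_i p_i\bigr)/N = 1/N$, this yields $e_k(p)/\binom{N}{k} = P_k \le P_1^k = N^{-k}$, i.e.\ exactly $e_k(p) \le \binom{N}{k}N^{-k}$; multiplying by $k!$ and taking complements gives the claim. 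As there is nothing deep here, I do not anticipate a real obstacle — the only points of care are the edge case $k > N$ above, and the fact that Maclaurin's inequality is customarily stated for strictly positive reals, so one should remark that it extends to nonnegative reals by continuity. Alternatively, to sidestep that remark entirely, one could prove $e_k(p) \le \binom{N}{k}N^{-k}$ directly by pairwise smoothing: replacing two coordinates $p_i, p_j$ by their common average changes $e_k(p)$ by $e_{k-2}(\text{remaining coordinates})\cdot \tfrac14 (p_i - p_j)^2 \ge 0$, so $e_k$ is nondecreasing under such symmetrizations, and these drive $p$ toward the uniform vector (passing to the limit, or appealing to compactness of the simplex, then finishes).
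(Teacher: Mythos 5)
Your proof is correct and follows exactly the route the paper intends: the paper states this fact with no proof beyond citing Maclaurin's inequality (Hardy--Littlewood--P\'olya, \S 2.22), and your argument simply supplies the routine reduction --- writing the no-collision probability as $k!\,e_k(p)$ and applying $P_k \le P_1^k = N^{-k}$. Your handling of the $k > N$ edge case and the nonnegativity issue (or the alternative smoothing argument) are fine.
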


\section{Finding Closed Walks in an Edge-Colored Graph}\label{sec:finding walks}

We now describe an efficient randomized procedure
for producing many nontrivial closed walks $W$ in an edge-colored graph,
where all the sets $\oddcolors_\chi(W)$ are distinct.

\subsection{Finding a Single Closed Walk}

The following algorithm refers to the notion of a walk in a graph being ``good''.  
We will later pick a specific ``goodness'' property, but at first we analyze things generically.

\begin{algorithm}[H]
\caption{\textsc{Find-Good-Closed-Walk}$(G,v_0, T) \\  \hspace*{2.5in} // \ G = (V,E)$ an $N$-vertex graph, $v_0 \in V$, $T \in \N$}\label{alg:find one walk}
    Draw $\bW_1, \dots, \bW_L \sim \walk^T_{v_0}$ independently, $L \coloneqq \lceil C_1 \sqrt{N} \rceil$.

    \If{$\exists\ i \neq j \in [L]$ \textnormal{such that $\bW_i, \bW_j$ are both ``good'' and end at the same vertex}}{
    \Return{$(\bW_i, \bW_j^\rev)$}}
    
    \Return{``fail''}
\end{algorithm}

\medskip
In the following, we will eventually take $\eps = .04$ and $C_1 = 200$ (hence we need $N > 40$):
\begin{lemma}   \label{lem:A}
    Assume that for $\bW \sim \walk^T_{v_0}$ we have $\Pr[\bW\ \text{good}] \geq \eps$. 
    Then \textsc{Find-Closed-Walk}$(G,v_0,T)$ successfully returns a closed walk (with both halves good) except with probability at most $2\exp(-\frac{\eps^2 C_1^2}{10})$. 
    (This assumes $L \geq .8 \eps C_1^2,\;30/\eps$.)
\end{lemma}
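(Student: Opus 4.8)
The plan is to bound the failure probability of \Cref{alg:find one walk} via two bad events: first, that fewer than some threshold number of the walks $\bW_1,\dots,\bW_L$ are good; and second, that among the good walks, no two share a common endpoint. We condition on the first event not occurring and then apply the birthday-paradox bound (\Cref{fact:maclaurin}) to the good walks.

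\textbf{Step 1: Many walks are good.} By hypothesis each $\bW_i$ is good independently with probability at least $\eps$, so the number $\bG$ of good walks stochastically dominates $\mathrm{Bin}(L,\eps)$ and has mean at least $\eps L$. A Chernoff bound gives $\Pr[\bG < \eps L /2] \le \exp(-\eps L/8)$. Using the assumed lower bounds $L \ge .8\eps C_1^2$ and $L \ge 30/\eps$, I would massage this into a clean bound of the form $\exp(-\Omega(\eps^2 C_1^2))$; concretely $\eps L/8 \ge \eps(.8\eps C_1^2)/8 = \eps^2 C_1^2/10$, so $\Pr[\bG < \eps L/2] \le \exp(-\eps^2 C_1^2/10)$, and separately $\eps L/2 \ge 15$, which we will use as a crude lower bound on the number of good walks.

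\textbf{Step 2: Two good walks collide.} Condition on $\bG \ge \eps L /2 \ge 15$ good walks. Each good walk $\bW_i$, being a random walk of length $T$ started at $v_0$, ends at some vertex; crucially, conditioning on goodness does not make the endpoints ``adversarially spread out'' — I want a collision among at least $g \coloneqq \lceil \eps L/2\rceil$ samples drawn from \emph{some} distribution on $V$ (the endpoint distribution of a good walk). Actually the endpoints of distinct good walks are i.i.d.\ draws from a common distribution $\calD$ on $V$ (goodness of $\bW_i$ depends only on $\bW_i$, so conditioning on which walks are good keeps the good ones i.i.d.). By \Cref{fact:maclaurin}, the probability of a collision among $g$ i.i.d.\ draws from $\calD$ is at least the probability of a collision among $g$ uniform draws from $V$, which by the standard birthday bound is at least $1 - \exp(-\binom{g}{2}/N) \ge 1 - \exp(-g(g-1)/(2N))$. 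Since $g \ge \eps L/2 \ge .4\eps^2 C_1^2$ and also $g \ge 15$, we get $g(g-1)/(2N) = \Omega(\eps^4 C_1^4 / N)$ — but wait, we need this to be $\Omega(\eps^2 C_1^2)$, which requires $g^2/N \gtrsim \eps^2 C_1^2$, i.e.\ $g \gtrsim \eps C_1 \sqrt N$; since $L = \lceil C_1 \sqrt N\rceil$ and $g \ge \eps L /2$, indeed $g \gtrsim \eps C_1 \sqrt N /2$, so $g^2/N \gtrsim \eps^2 C_1^2/4$, giving collision failure probability $\le \exp(-\Omega(\eps^2 C_1^2))$, and I would check the constants push it below $\exp(-\eps^2 C_1^2/10)$.

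\textbf{Step 3: Assemble.} A union bound over the two bad events gives total failure probability $\le 2\exp(-\eps^2 C_1^2/10)$. When the algorithm succeeds it finds good $\bW_i, \bW_j$ ending at the same vertex, and $(\bW_i, \bW_j^\rev)$ is then a closed walk (it ends where it started, namely $v_0$) with both halves good, as claimed. I would double-check the edge case where $\bW_i$ and $\bW_j^\rev$ might not concatenate into a valid walk — but since $\bW_j^\rev$ starts at the common endpoint and $\bW_i$ ends there, the concatenation is legitimate.

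\textbf{Main obstacle.} The delicate point is \emph{Step 2}: ensuring that conditioning on ``goodness'' does not distort the endpoint distribution in a way that defeats the birthday bound. The resolution is that goodness is a property of each individual walk, so conditioning on the \emph{event} $\{\bG \ge g\}$ and then looking at any $g$ of the good walks, these are i.i.d.\ from the conditional law of a single good walk — and \Cref{fact:maclaurin} handles an arbitrary such distribution, needing no lower bound on min-probability. The only real arithmetic is tracking constants through the two Chernoff/birthday estimates to land at the stated $2\exp(-\eps^2 C_1^2/10)$ bound under the stated hypotheses $L \ge .8\eps C_1^2$ and $L \ge 30/\eps$; I expect these to be engineered precisely so the bound goes through, so I would not over-optimize.
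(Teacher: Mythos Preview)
Your proposal is correct and follows essentially the same approach as the paper: split the failure into (i) too few good walks, handled by a Chernoff bound using $L \geq .8\eps C_1^2$, and (ii) no endpoint collision among the good walks, handled by the Maclaurin/birthday bound using $L \geq C_1\sqrt{N}$ and $\eps L \geq 30$. Your identification of the key subtlety---that conditioning on goodness leaves the good walks i.i.d.\ from \emph{some} endpoint distribution, which \Cref{fact:maclaurin} then dominates by the uniform case---is exactly the point the paper makes as well.
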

\begin{proof}
    By a Chernoff bound, the number of good walks drawn in Step~1 is at least $L' \coloneqq \lfloor (\eps/2) L\rfloor$ except with probability at most 
    \[
      \exp(-\eps L/8) \leq 2\exp\parens*{-\frac{\eps^2 C_1^2}{10}}
    \]
    (where we used $L \geq .8 \eps C_1^2$).
    Losing this probability, we may as well assume we have $L'$ independent draws of $\bW \sim \walk^T_{v_0}$ conditioned on $\bW$ being good.  
    Let $\calD$ denote the distribution on final vertices achieved by this conditioned distribution.
    If $\calD$ were the uniform distribution on~$V$, we would precisely have an instance of the Birthday Problem.  
    But in fact, it is well known (cf. \Cref{fact:maclaurin}) that the probability of collision in $L'$ draws from a distribution~$\calD$ is minimized when $\calD$ \emph{is} uniform. 
    Thus we may use the standard (uniform) Birthday Problem analysis (e.g., \cite{sayra1966birthday}) to conclude that the probability of getting no collisions in $L'$ draws from a distribution on $N$ elements is at most
    \[
        \exp\parens*{-\frac{L'(L'-1)}{2N}} 
        \leq \exp\parens*{-\frac{\eps^2 L^2}{10N}} 
        \leq \exp\parens*{-\frac{\eps^2 C_1^2}{10}}
    \]
    (where the first inequality  used $\eps L \geq 30$), completing the proof.
\end{proof}

Next we will develop the goodness property of walks we need, which will be that they hit few vertices with abnormally low degree.
\begin{definition}
    Let $0 < \beta \leq 1$.  In a graph $G = (V,E)$ of average degree $d$, we say that $u \in V$ is \emph{$\beta$-bad} if $\deg(u) < \beta d$.
    For a walk $W = (w_0, \dots, w_T)$, we write $B_\beta(W) = \#\{0 \leq i < T : w_i \text{ is $\beta$-bad}\}$. Finally, we say $W$ is \emph{$\beta$-good} if $B_\beta(W) \leq 1.1 \beta T$.
\end{definition}
\begin{lemma} \label{lem:B}
    In the previous setting, say that vertex $v_0 \in V$ is an \emph{$\eps$-good start} if the following condition holds: $\Pr_{\bW \sim \walk^T_{v_0}}[\bW \text{ is $\beta$-good}] \geq \eps$.  Then 
    $
        \Pr_{\bv \sim \pi}[\bv \text{ is a $.04$-good start}] > .04.
    $
\end{lemma}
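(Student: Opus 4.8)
The plan is a two-step Markov argument; the precise value of $\beta \in (0,1]$ will play no role. \textbf{Step 1: bound the goodness probability averaged over a $\pi$-random start.} I would draw $\bw_0 \sim \pi$ and then $\bW = (\bw_0,\dots,\bw_T) \sim \walk^T_{\bw_0}$. By \Cref{fact:stationary sample} (and trivially when $t = 0$), each $\bw_t$ with $0 \le t \le T-1$ has marginal law $\pi$, so linearity of expectation gives
\[
  \Ex\bracks*{B_\beta(\bW)} = \sum_{t=0}^{T-1}\Pr_{\bv\sim\pi}\bracks*{\bv\text{ is $\beta$-bad}} = T\cdot\Pr_{\bv\sim\pi}\bracks*{\deg(\bv) < \beta d} < \beta T,
\]
using \Cref{fact:min degree} for the last inequality. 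Markov's inequality then yields $\Pr[B_\beta(\bW) > 1.1\beta T] < \tfrac{1}{1.1} = \tfrac{10}{11}$, i.e.\ $\Pr_{\bw_0\sim\pi,\,\bW}[\bW\text{ is $\beta$-good}] > \tfrac{1}{11}$.

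\textbf{Step 2: reverse Markov to pass to good start vertices.} Writing $p(v) \coloneqq \Pr_{\bW\sim\walk^T_v}[\bW\text{ is $\beta$-good}] \in [0,1]$, Step~1 says exactly $\Ex_{\bv\sim\pi}[p(\bv)] > \tfrac1{11}$. A vertex $v$ fails to be a $.04$-good start precisely when $p(v) < .04$, so bounding $p(\bv) \le 1$ on the good-start event and $p(\bv) < .04$ off it gives
\[
  \tfrac1{11} < \Ex_{\bv\sim\pi}[p(\bv)] \le \Pr_{\bv\sim\pi}\bracks*{\bv\text{ is a $.04$-good start}} + .04,
\]
whence $\Pr_{\bv\sim\pi}[\bv\text{ is a $.04$-good start}] > \tfrac1{11} - .04 > .04$.

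Neither step is a genuine obstacle — both are one-line applications of Markov — but the point that needs care is the tightness of the constants. Because the goodness threshold $1.1\beta T$ lies only a factor $1.1$ above the mean $\beta T$, Step~1 retains just $\tfrac1{11}\approx .091$ of the probability mass; Step~2 then shaves off an additive $\eps = .04$, and the conclusion survives only because $\tfrac1{11} - .04 \approx .051$ still exceeds $.04$. So the single thing to verify is that the chosen constants ($1.1$ in the definition of a good walk, $.04$ for $\eps$) leave enough slack, which they — narrowly — do.
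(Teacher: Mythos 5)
Your proof is correct, and it organizes the two Markov applications differently from the paper. The paper first applies Markov to the random variable $\bv \mapsto \E_{\bW \sim \walk^T_{\bv}}[B_\beta(\bW)]$ at an intermediate threshold $1.05\beta T$ (concluding that a $>.04$ fraction of start vertices under $\pi$ have conditional expectation below $1.05\beta T$), and then applies Markov a second time, conditioned on such a start, to push from $1.05\beta T$ up to the goodness threshold $1.1\beta T$. You instead apply Markov once to the unconditional count $B_\beta(\bW)$ to get $\Pr_{\bv,\bW}[\bW\text{ good}] > 1/11$, and then run a reverse-Markov averaging argument on $p(v) = \Pr_{\bW\sim\walk^T_v}[\bW\text{ good}]$, using $\E_{\bv\sim\pi}[p(\bv)] \le \Pr[p(\bv)\ge .04] + .04$. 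Both routes rest on the same first-moment bound $\E[B_\beta(\bW)] < \beta T$ and both are one line each; yours dispenses with the auxiliary constant $1.05$ and, as you note, lands at $1/11 - .04 \approx .051$, which clears the stated $.04$ with a bit more room than the paper's $1 - 1/1.05 \approx .048$. The only point worth flagging is one you already handled: $B_\beta$ counts the vertices $w_0,\dots,w_{T-1}$, so you need the marginal of $\bw_0$ as well as of $\bw_1,\dots,\bw_{T-1}$ to be $\pi$, which holds since the start vertex is itself drawn from $\pi$.
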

\begin{proof}
    Suppose $\bv \sim \pi$ and then $\bW = (\bw_0, \dots, \bw_T) \sim \walk^T_{\bv}$. By \Cref{fact:stationary sample}, each $\bw_i$ is marginally distributed according to~$\pi$.  Thus by \Cref{fact:min degree} we have $\Pr[\bw_i \text{ is bad}] < \beta$ for each~$i$.  It follows that $\E_{\bv} \E_{\bW}[|B_\beta(\bW)|] < \beta T$. Thus $\Pr_{\bv \sim \pi} \bracks*{ \E_{\bW \sim \walk^T_{\bv}}[B_\beta(\bW)] < 1.05 \beta T } > 1 - \frac1{1.05} > .04$ by Markov's inequality.
    Finally, suppose $v_0$ indeed has $\E_{\bW \sim \walk^T_{v_0}}[B_\beta(\bW)] < 1.05 \beta T$.  Then Markov's inequality implies that $\Pr_{\bW \sim \walk^T_{v_0}}[B_\beta(\bW) > 1.1 \beta T] < \frac{1.05}{1.1} < .96$: i.e., $v_0$ is a $.04$-good start.
\end{proof}

Suppose now that we have an edge-colored graph of average degree $d \gg T$, and that $v_0$ is a good start vertex.  Then intuitively, length-$T$ walks from $v_0$ are likely to encounter many vertices of high degree, and thus are likely to exhibit a high diversity of colors.
The below key lemma makes this precise:
\begin{definition}
    Let $G = (V,E)$ be an edge-colored graph with coloring $\chi : E \to H$ and average degree~$d$. Let $W^1, W^2$ be two walks in $G$, and write $\calE = \calE(W^1, W^2) = (e_1, \dots, e_T, e_{T+1}, \dots, e_{2T})$ for the sequence of edges in walk $W^1$ followed by those in walk $W^2$. We then define $S(W^1, W^2)$ to be the \emph{singleton} edge-colors in~$\calE$; i.e., the set of colors that appear exactly once in the list $(\chi(e_i))_{i = 1\dots 2T}$.
\end{definition}
\begin{lemma} \label{lem:C}
    In the previous setting, fix any $v_0 \in V$ and any ``avoidance'' set of colors~$A \subseteq H$ satisfying $|A| \leq 2T$.\footnote{There is nothing special about $2T$ here; we could have any $O(T)$ by changing the constant $400$ in the lemma's conclusion.}
    Suppose we independently draw $\bW^1, \bW^2 \sim \walk^T_{v_0}$, and define the event
    \[
        \bcalC = \text{``}\bW^1, \bW^2 \text{ are $\beta$-good, and } S(\bW^1, \bW^2) \subseteq A\text{''.}
    \]
    Then fixing, say, $\beta = .05$, we have $\Pr[\bcalC] \leq (400 T/d)^{.89T}$.
\end{lemma}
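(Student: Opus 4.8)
The plan is to reveal the $2T$ edges of the concatenated sequence $\calE(\bW^1,\bW^2)=(e_1,\dots,e_{2T})$ one at a time in order, using that, conditioned on the edges revealed so far, $e_i$ is uniform among the $\deg(s_i)$ edges incident to its (already-determined) source vertex $s_i$; the independence of $\bW^1$ and $\bW^2$ is what lets the sequence ``restart'' cleanly at step $T+1$. Call step $i$ \emph{cheap} if $\chi(e_i)\in A$ or $\chi(e_i)$ already appeared among $\chi(e_1),\dots,\chi(e_{i-1})$, and \emph{expensive} otherwise. The first ingredient is a per-step bound: if $s_i$ is not $\beta$-bad, so $\deg(s_i)\ge \beta d$, then since a proper edge-coloring uses each color at most once at a vertex, at most $|A|\le 2T$ edges at $s_i$ carry a color in $A$ and at most $i-1<2T$ carry a previously-seen color, whence $\Pr[\text{step }i\text{ cheap}\mid\text{past}]\le 4T/(\beta d)$.

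The combinatorial heart is that $\bcalC$ forces many cheap steps at good sources. An expensive step is, by definition, the first occurrence of a color \emph{not} in $A$; but under $S(\bW^1,\bW^2)\subseteq A$ such a color cannot be a singleton, so it occurs at least twice among the $2T$ edges. Distinct expensive steps carry distinct colors, so there are at most $T$ of them, i.e.\ at least $T$ cheap steps. Since both walks are $\beta$-good, at most $2\cdot 1.1\beta T=2.2\beta T$ of the $2T$ sources are $\beta$-bad; removing these leaves (for $\beta=.05$) at least $(1-2.2\beta)T=.89T$ positions that are simultaneously cheap and have a good source.

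To finish I would union-bound over which positions these are. On $\bcalC$ there is a set $I\subseteq[2T]$ with $|I|=\lceil .89T\rceil$ all of whose positions have good sources and are cheap; for a fixed such $I$, a chain-rule argument over the revelation order --- at a bad source the required event is impossible, at a good source invoke the per-step bound --- gives $\Pr[\text{every }i\in I\text{ has good source and is cheap}]\le (4T/(\beta d))^{|I|}\le(80T/d)^{.89T}$, where we may assume $80T/d<1$ since otherwise $(400T/d)^{.89T}\ge 1$ and there is nothing to prove. Multiplying by $\binom{2T}{\lceil .89T\rceil}\le 2^{2T}=4^T$ and using $4^T\le 5^{.89T}$ (valid as $4\le 5^{.89}$) gives $4^T(80T/d)^{.89T}\le(400T/d)^{.89T}$, as desired.

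I expect the main obstacle to be the combinatorial claim in the second paragraph --- once one sees that first occurrences of colors outside $A$ must be repeated (by the hypothesis $S\subseteq A$) and are therefore at most $T$ in number, the rest is routine bookkeeping, with $\beta=.05$ and the inequality $4\le 5^{.89}$ chosen precisely to produce the constants $.89T$ and $400$. A secondary subtlety: whether step $i$ sits at a good source is determined by the past, but whether it is cheap is not, so the union bound must range over candidate subsets $I$ rather than conditioning on the bad-vertex pattern in advance.
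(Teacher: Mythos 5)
Your proposal is correct and follows essentially the same route as the paper's proof: the same deterministic claim that goodness plus $S(\bW^1,\bW^2)\subseteq A$ forces at least $(1-2.2\beta)T=.89T$ steps that are ``cheap'' at a non-$\beta$-bad source (the paper calls these \emph{unusual} steps), the same per-step conditional bound $\frac{(i-1)+|A|}{\beta d}\le 80T/d$ via revealing edges in order, and the same $2^{2T}\cdot(80T/d)^{.89T}\le(400T/d)^{.89T}$ arithmetic. The only cosmetic difference is that you union-bound over candidate subsets $I$ of positions while the paper sums over the full unusualness pattern $u\in\{0,1\}^{2T}$; these are equivalent.
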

\begin{proof}
    For a given $\calE = \calE(W^1,W^2) = (e_1, \dots, e_{2T})$, write $e_i = (w_{i-1}, w'_i)$, where the edge is naturally directed from its walk.\footnote{Note that $w_i = w'_i$ for all $i$ except possibly $i = T$, wherein $w_T$ equals $w_0$ but not necessarily $w'_T$. These observations do not affect our argument; all we'll need is that conditioned on any $\bw_{i-1} = v$, the distribution of $\chi(\be_i)$ is uniform on a set of size $\deg(v)$.}
    
    For $1 \leq i \leq 2T$, let us say that the $i$th time step is \emph{unusual} if:
    \begin{enumerate}[label=\roman*.]
        \item $w_{i-1}$ is \emph{not} $\beta$-bad; yet,
        \item $\chi(e_i) \in \{\chi(e_1), \dots, \chi(e_{i-1})\} \cup A$ (i.e., $\chi(e_i)$ is either a ``repeat'' or is in the avoidance set).
    \end{enumerate}
    Now define the \emph{unusualness string} $U(\calE) \in \{0,1\}^{2T}$ to have $U(\calE)_i = 1$ iff the $i$th step of $\calE$ is unusual.

    We make the following claim:
    \begin{equation}
        \label{eqn:theclaim}
        W^1, W^2 \text{ are $\beta$-good, and } S(W^1, W^2) \subseteq A \quad\implies\quad |U(\calE)| \geq (1-2.2\beta)T = .89T
    \end{equation}
    (where $|{\cdot}|$ denotes Hamming weight).  To justify the claim, first suppose that $W^1, W^2$ have $S(W^1, W^2) \subseteq A$.
    Then at most half of all time steps $i \in [2T]$ can fail condition~(ii).  (For each step color that is not in~$A$ and not a repeat, a future step must repeat this color.) Thus at least $T$ time steps satisfy~(ii).  Suppose now that $W^1, W^2$ are also $\beta$-good.  Then by definition, at most $2.2 \beta T$ time steps fail condition~(i).

    Having justified \Cref{eqn:theclaim}, we may conclude
    \begin{equation} \label{eqn:put}
        \Pr[\bcalC] = \sum_{u \in \{0,1\}^{2T}} \Pr[\bcalC \text{ and } U(\bcalE) = u] =  \sum_{|u| \geq
        .89T}\Pr[\bcalC \text{ and } U(\bcalE) = u] \leq \sum_{|u| \geq 
        .89T}\Pr[U(\bcalE) = u].
    \end{equation}
    Let us now fix some $u \in \{0,1\}^{2T}$ with $|u| \geq 
    .89T$, and bound $\Pr[U(\bcalE) = u]$.  
    Pick any time step $j$ for which $u_j = 1$.  We claim that
    \begin{equation}
        \label{eqn:confirm}
        \Pr[U(\bcalE)_{j} = u_{j} = 1 \mid U(\bcalE)_{< j} = u_{<j}] \leq \frac{j + |A|}{\beta d} \leq \frac{2T + 2T}{.05 d} = 80T/d.
    \end{equation}
    To see this, imagine the steps of the walks $\bW^1, \bW^2$ being drawn in succession and suppose we condition on $U(\bcalE)_{< j} = u_{<j}$.  Certainly this could greatly distort the distribution of the first $j-1$ steps of the random walk(s).  However, even under this conditioning, for any outcome $\bw_{j-1} = v$, the distribution of $\bw'_i$ is still uniformly random from among $v$'s neighbors. That is, $\chi(\be_i)$ is uniformly random on a set of size~$\deg(v)$.
    Now in order for $U(\bcalE)_j = 1$ to occur, we must have condition~(i) of unusualness: that $\bw_{j-1} = v$ is not $\beta$-bad; i.e., that $\deg(v) \geq \beta d$.  But we also must have condition~(ii), that $\chi(\be_j)$ falls into $\{\chi(\be_1), \dots, \chi(\be_{j-1})\} \cup A$, a set of cardinality at most $j + |A|$. But if $\chi(\be_j)$ is uniform on a set of size at least $\beta d$, the chance of this is indeed at most $\frac{j+|A|}{\beta d}$, confirming \Cref{eqn:confirm}.

    Now using $\Pr[U(\bcalE) = u] = \prod_{i} \Pr[U(\bcalE)_i = u_i \mid U(\bcalE)_{< i} = u_{< i}]$, we conclude from \Cref{eqn:confirm} that
    \begin{equation}
        \Pr[U(\bcalE) = u] \leq \parens*{80T/d}^{|u|}.
    \end{equation}
    (We may assume the base of the exponent is at most~$1$, else there's nothing to prove.) 
    Putting this into \Cref{eqn:put} completes the proof:
    \[
        \Pr[\bcalC] \leq 2^{2T} (80d/T)^{.89T} \leq (400d/T)^{.89T}. \qedhere
    \]
\end{proof}

Putting together \Cref{lem:A,lem:B,lem:C}, we conclude the following:
\begin{theorem}\label{thm:find walk}
    There is a universal constant $c < \infty$ such that the following holds.
    Let $G = (V,E,\chi)$ be an $N$-vertex edge-colored graph of average degree~$d$, where we assume 
    \[
        T \ln\parens*{\frac{d}{400T}} \geq \frac{1}{.89} \ln N + c
    \]
    (and $N > 40$).
    Let $A$ be a set of colors with $|A| \leq 2T$.
    Suppose we choose $\bv \sim \pi$ and then run \textsc{Find-Good-Closed-Walk}$(G,\bv, T)$ (with its $C_1 = 200$).  
    Then with probability at least~$.02$, it returns a closed walk $\bcalE$ with $S(\bcalE) \not \subseteq A$.
\end{theorem}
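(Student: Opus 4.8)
The plan is to chain together \Cref{lem:A,lem:B,lem:C} with the fixed choices $\beta = .05$, $\eps = .04$, $C_1 = 200$. Fix a start vertex $v_0$, let $\bW_1,\dots,\bW_L$ (with $L = \lceil 200\sqrt N\rceil$) be the i.i.d.\ walks drawn by \Cref{alg:find one walk}, let $E_2$ be the event that Step~2 succeeds (some $i\neq j$ with $\bW_i,\bW_j$ both $\beta$-good and ending at a common vertex), and let $E_3$ be the ``bad'' event that \emph{some} pair $i\neq j$ with $\bW_i,\bW_j$ both $\beta$-good has $S(\bW_i,\bW_j)\subseteq A$. The key observation is that $E_2 \wedge \neg E_3$ already forces success: if $E_2$ holds the algorithm outputs $(\bW_i,\bW_j^{\rev})$ for some both-good pair, which is a closed walk whose two halves are good, and whose singleton-color set equals $S(\bW_i,\bW_j)$ (reversing the second walk leaves the multiset of edge colors unchanged), and that set is not contained in $A$ once $\neg E_3$ holds. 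So it suffices to lower bound $\Pr[E_2\wedge\neg E_3]$.

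For the start vertex, I would apply \Cref{lem:B} with $\beta=.05$ to get $\Pr_{\bv\sim\pi}[\bv \text{ is a }.04\text{-good start}] > .04$, and condition on $\bv = v_0$ being such a vertex. For such $v_0$ the hypothesis of \Cref{lem:A} holds with $\eps=.04$; one checks that $N>40$ gives $L = \lceil 200\sqrt N\rceil \geq 1281 \geq \max\{.8\eps C_1^2,\,30/\eps\}$, so \Cref{lem:A} applies and yields $\Pr[\neg E_2 \mid \bv=v_0] \leq 2\exp(-\eps^2 C_1^2/10) = 2e^{-6.4} < .01$.

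To bound $\Pr[E_3 \mid \bv = v_0]$ for \emph{any} $v_0$, I would union bound over the $\binom{L}{2}$ pairs: for each fixed $i\neq j$ the walks $\bW_i,\bW_j$ are two independent draws from $\walk^T_{v_0}$, so \Cref{lem:C} (with $\beta=.05$ and using $|A|\leq 2T$) bounds the corresponding probability by $(400T/d)^{.89T}$. The hypothesis $T\ln(d/(400T)) \geq \tfrac{1}{.89}\ln N + c$ rearranges (multiply by $.89$, exponentiate) to $(400T/d)^{.89T} \leq e^{-.89c}/N$, and $\binom{L}{2}\leq 2C_1^2 N$, so $\Pr[E_3\mid\bv=v_0] \leq 2C_1^2\,e^{-.89c} = 80000\,e^{-.89c} < .01$ provided $c$ is a large enough universal constant (e.g.\ $c = 18$). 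Combining: for a good start vertex $v_0$, $\Pr[\text{success}\mid\bv=v_0] \geq \Pr[E_2\mid\bv=v_0] - \Pr[E_3\mid\bv=v_0] > .98$, and therefore $\Pr[\text{success}] > .04\cdot .98 > .02$.

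The main obstacle is the selection bias in Step~2: the pair of walks actually returned is not a fresh independent pair but one chosen precisely to collide at its endpoint, so \Cref{lem:C} cannot be applied to ``the returned pair'' directly. The resolution is to bound the \emph{existential} event $E_3$ by a union bound over all $\binom{L}{2}$ ordered-independent pairs, and to note that $\neg E_3$ then controls whichever both-good pair is returned; this is exactly why \Cref{lem:C} was stated for an arbitrary fixed $v_0$ and fixed avoidance set rather than for the output of the algorithm. The only other point that needs care is the harmless-reversal remark above, which guarantees that the returned closed walk's singleton set is genuinely $S(\bW_i,\bW_j)$ and hence genuinely escapes $A$.
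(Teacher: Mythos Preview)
Your proposal is correct and takes essentially the same approach as the paper: both chain \Cref{lem:B} (to get a $.04$-good start with probability $>.04$), \Cref{lem:A} (to get a good-pair collision except with probability $<.01$), and a union bound over all $\binom{L}{2}$ pairs via \Cref{lem:C} (to rule out $S\subseteq A$ except with probability $<.01$), then combine. Your write-up is in fact more careful than the paper's on two points the paper leaves implicit---the verification of the numerical preconditions of \Cref{lem:A} from $N>40$, and the remark that the union bound over all pairs is what handles the selection bias of the returned pair.
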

\begin{proof}
    With $\beta = .05$, \Cref{lem:B} tells us that $\bv$ is a $.04$-good start with probability exceeding~$.04$.  Fix any such outcome $\bv = v_0$.
    Then, with $\eps = .04$ and $C_1 = 200$, \Cref{lem:A} tells us that \textsc{Find-Good-Closed-Walk}$(G,v_0, T)$ will return some closed walk $\bcalE = (\bW_i, \bW_j^\rev)$ with both $\bW_i, \bW_j$ being $\beta$-good, except with probability at most $2\exp(-.04^2 \cdot 200^2 / 10) \leq .01$. Finally, by union-bounding over all the at most $\binom{L}{2} \leq O(N)$ pairs of $\beta$-good $\bW_i, \bW_j$, \Cref{lem:C} tells us that $S(\bW_i, \bW_j) \subseteq A$ with probability at most
    $
        O(N) \cdot (400T/d)^{.89T} $, which is at most $.01$ provided the constant~$c$ is large enough.
     As $.04 - .01 -.01 = .02$, the proof is complete.
\end{proof}

\subsection{Finding Many Distinct Nontrivial Closed Walks}

Building on the previous section, we now consider an algorithm to produce many closed walks:

\begin{algorithm}[H]
\caption{\textsc{Find-Distinct-Nontrivial-Closed-Walks}$(G,\chi,\delta,T,R)$}\label{alg:find many walks}
    \KwData{An undirected graph $G=(V,E)$, an edge-coloring $\chi : E \to H$, failure parameter $\delta$, average degree $d$, and number of desired cycles $R$.}
    \For{$r=1,\ldots,R\coloneqq100000N^{\eps} + 100000\log(1/\delta)$}{
        Run \textsc{Find-Closed-Walk}$(G,\bv,T)$ for $\bv \sim \pi$\;
        If it produces a nontrivial closed walk $W^r$, and $\oddcolors_\chi(W^r) \ne \oddcolors_\chi(W^{r'})$ for all $r' < r$, add $W^r$ to the output stream.
    }
\end{algorithm}

\begin{lemma}\label{lem:find many walks}
    Let $(G = (V,E), \chi : E \to H)$ be an edge-colored graph.
    Fix a positive integer $T\geq 100$ and $\eps, \delta \in (0,0.1)$. Suppose that $G$ has average degree  $d\geq 300 N^{4/T}T$.
    Then with probability at least $1-\delta$, \textsc{Find-Distinct-Nontrivial-Closed-Walks}$(G,\chi,\delta,T,\eps)$
    outputs a list of $\lceil 10N^{\eps} \rceil$ nontrivial closed walks of length $\le 2T$,
    with $\oddcolors_\chi(W)$ distinct for each walk $W$.
\end{lemma}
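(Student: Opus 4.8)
The plan is to analyze the main loop of \Cref{alg:find many walks} one iteration at a time. Let $X_r\in\{0,1\}$ indicate that iteration $r$ adds a (new, nontrivial) walk to the output stream. I will first show that, conditioned on everything that happened in iterations $1,\dots,r-1$, one has $\Pr[X_r=1\mid\text{history}]\ge .02$ as long as fewer than $\lceil 10N^\eps\rceil$ walks have been collected so far; a Chernoff/stochastic-domination bound over the $R=100000N^\eps+100000\log(1/\delta)$ iterations then finishes the argument.

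For the per-iteration bound, fix $r$ and condition on the history; this determines the family $\mathcal{F}$ of sets $\oddcolors_\chi(\cdot)$ already output, and I may assume $|\mathcal{F}|<\lceil 10N^\eps\rceil$. Two structural facts drive everything. First, each member of $\mathcal{F}$ was produced by \Cref{alg:find one walk} as the odd-color set of a closed walk of length $\le 2T$, hence has size $\le 2T$. Second, for any closed walk $W$ one has $S(W)\subseteq\oddcolors_\chi(W)$ (a color that appears exactly once appears an odd number of times); consequently, if the call in iteration $r$ returns a walk $\bcalE$ with $S(\bcalE)\not\subseteq F$ for every $F\in\mathcal{F}\cup\{\emptyset\}$, then $\bcalE$ is nontrivial (apply this with $F=\emptyset$) and $\oddcolors_\chi(\bcalE)\notin\mathcal{F}$ (since $\oddcolors_\chi(\bcalE)=F$ would force $S(\bcalE)\subseteq F$), i.e.\ $X_r=1$. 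To lower-bound the probability of this, I would re-run the proof of \Cref{thm:find walk} with its single avoidance set replaced by a union bound over all of $\mathcal{F}\cup\{\emptyset\}$: by \Cref{lem:B} the start vertex $\bv\sim\pi$ is a $.04$-good start except with probability $<.96$; conditioned on such a start, \Cref{lem:A} (with $\eps=.04$, $C_1=200$) guarantees \Cref{alg:find one walk} returns a pair $(\bW_i,\bW_j^\rev)$ with both halves $.05$-good except with probability $\le .01$; and \Cref{lem:C}, applied once for each $F\in\mathcal{F}\cup\{\emptyset\}$ (legal since $|F|\le 2T$) and union-bounded over the $\le\binom{L}{2}=O(N)$ candidate pairs, shows that some returned pair has $S(\bW_i,\bW_j)\subseteq F$ for some such $F$ with probability at most $O(N)\cdot O(N^\eps)\cdot(400T/d)^{.89T}$. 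The degree hypothesis $d\ge 300N^{4/T}T$ is exactly what makes $(400T/d)^{.89T}$ polynomially small in $1/N$ with enough room to absorb the $O(N^{1+\eps})$ prefactor (here $\eps<.1$ is used), so this last quantity is at most $.01$. Hence $\Pr[X_r=1\mid\text{history}]\ge .04-.01-.01=.02$.

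It remains to aggregate. Since $\Pr[X_r=1\mid X_1,\dots,X_{r-1}]\ge.02$ on the event $\{\sum_{r'<r}X_{r'}<\lceil 10N^\eps\rceil\}$, a standard coupling with i.i.d.\ $\mathrm{Bernoulli}(.02)$ variables yields $\Pr[\sum_{r=1}^R X_r<\lceil 10N^\eps\rceil]\le\Pr[\mathrm{Bin}(R,.02)<\lceil 10N^\eps\rceil]$. As $.02R=2000N^\eps+2000\log(1/\delta)$ far exceeds $\lceil 10N^\eps\rceil$, a multiplicative Chernoff bound makes the right-hand side at most $\exp(-\Omega(N^\eps+\log(1/\delta)))\le\delta$. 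On the complementary event the output stream contains at least $\lceil 10N^\eps\rceil$ walks (truncate to the first $\lceil 10N^\eps\rceil$); by construction these are nontrivial closed walks, each of length $\le 2T$ since each has the form $(\bW_i,\bW_j^\rev)$ with $|\bW_i|=|\bW_j|=T$, and the filter inside the loop guarantees their odd-color sets are pairwise distinct. This proves the lemma.

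The step I expect to be the main obstacle is the distinctness guarantee while paying a union bound of size only $\sim N^\eps$: naively one would have to ``avoid'' the set $\bigcup_{F\in\mathcal{F}}F$, which can be far too large to feed into \Cref{lem:C}. The resolution is precisely the inclusion $S(\bcalE)\subseteq\oddcolors_\chi(\bcalE)$, which converts a collision into the much more demanding event $S(\bcalE)\subseteq F$ for a \emph{single} set $F$ of size $\le 2T$, so that $|\mathcal{F}|\le\lceil 10N^\eps\rceil$ invocations of \Cref{lem:C} suffice; this is also why the degree requirement has the somewhat unusual shape $d\ge 300N^{4/T}T$ --- it must defeat an $N^{1+\eps}$ union-bound loss rather than the $O(N)$ loss tolerated in \Cref{thm:find walk}. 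A secondary, purely bookkeeping, point is that conditioning on the earlier iterations does not disturb \Cref{lem:A,lem:B,lem:C}, since those lemmas hold uniformly over the (now history-determined) start vertex and avoidance sets and use only the fresh randomness of iteration $r$.
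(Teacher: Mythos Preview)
Your proposal is correct and follows essentially the same route as the paper: lower-bound the conditional success probability of each iteration by re-using the proof of \Cref{thm:find walk} with a union bound over the at most $\lceil 10N^\eps\rceil$ previously-found odd-color sets as avoidance sets, then apply a concentration inequality over the $R$ iterations (you use a Bernoulli coupling plus Chernoff, the paper uses Azuma--Hoeffding; these are interchangeable here). Your write-up is in fact more explicit than the paper's about the crucial point---the inclusion $S(\bcalE)\subseteq\oddcolors_\chi(\bcalE)$ that lets you test distinctness against each previous set $F$ individually (so that \Cref{lem:C} applies with $|A|\le 2T$), rather than against their union---whereas the paper invokes \Cref{thm:find walk} somewhat elliptically for this step.
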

\begin{proof}
    For all $r\in[R]$, let $\calE_k$ be the event that iteration $R$ of \Cref{alg:find many walks} produces a closed walk that is nontrivial and no previous closed walk uses the same set of edges an odd number of times. Assuming we have not yet found $\lceil 10N^{\eps} \rceil$ distinct nontrivial closed walks, the probability that this event occurs is at least the probability that \textsc{Find-Closed-Walk}$(G,\chi,T)$ returns a closed walk minus the probability that the walk is trivial or that exact set of edges occurred an odd number of times in one of the previous $\leq \lceil 10N^{\eps} \rceil \leq N^{0.11}$ walks. The first event occurs with probability at least $0.01$, while the second event occurs with probability at most $(1+N^{0.11})\cdot 0.01/N$ by \Cref{thm:find walk}, and so 
    \[
        \Pr \sbra{ \text{$r$-th iteration outputs a new closed walk}} \geq 0.01 - (1+N^{0.11})\cdot 0.01/N \geq 0.005.
    \]
    Now, let $X_i\in\{0,1\}$ be the indicator that the $r$'th step produces a new closed walk, or there are already $\lceil 10N^{\eps} \rceil$ closed walks. Then,
    $\E[X_i\mid X_{1:i-1}]\ge 0.005$.  Hence
    \[
        \mu \;=\;\E\Bigl[\sum_{i=1}^R X_i\Bigr]\;\ge\;0.005R
        \;=\;500\,N^\eps + 500\log(1/\delta).
    \]
    Since each $X_i-p$ is a submartingale with difference bounded in $[-p,1-p]\subset[-1,1]$, Azuma–Hoeffding gives for $t=490\,N^\eps + 490\log(1/\delta)$,
    \begin{multline*}
        \Pr\sbra{\sum_{i=1}^R X_i < \lceil 10N^\eps \rceil}
        \leq\Pr\sbra{\sum X_i \le \mu - t} \\
        \le \exp\!\Bigl(-\frac{t^2}{2R}\Bigr)
        = \exp\!\Bigl(-\frac{490^2(N^\eps+\log(1/\delta))^2}{200000 (N^\eps+\log(1/\delta)}\Bigr) \leq \exp(-N^\eps - \log(1/\delta)) \leq \delta.
    \end{multline*}
    Therefore, with probability at least $1-\delta$, the algorithm outputs $\lceil 10N^\eps \rceil$ distinct closed walks.
\end{proof}

\section{Polynomial Anticoncentration Theorem} \label{sec:polynomial}

    \begin{definition}[$\pi$-block-linearity]
        Fix integers $m,T$, let 
        \[
            P(x_1,\dots,x_m) = \sum_{S\subseteq[m]}a_S\prod_{i\in S}x_i
        \]
        be a multilinear polynomial, and let $\pi=(\pi_1,\dots,\pi_T)$ be a partition of $[m]$.
        For $S \subseteq [m]$, we say $\pi$ \emph{shatters} $S$ if $|S \cap \pi_j| \le 1$ for every $j \in [T]$.
        Define the block-linear restriction of $P$ to $\pi$ as
        \[
            P_{\pi}(x_1,\dots,x_m)
            \coloneqq\sum_{\substack{S\subseteq[m],\pi\text{ shatters } S}}a_S\prod_{i\in S}x_i.
        \]
        We say that $P$ is \emph{$\pi$-block-linear} if $P_{\pi}=P$.
    \end{definition}

    We first show that a \emph{random} block-linear restriction $P_{\bm{\pi}}$ of a polynomial $P$ with all nonzero coefficients equal to 1 has at least some small probability of preserving a small fraction of terms from $P$.

    \begin{lemma}[Block‐restricted polynomial]\label{lem:block}
        Let $m,T\in\mathbb{N}$ with $T\mid m$, and let $P(x_1,\ldots,x_m)$ be a sum of $\mu$ multilinear monomials in the $x_i$'s of  degree at most~$T$.
        Then for a uniformly random equipartition $\bm{\pi} = (\bm{\pi}_1,\ldots,\bm{\pi}_T)$ of~$[m]$, 
        \[
            \Pr\sbra{\text{number of monomials in } P_{\bm{\pi}} \geq 0.5\cdot \mu \cdot \exp(-T)} \ge 1-\frac{1-\exp(-T)}{1-0.5\exp(-T)} \geq \exp(-2T).
        \]
    \end{lemma}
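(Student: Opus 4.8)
The statement to prove is Lemma~\ref{lem:block}: for a sum $P$ of $\mu$ multilinear monomials of degree $\le T$ in $m$ variables (with $T \mid m$), a uniformly random equipartition $\bm\pi$ into $T$ blocks of size $m/T$ shatters enough monomials with some (tiny) positive probability. The plan is to first compute, for a fixed monomial $S$ with $|S| = s \le T$, the probability that $\bm\pi$ shatters it, show this is at least $\exp(-T)$ (roughly; I expect something like $\prod_{i} (1 - \frac{i}{\text{something}})$ which one lower-bounds crudely), then take expectations over all $\mu$ monomials to get $\E[\#\text{monomials in }P_{\bm\pi}] \ge \mu \exp(-T)$, and finally convert this expectation lower bound into a probability lower bound via a reverse-Markov / Paley–Zygmund-type argument, using the trivial upper bound that $P_{\bm\pi}$ always has at most $\mu$ monomials.

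First I would fix a monomial indexed by $S \subseteq [m]$ with $|S| = s \le T$ and compute $\Pr_{\bm\pi}[\bm\pi \text{ shatters } S]$. Think of dropping the $s$ elements of $S$ one at a time into the random equipartition; element $j$ (for $j = 1, \dots, s$) lands in a ``fresh'' block (one not yet containing an element of $S$) with probability at least $\frac{(T - j + 1)(m/T) - (j-1)}{m - (j-1)} \ge \frac{T - j + 1}{T} \cdot (\text{correction})$, or more cleanly $\ge 1 - \frac{(j-1)(m/T) + (j-1)}{m}$-type bound; the cleanest route is to note the probability is $\frac{\binom{T}{s} (m/T)^s \cdot (\text{stuff})}{\binom{m}{s}}$ exactly, but for a lower bound it suffices that each of the $s \le T$ placements avoids the $\le s-1 \le T-1$ previously-used blocks, giving a product of $s$ terms each at least $(1 - 1/T)$-ish, hence the whole thing is at least $(1 - 1/T)^{T} \ge$ something like $\exp(-2)$... but that's not $\exp(-T)$. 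I think the real bound wanted is cruder: each placement lands correctly with probability at least $1 - (s-1)/\ldots$, and the safe universal bound one should aim for is $\Pr[\text{shatter }S] \ge \exp(-T)$ for all $s \le T$, which follows from $\prod_{j=1}^{s}\big(1 - \frac{j-1}{T}\big) \ge \prod_{j=1}^{T}\big(1-\frac{j-1}{T}\big) = \frac{T!}{T^T} \ge$ — wait, that's $e^{-T}$-ish by Stirling, so indeed $\Pr[\text{shatter}] \gtrsim e^{-T}$; and the $0.5$ and the precise constants in the lemma are slack absorbed here.

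Having the per-monomial bound $p_S \ge \exp(-T)$, linearity of expectation gives $\E[\#\text{monomials}(P_{\bm\pi})] = \sum_S p_S \ge \mu \exp(-T)$. Now let $Y = \#\text{monomials}(P_{\bm\pi})$, so $0 \le Y \le \mu$ always and $\E[Y] \ge \mu\exp(-T) =: \mu p$. By the reverse Markov inequality (i.e., applying Markov to $\mu - Y \ge 0$): for any threshold $\theta < \E[Y]$,
\[
    \Pr[Y \ge \theta] = \Pr[\mu - Y \le \mu - \theta] \ge 1 - \frac{\mu - \E[Y]}{\mu - \theta} \ge 1 - \frac{\mu(1 - p)}{\mu - \theta}.
\]
Taking $\theta = 0.5\, \mu\, p = 0.5\,\mu\exp(-T)$ yields $\Pr[Y \ge 0.5\mu\exp(-T)] \ge 1 - \frac{1 - p}{1 - 0.5p} = 1 - \frac{1 - \exp(-T)}{1 - 0.5\exp(-T)}$, which is exactly the first inequality claimed. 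For the second inequality, I would verify $1 - \frac{1-q}{1-0.5q} = \frac{0.5q}{1 - 0.5q} \ge 0.5 q \ge \exp(-2T)$ where $q = \exp(-T)$ — the last step using $0.5 \ge \exp(-T)$ for $T \ge 1$, so $0.5 q \ge \exp(-T)\exp(-T) = \exp(-2T)$. That closes the chain.

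**Main obstacle.** The one genuinely fiddly point is pinning down the per-monomial shattering probability lower bound and making sure the constant $\exp(-T)$ (rather than some $\exp(-cT)$ with $c \ne 1$) is correct, since the whole lemma's constants are tuned to it; I would want the sequential ``balls into bins without collision among $S$-elements'' computation $\prod_{j=1}^{s}\frac{(T-j+1)(m/T) - (j-1)}{m - (j-1)}$ and a careful monotonicity argument that it is minimized at $s = T$ (and that the $m/T \ge 1$, i.e. $m \ge T$, hypothesis is exactly what keeps each factor nonnegative). Everything after that — linearity of expectation and reverse Markov — is routine.
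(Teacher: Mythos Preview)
Your proposal is correct and follows essentially the same approach as the paper: lower-bound the per-monomial shattering probability by $T!/T^T \ge \exp(-T)$, apply linearity of expectation, and then the reverse Markov inequality on $\mu - Y$ to get exactly $1 - \frac{1-\exp(-T)}{1-0.5\exp(-T)}$, followed by the same algebra for the $\exp(-2T)$ bound. The only cosmetic difference is that the paper computes the shattering probability by a symmetry trick (fix $\pi$, randomize $S$, and compare sampling with vs.\ without replacement) rather than your direct sequential calculation; note also that your displayed product should read $\prod_{j=1}^{s}\frac{(T-j+1)(m/T)}{m-(j-1)}$ (the extra $-(j-1)$ in your numerator is spurious, though it only makes your expression smaller so your bound survives).
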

    \begin{proof}
        We will show, for each fixed monomial $x^S \coloneqq \prod_{i \in S} x_i$ in~$P$, that its probability of appearing in $P_{\bm{\pi}}$ is at least $\exp(-T)$.
        Thus the expected fraction of $P$'s monomials appearing in $P_{\bm{\pi}}$ is at least $\exp(-T)$, and then the lemma follows from a Markov-like argument on the number of monomials that \textit{do not appear}.

        Monomial $x^S$ appears in $P_{\bm{\pi}}$ iff $\bm{\pi}$ shatters $S$. The probability of this only depends on $|S|$; it is clearly minimized when $|S| = T$, so we henceforth assume this.  Also, we may equivalently fix an equipartition $\pi$, and then compute the probability that a uniformly random subset $\bS \subseteq [m]$ of cardinality~$T$ is shattered by $\pi$.

        Suppose for the moment that the $T$ elements of $\bS$ were drawn randomly with replacement, rather than without.  Then we would  have
        \[
            \Pr[|\bS \cap \pi_i| = 1 \ \forall i] = \frac{T}{T} \cdot \frac{T-1}{T} \cdot \frac{T-2}{T} \cdots \frac{2}{T} \cdot \frac{1}{T} = \frac{T!}{T^T} \geq 2\exp(-T).
        \]
       But it's easy to see that $\bS$ being drawn \emph{without} replacement only increases this probability.
    \end{proof}

\begin{lemma}\label{lem:linear anticoncentration}
    Let $\bx_1, \dots, \bx_n$ be independent ``nice'' random variables, where ``nice'' means that each $\bx_i$ is mean-zero, supported on $[-2,2]$, and with pdf bounded by~$1$.
    Let $\bL = c_1 \bx_1 + \cdots + c_n \bx_n$, where $c_1, \dots, c_n \in \R$.
    Then for all $0 < \eps < 1/2$, we have $\Pr[|\bL - 1| \leq \eps] \leq \Ckr \eps\sqrt{\log(1/\eps)}$, where $1.2<\Ckr < \infty$ is a universal constant.
\end{lemma}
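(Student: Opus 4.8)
The plan is to run a ``win--win'' argument controlled by the quantity $S := \sum_i c_i^2$, which equals the variance of $\bL$ up to a constant factor: $\E[\bL]=0$ and $\mathrm{Var}[\bL]=\sum_i c_i^2\,\mathrm{Var}[\bx_i]$, where each $\mathrm{Var}[\bx_i]\in[\tfrac1{12},4]$ --- the upper bound because $\bx_i$ is supported on $[-2,2]$, and the lower bound by a bathtub-principle computation (the smallest possible variance of a mean-zero distribution with density at most $1$ is $\tfrac1{12}$, attained by $\Unif([-\tfrac12,\tfrac12])$). Write $\Lambda := \log(1/\eps)$, so that $\Lambda \ge 1$ and $\eps = 2^{-\Lambda}$, and fix the threshold $\tau := 1/(32\ln 2 \cdot \Lambda)$; I split on whether $S < \tau$ (``small variance'') or $S \ge \tau$ (``large variance'').

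In the small-variance case, $\bL$ concentrates tightly about its mean $0$ and is therefore far from $1$. Since $\eps \le \tfrac12$, the event $|\bL - 1| \le \eps$ forces $\bL \ge \tfrac12$; as the summands $c_i\bx_i$ are independent, mean-zero, and each confined to an interval of length $4|c_i|$, Hoeffding's inequality gives $\Pr[\bL \ge \tfrac12] \le \exp(-1/(32S)) < \exp(-\Lambda\ln 2) = \eps \le \Ckr\,\eps\sqrt{\Lambda}$. In the large-variance case, $\bL$ is spread out and so cannot be concentrated in \emph{any} short interval, in particular one about $1$: each $\bx_i$ has density at most $1$, hence $c_i\bx_i$ has density at most $1/|c_i|$, and standard bounds on the concentration function (equivalently, maximal density) of a sum of independent random variables --- the Kolmogorov--Rogozin-type inequalities of \cite{miroshnikov1980inequalities} and the density-of-a-sum bounds of \cite{bobkov2014bounds} --- give that $\bL$ has density at most $O(1/\sqrt{\sum_i c_i^2}) = O(1/\sqrt{S})$. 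Integrating over $[1-\eps, 1+\eps]$ yields $\Pr[|\bL - 1| \le \eps] \le O(\eps/\sqrt{S}) \le O(\eps/\sqrt{\tau}) = O(\eps\sqrt{\Lambda})$. Combining the two cases gives $\Pr[|\bL-1|\le\eps] \le \Ckr\,\eps\sqrt{\log(1/\eps)}$ for a universal constant $\Ckr$, which one reads off by tracking the explicit constants in Hoeffding's bound, in $\tau$, and in the density bound.

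The only non-elementary ingredient is the density-of-a-sum estimate used in the large-variance case; the rest is Hoeffding plus bookkeeping. I expect the single genuine subtlety to be invoking that estimate in exactly the form needed: the crucial feature is that the density of $\bL$ degrades like $1/\sqrt{\sum_i c_i^2}$ rather than merely like $1/\max_i |c_i|$ (the latter being far too weak when the mass of the coefficient vector is spread over many coordinates), and it is precisely this quadratic-mean scaling that forces the threshold $\tau \asymp 1/\log(1/\eps)$ and hence produces the $\sqrt{\log(1/\eps)}$ factor (rather than $\log(1/\eps)$) in the conclusion. One should also double-check that the cited estimate applies in our setting verbatim --- finitely many summands, each with a density bounded by $1$ and with bounded support --- but this is exactly the regime those results cover.
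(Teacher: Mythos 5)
Your proposal is correct and follows essentially the same route as the paper's proof: a win--win split on $\sum_i c_i^2$, with a subgaussian/Hoeffding tail bound showing $\Pr[\bL\ge\tfrac12]\le\eps$ in the small-variance case, and the Miroshnikov--Rogozin / Bobkov--Chistyakov density-of-a-sum bound (giving density $O(1/\sqrt{\sum_i c_i^2})$ after rescaling) in the large-variance case, which is exactly where the $\sqrt{\log(1/\eps)}$ factor arises. The explicit threshold $\tau\asymp 1/\log(1/\eps)$ and the Hoeffding constants you track are consistent with the paper's (less explicit) version of the same argument.
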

\begin{remark}
    It is likely the bound can be improved to~$O(\eps)$.
    Also, if the $\bx_i$'s are particularly nice --- say, the sum of a bounded random variable and a standard Gaussian, which would suffice for us --- then the below proof becomes elementary.
\end{remark}
\begin{proof}
    Let $\sigma = \sqrt{\sum_i c_i^2}$.
    Since the $\bx_i$'s are bounded in $[-2,2]$, they are subgaussian with variance proxy $O(1)$ (where here and throughout the $O(\cdot)$ hides only universal constants). Therefore $\bL$ is subgaussian with variance proxy $O(\sigma^2)$.  It follows that for some $c > 0$, if $\sigma < c/\sqrt{\log(1/\eps)}$, we already have $\Pr[\bL \geq 1/2] \leq \eps$.  Thus we may henceforth assume $\sigma \geq c/\sqrt{\log(1/\eps)}$.  Now we replace $\bL$ by $\bL/\sigma$, so that $\sum_i c_i^2 = 1$ and we need to upper-bound the probability that $\bL$ is in a certain interval of width $2\eps/\sigma \leq O(\eps \sqrt{\log(1/\eps)})$.  It thus suffices to show $\bL$ has pdf bounded by~$O(1)$. But this follows directly from~\cite[Corollary~1]{miroshnikov1980inequalities}; or, even more directly, from~\cite[Corollary~2]{bobkov2014bounds}. (It is also immediate if the $\bx_i$'s are mixtures of Gaussians, as mentioned.)
\end{proof}
    
\begin{corollary}\label{cor:affine anticoncentration}
    Let $\bx_1, \dots, \bx_n$ be independent random variables that are mean $\rho$, supported on $[-1,1]$, and with pdf bounded by~$1$.
    Let $L$ be an affine polynomial on $n$ variables.
    Then for all $0 < \eps < 1/2$, we have $\Pr[|L(\bx_1,\ldots,\bx_n)| \leq \eps L(\rho,\ldots,\rho)] \leq \Ckr\eps\sqrt{\log(1/\eps)}$ (where $\Ckr$ is the constant from \Cref{lem:linear anticoncentration}).
\end{corollary}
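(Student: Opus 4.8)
The plan is to reduce immediately to \Cref{lem:linear anticoncentration} by recentering the variables, so that the affine polynomial becomes a constant plus a linear form in mean-zero ``nice'' variables. First I would write $L(x_1,\dots,x_n) = c_0 + \sum_{i=1}^n c_i x_i$ and set $\mu \coloneqq L(\rho,\dots,\rho) = c_0 + \rho\sum_i c_i$. The degenerate cases $\mu \le 0$ are trivial: the event $|L(\bx)| \le \eps\mu$ is then empty or (when $\mu = 0$ and $L$ has a nonzero linear part) a probability-zero event, so the bound holds unless $L$ is identically zero, which we regard as excluded (consistent with the intended application, where $L(\rho,\dots,\rho)$ is a sum of powers $\rho^{|F|}$ and hence strictly positive). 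So it suffices to treat $\mu > 0$.

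Next I would set $\bm{y}_i \coloneqq \bx_i - \rho$ for each $i$. Since $0 \le \rho \le 1$ and each $\bx_i$ has mean $\rho$, is supported on $[-1,1]$, and has pdf bounded by $1$, each $\bm{y}_i$ is mean-zero, is supported on $[-1-\rho,\,1-\rho] \subseteq [-2,2]$, and has pdf bounded by $1$ (translating a random variable does not change the supremum of its density). Thus the $\bm{y}_i$ are ``nice'' in the sense of \Cref{lem:linear anticoncentration}. Then I would compute
\[
  L(\bx_1,\dots,\bx_n) \;=\; c_0 + \sum_{i=1}^n c_i(\bm{y}_i + \rho) \;=\; \mu + \sum_{i=1}^n c_i \bm{y}_i ,
\]
so that, dividing by $\mu > 0$ and setting $\bL \coloneqq -\sum_{i=1}^n (c_i/\mu)\,\bm{y}_i$ — a linear form with real coefficients in the nice variables $\bm{y}_i$ — we get $\bL - 1 = -L(\bx_1,\dots,\bx_n)/\mu$. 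Therefore
\[
  \Pr\big[\,|L(\bx_1,\dots,\bx_n)| \le \eps\,\mu\,\big]
  \;=\; \Pr\big[\,|\bL - 1| \le \eps\,\big]
  \;\le\; \Ckr\,\eps\sqrt{\log(1/\eps)} ,
\]
where the last step is exactly \Cref{lem:linear anticoncentration} (whose hypothesis $0 < \eps < 1/2$ matches ours). Recalling $\mu = L(\rho,\dots,\rho)$, this is precisely the claimed bound.

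I do not expect any real obstacle here; the argument is essentially bookkeeping, and all the genuine analytic content (the subgaussian tail bound and the pdf bound from \cite{miroshnikov1980inequalities,bobkov2014bounds}) lives in \Cref{lem:linear anticoncentration}. The only two points that require a moment's care are (i) checking that recentering by $\rho$ keeps each variable's support inside $[-2,2]$ — this is where the hypothesis $\rho \in [0,1]$ is used — and (ii) correctly absorbing the constant term $c_0$ into the ``$1$'' of \Cref{lem:linear anticoncentration}, which is exactly the content of the identity $L(\bx) = \mu + \sum_i c_i \bm{y}_i$ together with the rescaling by $\mu$.
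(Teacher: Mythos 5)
Your proof is correct and follows essentially the same route as the paper's: recenter by $\rho$ to obtain a mean-zero linear form in nice variables, divide by $L(\rho,\dots,\rho)$, and invoke \Cref{lem:linear anticoncentration} with the sign chosen so that the event becomes $|\bL-1|\le\eps$. Your write-up is just a more explicit version (spelling out the coefficients, the niceness checks, and the degenerate case $\mu\le 0$) of the paper's one-line identity $L(x)=L(x-\rho)+L(\rho,\dots,\rho)-L(0,\dots,0)$.
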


\begin{proof}
    By affineness, $L(x_1,\ldots,x_n) = L(x_1-\rho,\ldots,x_n-\rho) + L(\rho,\ldots,\rho) - L(0,\ldots,0)$.
    Noting that $L(x_1-\rho,\ldots,x_n-\rho) - L(0,\ldots,0)$ is a homogeneous (linear) polynomial,
    by \Cref{lem:linear anticoncentration},
    \[
    \Pr \bracks*{ \abs { \frac{L(\bx_1-\rho,\ldots,\bx_n-\rho)-L(0,\ldots,0)}{-L(\rho,\ldots,\rho)}-1 } \le \epsilon } \le \Ckr\epsilon\sqrt{\log(1/\epsilon)}.
    \]
    This event is equivalent to the desired event.
\end{proof}

    Next, we extend the previous result to $\pi$-block-linear polynomials.

    \begin{proposition}[Anti‐concentration for block-linear polynomials]\label{prop:block linear anticoncentration}
        Let $\calD$ be a distribution on $\mathbb{R}$ with probability density $f_\calD\le1$, mean $\rho>0$ and support contained in $[-1,1]$. Fix integers $m,T\ge1$, a partition $\pi = (\pi_1,\ldots,\pi_T)$ of $[m]$, and let $P(x_1,\dots,x_m)$
        be any $\pi$-block-linear polynomial of degree $\le T$. Define
        \[
            \mu \coloneqq \underset{\bx_i\overset{\mathrm{iid}}{\sim}\calD}{\E} \sbra{P(\bx_1,\dots,\bx_m)} = P(\rho,\rho, \ldots, \rho).
        \]
        Then:
        \[
            \Pr\sbra{ \big|P(\bx_1,\dots,\bx_m)\big|<\Ckr^{-10T}|\mu| } \le 0.9^T.
        \]
    \end{proposition}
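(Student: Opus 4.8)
The plan is to expose the blocks $\pi_1,\dots,\pi_T$ one at a time and follow how the conditional mean of $P$ evolves, invoking \Cref{cor:affine anticoncentration} at each step. We may assume $\mu\neq 0$ (the claim is vacuous otherwise). For $0\le t\le T$ let $\mathcal{F}_t$ be the $\sigma$-algebra generated by the variables $\bx_i$ with $i\in\pi_1\cup\cdots\cup\pi_t$, and let $\mu_t$ denote the value of $P$ obtained by keeping $\bx_i$ for $i$ in the first $t$ blocks and replacing $\bx_i$ by $\rho$ for $i$ in the remaining blocks; thus $\mu_0=\mu$ and $\mu_T=P(\bx_1,\dots,\bx_m)$. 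Since $P$ is $\pi$-block-linear, each monomial uses at most one variable from block $\pi_t$, so conditioned on $\mathcal{F}_{t-1}$ the quantity $\mu_t$ is an \emph{affine} function of the block-$\pi_t$ variables whose value, with all of them set to $\rho$, is exactly $\mu_{t-1}$. Hence $\E[\mu_t\mid\mathcal{F}_{t-1}]=\mu_{t-1}$, so $(\mu_t)_{t=0}^T$ is a martingale, and \Cref{cor:affine anticoncentration} applied to this affine function (negated if $\mu_{t-1}<0$) yields the key per-block estimate: for all $\eps\in(0,\tfrac12)$,
\[
  \Pr\bigl[\,|\mu_t|\le\eps\,|\mu_{t-1}|\;\bigm|\;\mathcal{F}_{t-1}\,\bigr]\;\le\;\Ckr\,\eps\sqrt{\log(1/\eps)}.
\]
(Sending $\eps\to0$ and inducting on $t$ shows $\mu_t\neq0$ a.s.\ for all $t$, so the quantities below are well-defined.)

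Next I would convert the multiplicative drops into an additive random walk: set $Y_t\coloneqq\max\bigl\{0,\ \log_{\Ckr}(|\mu_{t-1}|/|\mu_t|)\bigr\}\ge0$. Telescoping gives $\log_{\Ckr}(|\mu|/|\mu_T|)=\sum_{t=1}^T\log_{\Ckr}(|\mu_{t-1}|/|\mu_t|)\le\sum_{t=1}^T Y_t$, so it suffices to bound $\Pr\bigl[\sum_{t=1}^T Y_t>10T\bigr]$. The per-block estimate, taken with $\eps=\Ckr^{-a}$, gives the conditional tail bound $\Pr[Y_t>a\mid\mathcal{F}_{t-1}]\le\Ckr^{1-a}\sqrt{a\log\Ckr}$ for $a>\log_{\Ckr}2$ (and the trivial bound $1$ for smaller $a$). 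In particular each $Y_t$ is conditionally sub-exponential, so for every $s\in(0,1)$ there is a finite constant $\psi(s)$, depending only on $s$ and $\Ckr$, with $\E[\Ckr^{sY_t}\mid\mathcal{F}_{t-1}]\le\psi(s)$ a.s., $\psi(s)\to1$ as $s\to0^+$, and $\psi'(0^+)=(\ln\Ckr)\,C_0$, where $C_0$ is a universal constant bounding $\E[Y_t\mid\mathcal{F}_{t-1}]$. A routine calculus estimate of $\int_0^\infty\Pr[Y_t>a\mid\mathcal{F}_{t-1}]\,da$ via the tail bound shows $C_0<10$ for every $\Ckr\ge1.2$ (indeed $C_0\to1$ as $\Ckr\to\infty$); this is why the exponent in the statement is taken to be $10$.

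A standard Chernoff argument along the martingale then finishes. Iterating the conditional moment bound gives $\E\bigl[\Ckr^{s\sum_t Y_t}\bigr]\le\psi(s)\,\E\bigl[\Ckr^{s\sum_{t<T}Y_t}\bigr]\le\cdots\le\psi(s)^T$, so Markov's inequality yields $\Pr\bigl[\sum_t Y_t>10T\bigr]\le\Ckr^{-10Ts}\psi(s)^T=\bigl(\Ckr^{-10s}\psi(s)\bigr)^T$. It remains to choose $s\in(0,1)$ with $\Ckr^{-10s}\psi(s)\le0.9$: this quantity equals $1$ at $s=0$ and has logarithmic derivative $(\ln\Ckr)(C_0-10)<0$ there, so it dips below $1$ for small $s>0$, and a direct estimate of $\psi$ brings it down to $0.9$ for a suitable $s$ — this, together with the slack in $C_0<10$, is precisely what the constants $10$ and $0.9$ in the statement are chosen to accommodate.

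I expect the main obstacle to be conceptual, not technical: a na\"ive induction on $T$ cannot work, since the target failure probability $0.9^T$ decays geometrically while exposing each individual block contributes only a \emph{constant} probability of a large multiplicative drop, so one cannot afford a union bound over the $T$ blocks. One is therefore forced to exploit the light (sub-exponential) tail of the per-block log-loss through the martingale / moment-generating-function argument above. A secondary point requiring care is that \Cref{cor:affine anticoncentration} must be applied \emph{conditionally} on the earlier blocks; this is legitimate exactly because block-linearity leaves $P$ affine in each individual block, which is precisely the hypothesis that corollary needs.
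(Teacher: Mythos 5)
Your proposal is correct and follows essentially the same route as the paper: both expose the blocks one at a time, track the martingale of conditional means $\mu_t$ (the paper's $r_t$), convert the multiplicative drops into a sum of log-ratios, bound the conditional moment generating function of each increment via \Cref{cor:affine anticoncentration}, and finish with a product-of-conditional-MGFs plus Markov argument. The only differences are cosmetic — you parameterize the exponential moment by a free $s$ and sketch the final numeric verification, where the paper fixes the exponent $0.1$ (in base $2$) and computes $\E[2^{0.1\bs_t}]\le 1+0.04\Ckr$ explicitly to conclude $((1+0.04\Ckr)/\Ckr)^T<0.9^T$.
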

    \begin{proof} 
        Without loss of generality assume that $\pi_t=\left\{(t-1)\frac{m}{T}+1,\dots, t\cdot\frac{m}{T}\right\}$. Given fixed $x_1,\ldots,x_m \in \R$ and $t \in [T]$, define
        \begin{align*}
            r_t &= \Ex_{\by_{t\frac{m}{T}+1},\dots,\by_m\sim\calD}\sbra{P\pbra{x_1,\dots,x_{(t-1)\frac{m}{T}},\by_{(t-1)\frac{m}{T}+1},\dots,\by_m}}
            =P\pbra{x_1,\dots,x_{(t-1)\frac{m}{T}},\rho,\dots,\rho}.
        \end{align*}
        Note that $r_0$ is deterministically equal to $\mu$ (no dependence on $x$'s) and $r_T$ is $P(x_1,\ldots,x_m)$.

        Now, we consider what happens when we draw $\bx_1,\ldots,\bx_m \sim \calD$ i.i.d.,
        and therefore $\br_1,\ldots,\br_T$ are also random variables (depending on the $\bx$'s).
        We are trying to bound the probability of the following event, which can be described in several equivalent ways:
        \begin{multline*}
         \abs{P(\bx_1,\ldots,\bx_m)} < \Ckr^{-10T} |\mu| \iff |\br_T| < \Ckr^{-10T} |\br_0|  \\
        \iff -\log (|\br_T|/|\br_0|) > 10T \log\Ckr \iff \sum_{t=0}^{T-1} \bs_t > 10T \log\Ckr \iff 2^{0.1 \sum_{t=0}^{T-1} \bs_t} > \Ckr^T,
        \end{multline*}
        where we define $\bs_t \coloneqq -\log(|\br_{t+1}|/|\br_t|)$ and telescope $\sum_{t=0}^{T-1} \bs_t = -\log |\br_T|/|\br_0|$.
        Also note that 
        $\bs_t \ge \theta \iff |\br_{t+1}| \le 2^{-\theta} |\br_t|$.

        Now condition on $\bx_1,\ldots,\bx_{(t-1)\frac{T}m} = x_1,\ldots,x_{(t-1)\frac{T}m}$.
        This completely determines $\br_0,\ldots,\br_t = r_0,\ldots,r_t$ and therefore $\bs_0,\ldots,\bs_{t-1} = s_0,\ldots,s_{t-1}$.
        Further, $\br_{t+1}$ is now an affine polynomial in the variable
        $\bx_{t\frac{T}m+1},\ldots,\bx_{(t+1)\frac{T}m}$
        with mean $\br_t$.
        Hence by \Cref{cor:affine anticoncentration},
        $\Pr[s_t \ge \theta \mid x_1,\ldots,x_{(t-1)\frac{T}m}] \le \Ckr 2^{-\theta} \sqrt{\theta}$.
        For each $t\le T$, conditioning on the past and using the exponential tail bound of \Cref{cor:affine anticoncentration} gives
        \[
        \begin{aligned}
            \Ex\left[2^{0.1\bs_t}\mid x_1,\ldots,x_{(t-1)\frac{T}m} \right]
            &= \int_{0}^{\infty}
                 \Pr\sbra{ 2^{0.1\bs_t}>y\mid x_1,\ldots,x_{(t-1)\frac{T}m} }dy \\
              &\le 1 + \int_{1}^{\infty}
                 \Pr\sbra{\bs_t>10\log y \mid x_1,\ldots,x_{(t-1)\frac{T}m}  }dy\\
            &\le 1 + \Ckr \int_{1}^{\infty}
               2^{-10 \log y} \sqrt{10 \log y} \; dy
              \le 1 + 0.04 \Ckr.
        \end{aligned}
        \]
        Hence
        \[
            \Ex\left[2^{0.1(\bs_0+\cdots+\bs_{T-1})}\right]
            \leq \prod_{t=1}^T
            \left(
            \sup_{x_1,\ldots,x_{(t-1)\frac{T}m} }
            \Ex\left[2^{0.1\bs_t}\mid x_1,\ldots,x_{(t-1)\frac{T}m} \right]
            \right)
            \le (1+0.04\Ckr)^T.
        \]
        Finally, Markov’s inequality yields
        \[
        \begin{aligned}
            \Pr\sbra{2^{0.1(\bs_0+\cdots+\bs_{T-1})} > \Ckr^T} \le \frac{\mathbb{E}\left[2^{0.1(\bs_0+\cdots+\bs_{T-1})}\right]}
                       {(\Ckr)^T}
            \le\left(\frac{1 + 0.04\Ckr}{\Ckr}\right)^T
            <0.9^T
        \end{aligned}
        \]
        using $\Ckr > 1.2$, as desired.
    \end{proof}

\section{Distinguishing Using Even Covers}\label{sec:distinguisher}

In this section, we show how to take a sufficiently large set of even covers of a hypergraph $\calH$
and use this to distinguish between the $\Null$ and $\Planted$ cases.
Suppose we have a set $\calC$ of even covers of $\calH$.
We have at least $|\calC| \ge 10N^\eps$ even covers,
and each cover $C \in \calC$ has length $|C| \le 2T$.
Now associate to $\calC$ the polynomial
\begin{align*}
    P_{\calC}(x_1,\dots,x_m) \coloneqq \sum_{C \in \calC} \prod_{e\in C} x_c.
\end{align*}
Note that $P_{\calC}$ is multilinear and has all coefficients equal to $1$.
Further, if $(\bb_1,\ldots,\bb_m) \sim \plant_{\calH,z}^\rho$, then by \Cref{fact:even cover in planted distribution}, $P_\calC(\bb_1,\ldots,\bb_m) = \sum_{C \in \calC} \prod_{e \in C} \bm{\eta}_e$.
Hence $\Ex P_\calC(\bb_1,\ldots,\bb_m) = \sum_{C \in \calC} \rho^{|C|}$.
By our choice of $\rho$ and since $|C| \le 2T$, this will be noticeably different between the $\Null$ and $\Planted$ cases
($0$ in the former and noticeably positive in the latter).
However, we do not know how to prove concentration of $P_\calC(\bb_1,\ldots,\bb_m)$ in the $\Planted$ case directly.
Instead, we apply the anticoncentration machinery developed in the previous section
to a random block-linear restriction of the polynomial $P_\calC$.
(We only need a simple property of the random partition,
namely, that the corresponding block-linear restriction is sufficiently large.)

Our algorithm is as follows:

\begin{algorithm}[H]
\caption{\textsc{Distinguish-Using-Even-Covers}$(\calC,\{b_e\}_{e\in E})$}\label{alg:distinguisher}
    \KwData{A $k$-uniform hypergraph $\calH = (V,H)$, a set of even covers $\calC$ of $\calH$, each of length at most $2T$, with $|\calC| = \lceil 10N^\eps \rceil$, and hyperedge signs $(b_e \in \{\pm1\})_{e \in H}$.}
    \For{$s=1,\ldots,S\coloneqq10 e^{2T}$}{
        Sample a random partition $\bm{\pi}$ of $H$ into $2T$ equally-sized parts\;
        Let $\calC'\subseteq \calC$ be the set of even covers shattered by $\bm{\pi}$\;
        If $|\calC'|\geq N^{\epsilon}\cdot 0.1^{T}$ then break the loop\;
    }
    If the loop never broke, output \textsc{Fail}\;
    If $\calC'$ contains more than $N^\epsilon \cdot 0.1^T$ even covers, restrict it to $N^\epsilon \cdot 0.1^T$ arbitrary ones\;
    Define $P_{\calC'}(x_1,\ldots,x_m) \coloneqq \sum_{C \in \calC'} \prod_{e \in C} x_c$\;
    Sample $\bm{\xi}_1,\ldots, \bm{\xi}_m\sim \Unif([0,1])$ i.i.d. and define $\bx_j \coloneqq \bm{\xi}_j b_j$ for $j \in [m]$\;
    If $P_{\calC'}(\bx_1,\ldots,\bx_m)\geq N^{0.6\epsilon}$ then output $\Planted$, otherwise output $\Null$\;
\end{algorithm}
The correctness theorem for this algorithm is as follows:
\begin{lemma}\label{lem:distinguisher correct}
    Let $\calH$ be a hypergraph, $\ell \in \N$, $N = \binom{n}\ell$, and $\rho,\delta > 0$.
    Let \[ 4\log(1/\delta) < T < \frac{0.4\eps\log N}{\log\bigl(10 (\Ckr)^{40}\,\rho^{-4}\bigr)} . \]
    Then \Cref{alg:distinguisher}, given as input a set $\calC$ of nonempty even covers of size $\le 2T$ in $\calH$,
    with $|\calC| \ge 10N^\eps$,
    outputs a guess \textsc{Null} or \textsc{Planted}, correct with probability at least $1-\delta$.
\end{lemma}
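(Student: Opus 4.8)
The plan is to analyze Algorithm~\ref{alg:distinguisher} in three stages: (1) the loop finds a large shattered sub-collection $\calC'$ w.h.p., (2) in the \Planted\ case, $P_{\calC'}$ evaluated on the noised signs is large w.h.p.\ (via the block-linear anticoncentration \Cref{prop:block linear anticoncentration}), and (3) in the \Null\ case, $P_{\calC'}$ is small w.h.p.\ (via a second moment estimate using pairwise independence). Throughout I will use $N = \binom n\ell$, $\mu = P_{\calC'}(\rho,\dots,\rho)$, and the noised variables $\bx_j = \bm\xi_j b_j$ with $\bm\xi_j \sim \Unif([0,1])$.

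\begin{proof}
Write $\mu_0 = N^\eps \cdot 0.1^T$ for the target size of $\calC'$. We show each of the three claimed failure events has probability $\le \delta/3$ (after adjusting constants), and union bound.

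\textbf{Step 1: the loop succeeds.}
Fix any iteration $s$. For a fixed even cover $C \in \calC$ with $|C| \le 2T$, the probability that a uniformly random equipartition $\bm\pi$ of $H$ into $2T$ blocks shatters $C$ is minimized when $|C| = 2T$, and by the same ``without replacement only helps'' computation as in \Cref{lem:block} it is at least $(2T)!/(2T)^{2T} \ge 2e^{-2T}$. Hence $\E[|\bcalC'|] \ge 2 e^{-2T} |\calC| \ge 20 e^{-2T} N^\eps \ge 20 \mu_0$ (using $e^{-2T} \ge 0.1^T$ since $e^2 < 100$). By a Markov-type argument on the number of \emph{non}-shattered covers exactly as in \Cref{lem:block} (applied with the block count $2T$ in place of $T$, and noting $\mu_0 = \tfrac12 |\calC| e^{-2T}\cdot(\text{const})$), each iteration independently has $\Pr[|\bcalC'| \ge \mu_0] \ge e^{-4T}$ (a crude bound suffices). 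Over $S = 10 e^{2T}$ iterations --- wait, we need $S$ large enough: since the probability of success per round is at least $\Omega(e^{-4T})$, I would instead take $S \ge e^{4T}\log(1/\delta)$; but the algorithm as written uses $S = 10e^{2T}$, so I use the sharper per-round bound $\Pr[|\bcalC'| \ge \mu_0] \ge \tfrac12 e^{-2T}$ coming from Markov applied to the fraction of surviving monomials being at least $2e^{-2T}$ in expectation and hence at least $e^{-2T}$ with probability $\ge \tfrac{2e^{-2T}-e^{-2T}}{1-e^{-2T}} \ge e^{-2T}$, wait this needs the fraction bounded by $1$; the clean statement is $\Pr[\text{fraction} \ge e^{-2T}] \ge \frac{2e^{-2T} - e^{-2T}}{1 - e^{-2T}} \ge e^{-2T}$. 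Then over $S = 10 e^{2T}$ independent iterations the loop fails to break with probability at most $(1 - e^{-2T})^{10 e^{2T}} \le e^{-10} \le \delta/3$ when $\delta$ is not too small; for general $\delta$ one takes $S = \Theta(e^{2T}\log(1/\delta))$, and since $T > 4\log(1/\delta)$ we have $e^{2T} \ge (1/\delta)^8 \gg \log(1/\delta)$, so the stated $S = 10e^{2T}$ already suffices. After breaking, $\calC'$ is trimmed to exactly $\mu_0$ covers, and $P_{\calC'}$ is $\bm\pi$-block-linear of degree $\le 2T$.

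\textbf{Step 2: lower bound in the \Planted\ case.}
Condition on any successful $\calC'$ and $\bm\pi$. By \Cref{fact:even cover in planted distribution}, for $\bb \sim \plant^\rho_{\calH,z}$ we have $\prod_{e\in C}\bb_e = \prod_{e\in C}\bm\eta_e$, and after noising, $\prod_{e \in C}\bx_e = \prod_{e\in C}\bm\xi_e \bm\eta_e$. Each factor $\bm\xi_e\bm\eta_e$ is supported on $[-1,1]$, has pdf bounded by $1$ (product of $\Unif[0,1]$ with a $\pm1$ variable), and has mean $\rho \cdot \E[\bm\xi_e] = \rho/2 > 0$. Relabeling $\rho \leftarrow \rho/2$ throughout (this only changes the constant inside the log in the hypothesis on $T$, which is why the hypothesis has $\rho^{-4}$ with room to spare), \Cref{prop:block linear anticoncentration} applies with $\calD$ the law of $\bm\xi_e\bm\eta_e$: writing $\mu = P_{\calC'}((\rho/2),\dots,(\rho/2)) = \sum_{C \in \calC'}(\rho/2)^{|C|} \ge \mu_0 (\rho/2)^{2T}$, we get
\[
    \Pr\!\left[\, |P_{\calC'}(\bx)| < \Ckr^{-20T}\mu \,\right] \le 0.9^{2T} \le \delta/3,
\]
where the last inequality uses $2T > 8\log(1/\delta) > \log_{1/0.9}(3/\delta)$. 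Now $\Ckr^{-20T}\mu \ge \Ckr^{-20T}(\rho/2)^{2T} N^\eps 0.1^T = N^\eps \cdot \bigl(\Ckr^{-20}(\rho/2)^2 \cdot 0.1\bigr)^{T} \ge N^\eps \cdot \bigl(10\Ckr^{40}\rho^{-4}\bigr)^{-T}$ (a crude bound using $\Ckr > 1.2$), and the hypothesis $T < \frac{0.4\eps\log N}{\log(10\Ckr^{40}\rho^{-4})}$ gives $\bigl(10\Ckr^{40}\rho^{-4}\bigr)^{-T} > N^{-0.4\eps}$, so $\Ckr^{-20T}\mu > N^{\eps - 0.4\eps} = N^{0.6\eps}$. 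Hence with probability $\ge 1-\delta/3$ the algorithm outputs \Planted.

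\textbf{Step 3: upper bound in the \Null\ case.}
Here $\bb_e$ are i.i.d.\ uniform $\pm1$, so $\bx_e = \bm\xi_e\bb_e$ are i.i.d.\ with mean $0$ and $\E[\bx_e^2] = \tfrac13$. The monomials $\prod_{e\in C}\bx_e$ over distinct nonempty $C$ are mean-zero and pairwise orthogonal (two distinct even covers have a symmetric difference that is nonempty, so differ in some coordinate, and independence kills the cross term). Thus $\E[P_{\calC'}(\bx)] = 0$ and $\Var[P_{\calC'}(\bx)] = \sum_{C\in\calC'}\E\bigl[\prod_{e\in C}\bx_e^2\bigr] = \sum_{C\in\calC'}3^{-|C|} \le |\calC'| = \mu_0 = N^\eps 0.1^T \le N^\eps$. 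By Chebyshev, $\Pr[P_{\calC'}(\bx) \ge N^{0.6\eps}] \le N^\eps / N^{1.2\eps} = N^{-0.2\eps} \le \delta/3$ provided $N$ is large enough; for the general statement this is absorbed into the $n^{-\Omega(\eps\ell)}$ error of the main theorem, or one observes $T > 4\log(1/\delta)$ forces $N$ large via $T < 0.4\eps\log N / \log(\cdots)$.

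Combining the three steps by a union bound completes the proof.
\end{proof}

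\textbf{Main obstacle.} The technically delicate step is Step~2, specifically verifying that the noised factors $\bm\xi_e\bm\eta_e$ satisfy the hypotheses of \Cref{prop:block linear anticoncentration} (bounded pdf, support, positive mean) \emph{after} accounting for the mean-halving from the $\Unif[0,1]$ multiplier, and then carefully propagating the resulting bound $\Ckr^{-20T}\mu$ through the arithmetic to land above the threshold $N^{0.6\eps}$. All the constants ($0.1^T$, $\Ckr^{40}$, $\rho^{-4}$, the $0.4$ and $0.6$ exponents) must be balanced so that the hypothesis $T < \frac{0.4\eps\log N}{\log(10\Ckr^{40}\rho^{-4})}$ exactly gives the needed slack; this is routine but must be done with care. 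Steps~1 and~3 are standard (a Markov-type bound on a random equipartition and a second-moment/Chebyshev argument, respectively).
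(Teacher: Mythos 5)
Your proposal is correct and follows essentially the same route as the paper: the same Markov-type shattering argument (the paper's \Cref{lem:block,lem:found good partition}) for the loop, the same reduction of $\bx_e=\bm{\xi}_e b_e$ to $\bm{\xi}_e\bm{\eta}_e$ so that \Cref{prop:block linear anticoncentration} applies with the distribution $\calD_{\plant}$ of mean $\rho/2$ in the $\Planted$ case, and the same second-moment/Chebyshev bound using pairwise orthogonality of distinct covers in the $\Null$ case, with the constants balanced the same way against the hypothesis on $T$. The one loose point --- that the loop's failure probability is only a constant ($\approx e^{-10}$) rather than $O(\delta)$ --- is a looseness the paper's own \Cref{lem:found good partition} shares, so it is not a gap relative to the paper's argument.
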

We will use the following lemmas in the proof of \Cref{lem:distinguisher correct}:

\begin{lemma}\label{lem:found good partition}
    The for loop in \Cref{alg:distinguisher} finds $\calC'$ that causes the loop to break with probability at least $0.99$.
\end{lemma}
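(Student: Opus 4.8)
The plan is to bound the probability that a single iteration of the for loop fails to find a large enough shattered subset, and then amplify over the $S = 10 e^{2T}$ iterations. The key input is \Cref{lem:block}: for a uniformly random equipartition $\bm\pi$ of $H$ into $2T$ parts, the number of even covers in $\calC$ shattered by $\bm\pi$ is at least $0.5 \cdot |\calC| \cdot \exp(-2T) \ge 0.5 \cdot 10 N^\eps \cdot \exp(-2T) = 5 N^\eps \exp(-2T)$ with probability at least $\exp(-4T)$. (We apply \Cref{lem:block} with the monomial count $\mu = |\calC|$ and degree bound $2T$ in place of $T$, which is legitimate since every cover in $\calC$ has size $\le 2T$; the conclusion gives success probability $\exp(-2 \cdot 2T) = \exp(-4T)$ and shattered-fraction $0.5\exp(-2T)$.) I need to check that this threshold $5 N^\eps \exp(-2T)$ is at least the breaking threshold $N^\eps \cdot 0.1^T$ in \Cref{alg:distinguisher}; since $\exp(-2T) = (e^{-2})^T \ge (0.1)^T$ (as $e^{-2} \approx 0.135 > 0.1$), indeed $5 N^\eps \exp(-2T) \ge N^\eps 0.1^T$, so whenever the \Cref{lem:block} event holds the loop breaks.

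First I would state: in each iteration, independently, $\Pr[\text{this iteration's } \bm\pi \text{ causes a break}] \ge \exp(-4T)$. Wait — I should be careful here, because \Cref{alg:distinguisher} samples a partition of $H$ into $2T$ parts, and \Cref{lem:block} is stated for partitions of $[m]$ into $T$ parts with $T \mid m$; here the roles match with $m = |H|$ and "$T$" $= 2T$, and the hypothesis is that $2T \mid |H|$ (a divisibility assumption one can arrange by padding $H$ with dummy hyperedges that appear in no cover, or simply assume without loss of generality). So each iteration succeeds with probability $\ge \exp(-4T)$. Actually, re-examining: the bound I want is that the number of shattered covers is $\ge N^\eps 0.1^T$, and \Cref{lem:block}'s "$0.5 \cdot \mu \cdot \exp(-T)$" with "$T$"$=2T$ reads $0.5 \cdot 10N^\eps \cdot \exp(-2T)$, which as noted exceeds $N^\eps 0.1^T$. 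Good.

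Then the amplification: the $S$ iterations use independent partitions, so
\[
\Pr[\text{loop never breaks}] \le (1 - \exp(-4T))^{S} \le \exp(-S \exp(-4T)).
\]
Hmm, but $S = 10 e^{2T}$ and $\exp(-4T)$ make $S\exp(-4T) = 10 e^{-2T}$, which is small, not large — that would give a useless bound. So either $S$ should be $10 e^{4T}$, or \Cref{lem:block}'s success probability is really $\exp(-2T)$ rather than $\exp(-4T)$ for the relevant regime. Looking again at \Cref{lem:block}: it says probability $\ge \exp(-2T)$ for degree-$\le T$ polynomials. In our application the covers have size $\le 2T$, but if we instead note that \Cref{alg:distinguisher} is designed so that the \emph{relevant} degree parameter matches, perhaps the intended reading is that we invoke \Cref{lem:block} with its "$T$" equal to our $2T$... which gives $\exp(-4T)$. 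The cleanest fix in the writeup: take $S$ large enough, i.e. use $\Pr[\text{fail}] \le (1-q)^S \le e^{-qS}$ with $q = \exp(-4T)$ (or whatever \Cref{lem:block} yields) and $S \ge (4T + \text{something}) e^{4T}$, then invoke $T > 4\log(1/\delta)$ to conclude $e^{-qS} \le 0.01$ (or $\le \delta$). I would therefore present the proof as: (i) lower-bound the per-iteration success probability $q$ via \Cref{lem:block} and the arithmetic $e^{-2} > 0.1$; (ii) independence across iterations gives failure probability $\le (1-q)^S \le e^{-qS}$; (iii) plug in the value of $S$ and simplify to get $\le 0.01$.

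The main obstacle is purely bookkeeping: making sure the numerical constants line up — specifically that $S = 10e^{2T}$ (as written in \Cref{alg:distinguisher}) is actually large enough given the true per-iteration success probability coming out of \Cref{lem:block}, and that the shattered-set threshold $N^\eps 0.1^T$ is genuinely below the guarantee $0.5 \cdot 10 N^\eps \exp(-cT)$ for the relevant $c$. If $S = 10e^{2T}$ and $q = \exp(-2T)$ (reading \Cref{lem:block}'s bound as $\exp(-2T)$ by taking its "$T$" $= T$ and noting the covers, while of size $\le 2T$, can be truncated/handled so the block-linear degree is $\le T$), then $qS = 10$ and $e^{-qS} = e^{-10} < 0.01$, which works cleanly. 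So the crux is just correctly matching the degree parameter between \Cref{lem:block} and \Cref{alg:distinguisher}; everything else is a one-line Markov/independence argument.
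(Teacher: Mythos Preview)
Your approach is exactly what the paper intends (its entire proof is the one line ``This follows from \Cref{lem:block}''): bound the per-iteration break probability via \Cref{lem:block}, then amplify over the $S$ independent iterations. Your arithmetic checking that $5N^\eps\exp(-2T)\ge N^\eps\cdot 0.1^T$ (since $e^{-2}>0.1$) is also correct.

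The place where you got stuck is real but has a clean resolution you missed. You correctly apply \Cref{lem:block} with its ``$T$'' equal to our $2T$ (the partition has $2T$ parts and the covers have size $\le 2T$), but then you read off only the weaker final bound $\exp(-2\cdot 2T)=\exp(-4T)$ for the per-iteration success probability. With $S=10e^{2T}$ that indeed gives $qS=10e^{-2T}$, which is useless. Your attempted fix of pretending the degree is $\le T$ is not legitimate. The actual fix is that \Cref{lem:block} states the sharper intermediate bound
\[
\Pr[\text{shattered count}\ge 0.5\mu\exp(-T)] \;\ge\; 1-\frac{1-\exp(-T)}{1-0.5\exp(-T)} \;=\; \frac{0.5\exp(-T)}{1-0.5\exp(-T)} \;\ge\; 0.5\exp(-T),
\]
and only afterwards weakens this to $\exp(-2T)$. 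Using the sharper form with ``$T$''$=2T$ gives per-iteration success probability $q\ge 0.5\exp(-2T)$, hence
\[
\Pr[\text{never break}] \le (1-0.5e^{-2T})^{10e^{2T}} \le \exp(-5) < 0.01.
\]
So your outline (i)--(iii) goes through once you use the unweakened probability bound from \Cref{lem:block}.
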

\begin{proof}
    This follows from \Cref{lem:block}.
\end{proof}

From now on, we condition on the event that \Cref{alg:distinguisher} did not fail, and therefore
found a partition $\pi$ of $[m]$ and a large enough subset of even covers $\calC'\subseteq\calC$ that are $\pi$-block-linear (which we now also fix).
\Cref{lem:concentration about 0 null case} shows that in the $\Null$ case, the polynomial $P_{\calC'}$ concentrates about the value 0:

\begin{lemma}\label{lem:concentration about 0 null case}
    Fix $\epsilon,\delta > 0$ and assume that $4 \log(1/\delta) < T$.
    If $\bx_1,\dots,\bx_m$ are independent mean-zero random variables bounded between $-1$ and $1$, then:
    \begin{align*}
        \Pr_{\bx_1,\ldots,\bx_m} \sbra{ \abs{P_{\calC'}(\bx_1,\dots,\bx_m)} < N^{0.6\epsilon} } \ge 1-\delta.
    \end{align*}
\end{lemma}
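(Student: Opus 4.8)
The plan is to prove \Cref{lem:concentration about 0 null case} by a second-moment computation followed by Chebyshev's inequality; the one structural ingredient is that distinct nonempty even covers index \emph{orthogonal} monomials when the inputs are independent and mean-zero. Throughout, recall from \Cref{alg:distinguisher} that $\calC'$ is a set of \emph{distinct} nonempty even covers with $|\calC'|\le N^{\epsilon}\cdot 0.1^{T}$ (this size bound is exactly what the ``break'' condition in the for-loop together with the subsequent restriction step guarantee), and that $P_{\calC'}(x_1,\dots,x_m)=\sum_{C\in\calC'}\prod_{e\in C}x_e$.

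First I would compute the first two moments of $P_{\calC'}(\bx)$. Since each $\bx_e$ has mean zero and the $\bx_e$ are independent, for any nonempty $C$ we have $\Ex[\prod_{e\in C}\bx_e]=\prod_{e\in C}\Ex[\bx_e]=0$, so $\Ex[P_{\calC'}(\bx)]=0$. For the second moment, the key point is that distinct covers give orthogonal monomials: for $C\ne C'\in\calC'$ one has $\prod_{e\in C}\bx_e\cdot\prod_{e\in C'}\bx_e=\prod_{e\in C\cap C'}\bx_e^{2}\cdot\prod_{e\in C\symmdiff C'}\bx_e$, and since $C\symmdiff C'\ne\emptyset$ (the covers are distinct), taking expectations and using independence produces a product containing at least one factor $\Ex[\bx_e]=0$. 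Hence all cross terms vanish, and since $\Ex[\bx_e^{2}]\le 1$ for every $e$ (the only place boundedness in $[-1,1]$ is used),
\[
    \Ex[P_{\calC'}(\bx)^{2}]=\sum_{C\in\calC'}\prod_{e\in C}\Ex[\bx_e^{2}]\le|\calC'|\le N^{\epsilon}\cdot 0.1^{T}.
\]

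Finally, since $\Ex[P_{\calC'}(\bx)]=0$, Chebyshev's inequality gives
\[
    \Pr\sbra{|P_{\calC'}(\bx)|\ge N^{0.6\epsilon}}\le\frac{\Ex[P_{\calC'}(\bx)^{2}]}{N^{1.2\epsilon}}\le N^{-0.2\epsilon}\cdot 0.1^{T}\le 0.1^{T}\le 2^{-T},
\]
where the second inequality uses the moment bound above and the third uses $N\ge 1$. Because $\log$ is base~$2$ and $T>4\log(1/\delta)$, this is at most $2^{-4\log(1/\delta)}=\delta^{4}\le\delta$ (the last step valid since $\delta\le 1$, the statement being vacuous otherwise). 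Taking complements yields $\Pr[|P_{\calC'}(\bx)|<N^{0.6\epsilon}]\ge 1-\delta$.

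There is essentially no obstacle here; the proof is routine. The only two points worth a moment's care are: (i) the pairwise orthogonality of the monomials relies on \emph{all} of independence, mean-zeroness, and distinctness of the covers in $\calC'$ (so that every pairwise symmetric difference is nonempty and hence contributes a mean-zero factor); and (ii) one must carry along the factor $0.1^{T}$ in the bound $|\calC'|\le N^{\epsilon}\cdot 0.1^{T}$ rather than the weaker $|\calC'|\le N^{\epsilon}$, since it is precisely this factor that, together with the hypothesis $T>4\log(1/\delta)$, drives the Chebyshev estimate below $\delta$ (note that no upper bound on $T$ in terms of $N$ or $\delta$ is available within this lemma's hypotheses).
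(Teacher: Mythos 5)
Your proposal is correct and follows essentially the same route as the paper's proof: expectation zero by multilinearity and mean-zeroness, cross terms in the second moment vanishing because distinct nonempty even covers differ in at least one edge, the diagonal bounded by $|\calC'|\le N^{\epsilon}\cdot 0.1^{T}$, and Chebyshev combined with $T>4\log(1/\delta)$. The only (immaterial) difference is the final numerical step, where you pass through $0.1^{T}\le 2^{-T}\le\delta^{4}$ rather than the paper's $0.1^{T}\le\delta^{4\log 10}$.
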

\begin{proof}
    The polynomial $P_{\calC'}$ has constant term 0 since all of the even covers are nonempty. Since $P_{\calC'}$ is multilinear and the $\bx_i$ are independent with mean 0, we have that the expectation of $P_{\calC'}(\bx_1,\dots,\bx_m)$ is 0.
    Moreover, the second moment of $P_{\calC'}(\bx_1,\dots,\bx_m)$ is
    \begin{align*}
        \Ex_{\bx_1,\dots,\bx_m}\sbra{P_{\calC'}(\bx_1,\dots,\bx_m)^2} &= \Ex\sbra{\sum_{C,C' \in \calC'} \prod_{e\in C}\bx_e\prod_{e\in C'}\bx_e}\\
        &=\Ex \sbra{\sum_{C \in \calC'} \prod_{e\in C}\bx_e^2}+\Ex \sbra{\sum_{C \ne C'} \prod_{e\in C}\bx_e\prod_{e\in C'}\bx_e}.
    \end{align*}
    Now the second term on the RHS vanishes since $C$ and $C'$ are distinct and $\bx_e$'s are independent and mean-zero,
    and we upper-bound the first term by $N^\epsilon \cdot 0.1^T$ by boundedness of the $\bx_e$'s.
    Finally, by Chebyshev's inequality, we have that
    \begin{align*}
        \Pr\sbra{P_{\calC'}(\bx_1,\dots,\bx_m)\ge N^{0.6\epsilon}} &\leq \frac{(N^{\epsilon} \cdot 0.1^T)}{(N^{0.6\epsilon})^2} = N^{-0.2\eps} \cdot 0.1^T \le 0.1^T \le \delta^{4 \log(10)} \le \delta.\qedhere
    \end{align*}
\end{proof}

Next, we will use \Cref{lem:not concentrated about 0 in planted case} to show that the polynomial $P_{\calC'}$ corresponding to the $\calC'$ found by \Cref{alg:distinguisher} does not concentrate around 0 in the $\Planted$ case.
Let $\calD_{\text{plant}} \coloneqq\bigl(\tfrac12-\tfrac12\rho\bigr)\Unif([-1,0])+\bigl(\tfrac12+\tfrac12\rho\bigr)\Unif([0,1])$.
Note that $\Ex \calD_\plant = \tfrac12\rho$.

\begin{lemma}\label{lem:not concentrated about 0 in planted case}
    Fix $\epsilon, \delta > 0$, and assume $4 \log(1/\delta) < T < \frac{0.4\eps\log N}{\log(10(\Ckr)^{40}\,\rho^{-4})}$.
    If $\bx_1,\dots,\bx_m \overset{\text{iid}}{\sim} \calD_{\text{plant}}$ then:
    \[
        \Pr_{\bx_1,\ldots,\bx_m} \sbra{ \abs{P_{\calC'}(\bx_1,\dots,\bx_m)} \geq N^{0.6\epsilon}} \ge 1-\delta.
    \]
\end{lemma}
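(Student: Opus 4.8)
The plan is to derive this directly from \Cref{prop:block linear anticoncentration}, applied to $P_{\calC'}$ with $\calD = \calD_{\text{plant}}$ and the partition $\pi$ already fixed. Recall that after \Cref{lem:found good partition} we conditioned on having a partition $\pi$ of $[m]$ into $2T$ equal parts and a subset $\calC' \subseteq \calC$ with $|\calC'| \ge N^\eps \cdot 0.1^T$ such that every $C \in \calC'$ is shattered by $\pi$. Hence $P_{\calC'}(x_1,\ldots,x_m) = \sum_{C \in \calC'} \prod_{e \in C} x_e$ is multilinear, $\pi$-block-linear, of degree $\le 2T$ (each cover has size $\le 2T$), with all nonzero coefficients equal to $1$. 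Moreover $\calD_{\text{plant}}$ has density bounded by $\max\{\tfrac12 - \tfrac12\rho,\ \tfrac12 + \tfrac12\rho\} \le 1$, support in $[-1,1]$, and mean $\tfrac12\rho > 0$, so it satisfies the hypotheses of \Cref{prop:block linear anticoncentration}.

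Applying \Cref{prop:block linear anticoncentration} with its parameter ``$T$'' taken equal to $2T$, and writing $\mu \coloneqq \Ex_{\bx_i \sim \calD_{\text{plant}}}[P_{\calC'}(\bx)] = P_{\calC'}(\tfrac\rho2,\ldots,\tfrac\rho2) = \sum_{C\in\calC'}(\tfrac\rho2)^{|C|}$, one obtains
\[
\Pr_{\bx_1,\ldots,\bx_m}\sbra{ \abs{P_{\calC'}(\bx_1,\ldots,\bx_m)} < \Ckr^{-20T}\,\mu } \le 0.9^{2T}.
\]
It then remains to check two numerical facts: $N^{0.6\eps} \le \Ckr^{-20T}\mu$ and $0.9^{2T} \le \delta$. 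Granting both, the event $\{|P_{\calC'}(\bx)| < N^{0.6\eps}\}$ is contained in $\{|P_{\calC'}(\bx)| < \Ckr^{-20T}\mu\}$ and thus has probability $\le 0.9^{2T} \le \delta$, which is the lemma. For the first fact: every $C\in\calC'$ is nonempty with $|C|\le 2T$, so $(\tfrac\rho2)^{|C|}\ge(\tfrac\rho2)^{2T}$ and $\mu \ge |\calC'|(\tfrac\rho2)^{2T} \ge N^\eps\cdot 0.1^T\cdot(\tfrac\rho2)^{2T} = N^\eps(\rho^2/40)^T$; hence $\Ckr^{-20T}\mu \ge N^\eps\pbra{\rho^2/(40\Ckr^{20})}^T$, and this is $\ge N^{0.6\eps}$ exactly when $T\log(40\Ckr^{20}/\rho^2)\le 0.4\eps\log N$, which follows from the hypothesis $T < \tfrac{0.4\eps\log N}{\log(10\Ckr^{40}\rho^{-4})}$ once we note $40\Ckr^{20}\rho^{-2}\le 10\Ckr^{40}\rho^{-4}$ --- equivalently $4\rho^2\le\Ckr^{20}$, which holds since $\Ckr>1.2$ gives $\Ckr^{20}\ge\Ckr^8>1.2^8>4\ge 4\rho^2$. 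For the second fact: from $T>4\log(1/\delta)$ and $0.81 \le 2^{-1/4}$ we get $0.9^{2T} = 0.81^T \le 2^{-T/4} < 2^{-\log(1/\delta)} = \delta$.

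The argument is essentially bookkeeping: the genuine content --- upgrading ``low-degree $\pi$-block-linear polynomial with mean far from zero'' to anticoncentration around zero --- is entirely packaged in \Cref{prop:block linear anticoncentration}. The only points that need a little care are the factor-of-two translation between the ``$2T$'' (the number of blocks of $\pi$, and the degree bound on $P_{\calC'}$) appearing here and the single parameter ``$T$'' in \Cref{prop:block linear anticoncentration}, and verifying the elementary constant inequality $40\Ckr^{20}\rho^{-2}\le 10\Ckr^{40}\rho^{-4}$, which is what lets the denominator $\log(40\Ckr^{20}/\rho^2)$ that arises naturally from the lower bound on $\mu$ be replaced by the denominator $\log(10\Ckr^{40}\rho^{-4})$ stated in the hypothesis on $T$.
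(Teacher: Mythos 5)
Your proposal is correct and follows essentially the same route as the paper's proof: lower-bound $\mu=\sum_{C\in\calC'}(\rho/2)^{|C|}\ge N^\eps\cdot 0.1^T\cdot(\rho/2)^{2T}$, apply \Cref{prop:block linear anticoncentration} with its parameter set to $2T$, and verify the two numerical comparisons. The only (harmless) difference is that you keep the sharper threshold $\Ckr^{-20T}\mu$ from the direct instantiation of the proposition (the paper writes the weaker $\Ckr^{-40T}$), and your explicit check that $40\Ckr^{20}\rho^{-2}\le 10\Ckr^{40}\rho^{-4}$ is in fact a bit more careful than the paper's ``exactly the assumption on $T$,'' which as written only matches for $\rho\le 1/2$.
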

\begin{proof}
    The polynomial $P_{\calC'}$ is $\pi$-block-linear, where $\pi$ was the partition found by \Cref{alg:distinguisher}. Note that
    \[
    \Ex[P_{\calC'}(\bx_1,\ldots,\bx_m)] = \sum_{C \in \calC'} (\tfrac12\rho)^{|C|} \ge |\calC'| \cdot (\tfrac12\rho)^{2T} \ge N^\eps \cdot 0.1^T \cdot (\tfrac12\rho)^{2T}.
    \]
    By \Cref{prop:block linear anticoncentration}, we have
    \begin{align*}
        \Pr\sbra{\abs{P_{\calC'}(\bx_1,\dots,\bx_m)} < (\Ckr)^{-40T}(\tfrac12 \rho)^{2T} 0.1^T N^{\eps}} \leq 0.9^{2T} \le 0.9^{8\log(1/\delta)} = \delta^{8 \log(1/0.9)} \le \delta.
    \end{align*}
    Finally, we observe that
    \[
        N^{0.6\eps} < (\Ckr)^{-40T} (\tfrac12\rho)^{2T} 0.1^T N^\eps
    \]
    because dividing by $N^\eps$, reciprocating and taking the log of both sides, we get $T \log (\Ckr^{40} (\tfrac12 \rho)^{-2} 10) < 0.4\eps \log N$,
    which is exactly the assumption on $T$.
\end{proof}

Finally, we prove \Cref{lem:distinguisher correct}:

\begin{proof}[Proof of \Cref{lem:distinguisher correct}]
    Recall the definition of the distribution $\plant^\rho_{\calH,z}$:
    We have a \emph{fixed} planted assignment $z_1,\ldots,z_n \in \{\pm1\}$ and hyperedges $e \subseteq [n]$ of size $k$.
    For each hyperedge $e$, we sample $\bb_e = \bm{\eta}_e \prod_{v \in e} z_v$,
    where $(\bm{\eta}_e \in \{\pm1\})_{e \in H}$ are i.i.d.~with $\Ex \bm{\eta}_e = \rho$.

    Then \Cref{alg:distinguisher} samples i.i.d.\ $\bm{\xi}_1,\ldots,\bm{\xi}_m \sim \Unif([0,1])$, sets $\bx_j = \bm{\xi}_j b_j$, and evaluates
    \[
    P_{\calC'}(\bx_1,\ldots,\bx_m) = \sum_{C \in \calC'} \prod_{v \in C} \bx_j.
    \]
    Now defining $\by_j \coloneqq \bm{\eta}_j \bm{\xi}_j$, by \Cref{fact:even cover in planted distribution},
    for every even cover $C$ we have $\prod_{v \in C} \bx_j = \prod_{v \in C} \by_j$, and so
    \[
    P_{\calC'}(\bx_1,\ldots,\bx_m) = \sum_{C \in \calC'} \prod_{v \in C} \by_j = P_{\calC'}(\by_1,\ldots,\by_m).
    \]
    Finally, we examine the distribution of the random variables $\by_j$ in the $\Null$ and $\Planted$ cases.
    In both cases, since all $\bm{\xi}_j$'s and $\bm{\eta}_j$'s are independent, the $\by_j$'s are as well.
    In the \textsc{Null} case, $\bm{\eta}_j$ is uniformly $\pm 1$ and $\bm{\xi}_j$ is uniform in $[0,1]$,
    hence $\by_j$ is mean zero and bounded in $[-1,+1]$.
    In the \textsc{Planted} case,
    each $\bm{\eta}_j$ is supported on $\{\pm1\}$ with $\Ex \bm{\eta}_j = \rho$,
    and each $\bm{\xi}_j$ is again uniform in $[0,1]$.
    Hence $\by_j$ is independently distributed as $\calD_{\text{plant}}$.
    Thus, we can apply \Cref{lem:concentration about 0 null case,lem:not concentrated about 0 in planted case} to the two cases respectively
    and deduce the required conclusions.
\end{proof}
\subsection{Proof of main theorem}\label{sec:proof of main theorem}
Our main theorem, now with more parameters, is as follows:

\begin{theorem}[Main theorem]\label{thm:main formal}
    There exists a universal constant $1<C<\infty$\footnote{This is the constant from \Cref{lem:linear anticoncentration}.} such that the following holds.
    Let $n,k,m,\ell \in \N$ and $\rho,\epsilon,\delta > 0$,
    \[
        \bar{d}(n,m,k,\ell)\geq120 \cdot \epsilon \cdot  \left(10 \cdot C^{40} \cdot \left(\frac{2}{\rho}\right)^2\right)^{10/\epsilon} \cdot \log N.
    \]
    and $4 \log(1/\delta) \le T \coloneqq \frac{0.4\eps\log N}{\log(10(\Ckr)^{40}\,(\rho/2)^{-2})}$.
    Then, given a \emph{worst case $k$-uniform hypergraph} with $n$ variables and $m$ clauses,
    there is a classical algorithm running in $\tilde{O}(n^{(1/2+\epsilon)\ell+k+2} \log(1/\delta))$ time
    which distinguishes right-hand sides $\bb$ sampled from either the $\Null$ or $\rho$-$\Planted$ distributions w.p.\ $1-\delta$.
\end{theorem}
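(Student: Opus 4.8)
The plan is to run the three ingredients developed above --- walk-finding (\Cref{sec:finding walks}), polynomial anticoncentration (\Cref{sec:polynomial}), and the even-cover distinguisher (\Cref{sec:distinguisher}) --- on the level-$\ell$ Kikuchi graph of $\calH$. Concretely, set $G = K_\ell(\calH)$: an edge-colored graph with palette $H$, on $N = \binom n\ell$ vertices, with average degree $\bar d = \bar d_\ell$ (see \Cref{rem:kikuchi params}). First run \Cref{alg:find many walks} on $(G,\chi,\delta,T,\eps)$, with $T$ the (integer part of the) quantity displayed in the theorem statement. By \Cref{lem:find many walks}, with probability $\ge 1-\delta$ this returns a list of $\lceil 10N^\eps\rceil$ nontrivial closed walks of length $\le 2T$ whose $\oddcolors_\chi$ sets are pairwise distinct. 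By \Cref{fact:walks and even covers}, applying $\oddcolors_\chi$ converts this list into a family $\calC$ of $\lceil 10N^\eps\rceil$ pairwise-distinct \emph{nonempty} even covers of $\calH$, each of size $\le 2T$. Then feed $\calC$ together with the input right-hand sides $b$ into \Cref{alg:distinguisher}; by \Cref{lem:distinguisher correct} it returns the correct answer ($\Null$ vs.\ $\Planted$) with probability $\ge 1-\delta$. Since both subroutines actually fail with probability far below $\delta$ (by factors $e^{-N^\eps}$ and $\approx\delta$, respectively, as their proofs reveal), a union bound leaves overall failure probability below $\delta$.

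It remains to verify the hypotheses of \Cref{lem:find many walks,lem:distinguisher correct}. We may assume $\eps,\delta\in(0,0.1)$, since otherwise the running-time bound is vacuous or the success probability can be trivially boosted. For \Cref{lem:distinguisher correct} we need $4\log(1/\delta)\le T<\tfrac{0.4\eps\log N}{\log(10(\Ckr)^{40}(\rho/2)^{-2})}$: the lower bound is a hypothesis of the theorem, and the upper bound holds because $T$ is chosen equal to that quantity (the strict inequality may be relaxed, or $T$ rounded down, at negligible cost). For \Cref{lem:find many walks} the crucial requirement is the average-degree bound $\bar d\ge 300\,N^{4/T}T$. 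Writing $K\coloneqq 10(\Ckr)^{40}(2/\rho)^2$, the choice $T=\tfrac{0.4\eps\log N}{\log K}$ produces the identity $N^{4/T}=K^{10/\eps}$; combined with $\log K\ge\log 10>3$ (so that $300\,T<40\,\eps\log N$) this gives $300\,N^{4/T}T=300\,K^{10/\eps}T\le 40\,\eps\,K^{10/\eps}\log N\le 120\,\eps\,K^{10/\eps}\log N$, which is precisely the lower bound on $\bar d=\bar d_\ell$ assumed in \Cref{thm:main formal}. (We also need $T\ge100$, which holds once $\log N\gtrsim 250\log K/\eps$, a mild condition already implicit in the degree hypothesis being satisfiable by a genuine Kikuchi graph.)

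The last point is the running time, where the only real care is that the algorithm must never write down $G$ (which has $N\approx n^\ell$ vertices). This is not a problem: a vertex of $G$ is a size-$\ell$ set $w\subseteq V$; one samples $w\sim\pi$ by drawing a uniform hyperedge, a uniform balanced bipartition of it, a uniform disjoint $(\ell-k/2)$-set, and a uniform one of the two resulting size-$\ell$ sets; and one random-walk step from $w$ amounts to scanning the $m$ hyperedges for those meeting $w$ in exactly $k/2$ vertices and picking one uniformly --- costing $\tilde O(mk)$ per step (the same scan also yields $\deg(w)$ for the goodness test, and $\bar d_\ell$ is computed in closed form via \Cref{rem:kikuchi params}). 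Hence one run of \Cref{alg:find one walk} costs $\tilde O(\sqrt N\cdot T\cdot mk)$ for its $O(\sqrt N)$ walks, plus $\tilde O(\sqrt N\,\ell)$ to detect an endpoint collision by sorting; \Cref{alg:find many walks} makes $O(N^\eps+\log(1/\delta))$ such runs; and forming the $\oddcolors_\chi$ sets, running the $O(e^{2T})=N^{O(\eps)}$ partition trials inside \Cref{alg:distinguisher}, and the final polynomial evaluation are all lower order. Multiplying out gives total time $\tilde O\bigl(N^{1/2+\eps}\cdot mk\cdot T\cdot\log(1/\delta)\bigr)$; using $N\le n^\ell$, $m\le\binom nk\le n^k$, and $kT\le k\ell\log m\le n^2\cdot\mathrm{polylog}(m)$ yields the claimed $\tilde O\bigl(n^{(1/2+\eps)\ell+k+2}\log(1/\delta)\bigr)$.

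Essentially all the mathematical content lives in \Cref{sec:finding walks,sec:polynomial,sec:distinguisher}, so the only obstacle in this final step is bookkeeping --- specifically two checks. First, that the \emph{single} parameter $T$ can simultaneously meet the opposing demands of \Cref{lem:find many walks} (which wants $T$ small relative to $\log\bar d$) and \Cref{lem:distinguisher correct} (which wants $T$ large), a tension resolved entirely by the identity $N^{4/T}=K^{10/\eps}$ and the degree hypothesis above. Second, that sampling from $\pi$, taking a random-walk step, computing degrees, and detecting collisions on the implicitly-represented Kikuchi graph can each be carried out in time polynomial in $n$, $k$, and $m$, so that the final exponent is exactly $(1/2+\eps)\ell+k+2$.
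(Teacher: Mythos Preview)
Your proof is correct and takes essentially the same route as the paper's: invoke \Cref{lem:find many walks} and then \Cref{lem:distinguisher correct}, verify the former's average-degree hypothesis via the key identity $N^{4/T}=K^{10/\eps}$ (with $K=10\Ckr^{40}(2/\rho)^2$), and bound the runtime by implementing each Kikuchi-graph primitive --- sampling from $\pi$, a random-walk step, degree computation, endpoint-collision detection --- in $\mathrm{poly}(n,m)$ time per call. Your estimate $e^{2T}=N^{O(\eps)}$ for the number of partition trials in \Cref{alg:distinguisher} is in fact tighter than the paper's stated $O(N^{2.4})$, and is what is actually needed to keep that step lower order in the final runtime.
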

Our overall main theorem, \Cref{thm:main informal}, follows from this theorem by using the estimate for $\bar{d}$ in \Cref{rem:kikuchi params} and solving for $\ell$. Then we get:
\begin{align*}
    \ell &\geq \pbra{120\epsilon\pbra{40\cdot C^{40}\cdot \frac1{\rho^2}}^{10/\epsilon}}^{\frac2{k-2}} \pbra{\rho^2\binom{k}{k/2}}^{-\frac2{k-2}}(1/\delta)^{\frac2{k-2}}
\end{align*}
We set $ \epsilon = \frac{10 \log (1/\rho)}{\log k}$ and $\delta = N^{-\Omega(\frac{\log(1/\rho)}{\log k})} = n^{-\Omega(\frac{\ell \log(1/\rho)}{\log k})}$ in \Cref{thm:main formal} to recover \Cref{thm:main informal}.

In turn, \Cref{thm:main formal} follows immediately from \Cref{lem:find many walks,lem:distinguisher correct}:

\begin{proof}[Proof of \Cref{thm:main formal}]
    Noting that $N^{1/\log N} = 2$, we can invoke \Cref{lem:find many walks} so long as the average degree of the Kikuchi graph $K_\ell(\calH)$ is least \[
    300N^{4/T} T = 300 \cdot 2^{\pbra{ 4 \log(10(\Ckr)^{40}\,(\rho/2)^{-2})/0.4\eps }} \cdot \frac{0.4\eps\log N}{\log(10(\Ckr)^{40}\,(\rho/2)^{-2})}\]
    for sufficiently small $\rho > 0$.
    Given this, by \Cref{lem:find many walks}, \Cref{alg:find many walks} (with small constant failure parameter) will output
    $10N^\eps$ nontrivial closed walks in the Kikuchi graph of length $\le 2T$ such that
    for each walk $W$, $\oddcolors_\chi(W)$ is distinct.
    Correspondingly, by \Cref{fact:walks and even covers}, we get $10N^\eps$ distinct nonempty even covers of $\calH$,
    each of size at most $2T$.
    Finally, we apply \Cref{lem:distinguisher correct} to get a distinguisher based on these even covers. It remains to analyze the runtime of the various algorithms.

    \paragraph{Time to find even covers.}
    We represent vertices in the Kikuchi graph $K_\ell(\calH)$
    as length-$\ell$ sorted strings of indices in $[n]$.
    Thus, we can take the symmetric difference of two vertices $w_1$ and $w_2$ in time $\tilde{O}(\ell)$,
    and we can calculate the degree of a vertex $w$ or sample a random neighbor of $w$ (along with the edge-color, as an index in $[m]$, of the corresponding edge) in time $\tilde{O}(m \ell)$ (by enumerating all hyperedges in $H$).

    To run \Cref{alg:find one walk} given a start vertex $v_0$, we first
    sample $L$ walks from $v_0$ of length $T$ and check if each is good,
    which takes time $\tilde{O}(LTm\ell)$ by the previous paragraph.
    Then, we look for collisions among the good walks by
    inserting their endpoints into a balanced binary search tree, which takes time $\tilde{O}(L \ell \log L)$.
    If we find a collision, we can calculate $\oddcolors_\chi(W)$ for the resulting walk $W$,
    encoded as a sorted list of indices in $[m]$ of length $\le 2T$,
    in time $\tilde{O}(T \log T)$.

    Next, in \Cref{alg:find many walks},
    \Cref{alg:find one walk} is called with a start vertex sampled from the Kikuchi graph's stationary distribution.
    To sample efficiently, we first sample a uniformly random hyperedge $e \in H$,
    then sample a random subset $e' \subseteq e$ with $|e'| = k/2$,
    then finally sample a random subset $f \subseteq [n] \setminus (e \setminus e')$
    with $|f| = \ell-k/2$ and output $e' \cup f$.
    This takes time $\tilde{O}(\ell)$.
    If \Cref{alg:find one walk} succeeds in outputting a walk $W$,
    we then check if $\oddcolors_\chi(W)$ is nonempty (i.e., $W$ gives a nonempty even cover).
    If so, we check $\oddcolors_\chi(W)$ against a balanced binary search tree consisting of even covers we have already found.
    We do this $R$ times and so the checks take time $\tilde{O}(TR \log R)$.

    Altogether, using $N \le n^\ell$ and $m \le n^k$, we have $L = O(n^{\ell/2})$, $T = \tilde{O}(n)$, and $R = O(n^{\epsilon \ell} + \log(1/\delta))$.
    Hence the runtime for \Cref{alg:find one walk} is $\tilde{O}(n^{\ell/2+k+2})$,
    and so the runtime for \Cref{alg:find many walks} is $\tilde{O}((n^{(1/2+\epsilon)\ell+k+2}) \cdot \log(1/\delta)) $.
    
    \paragraph{Time to use even covers.}
    After \Cref{alg:find many walks} completes, we move on to \Cref{alg:distinguisher}.
    In each iteration of \Cref{alg:distinguisher}'s loop,
    we can sample an equipartition $\bm{\pi}$ of $H$ by sampling a real number in the unit interval for each $e \in H$
    and then sorting $H$ by these numbers,
    which takes $\tilde{O}(m)$ time.
    Then, we iterate through the even covers in $\calC$
    and check whether each is shattered by marking the partition's parts.
    Across all even covers, this takes time $O(|\calC| T)$.
    The loop in \Cref{alg:distinguisher} runs $S = O(e^{2T}) = O(2^{(1.6 \log e) \log N}) = O(N^{2.4})$ times.
    Finally, once we break the loop, the final step (of calculating $P_{\calC'}$) takes
    $\tilde{O}(|\calC'| T) = \tilde{O}(N^\epsilon \cdot 0.1^T \cdot T) = \tilde{O}(N^\epsilon)$ time.
\end{proof}

\section*{Acknowledgments}
We thank Yang Liu and Mehtaab Sawhney for helpful discussions on polynomial anticoncentration.
We thank Omar Alrabiah, Ryan Babbush, Venkat Guruswami, Pravesh Kothari, Peter Manohar, and Sidhanth Mohanty for general discussion.
We thank Matan Shtepel and Junzhao Yang for their contributions to early stages of this work.

\bibliographystyle{alpha}
\bibliography{references}

\end{document}